\title{Adaptive Incentive Design with Learning Agents
}
\author{Chinmay Maheshwari\thanks{CM, KK and SS are with EECS,UC Berkeley, CA, United States.} , Kshitij Kulkarni$^\ast$, Manxi Wu\thanks{MW is with ORIE, Cornell University, NY, United States.
}, and Shankar Sastry$^\ast$
}
\begin{document}
\maketitle

\begin{abstract}
   {We propose an adaptive incentive mechanism that learns the optimal incentives in environments where players continuously update their strategies.} Our mechanism updates incentives based on each player's externality, defined as the difference between the player's marginal cost and the operator's marginal cost at each time step.  
The proposed mechanism updates the incentives on a slower timescale compared to the players' learning dynamics, resulting in a two-timescale coupled dynamical system. Notably, this mechanism is agnostic to the specific learning dynamics used by players to update their strategies.  
We show that any fixed point of this adaptive incentive mechanism corresponds to the optimal incentive mechanism, ensuring that the Nash equilibrium coincides with the socially optimal strategy.  
 Additionally, {we provide sufficient conditions under which the adaptive mechanism converges to a fixed point.} Our results apply to both atomic and non-atomic games.
To demonstrate the effectiveness of our proposed mechanism, we verify the convergence conditions in two practically relevant classes of games: atomic aggregative games and non-atomic routing games.
\end{abstract}
\newpage 
\label{sec:introduction}
\section{Introduction}
Incentive mechanisms play a crucial role in many societal systems, where outcomes are governed by the interactions of a large number of self-interested users (or algorithms acting on their behalf).  
The outcome of such strategic interactions, characterized by the Nash equilibrium, is often suboptimal because individual players typically do not account for the \textit{externality} of their actions (i.e., how their actions affect the costs of others) when minimizing their own costs.  
An important way to address this suboptimality is to provide players with incentives that align their individual goal of cost minimization with the goal of minimizing the total cost of the societal system \cite{levin1985taxation, bacsar1984affine}.  
{However, this problem becomes more challenging as the system operator often needs to account for the learning behavior of players, who repeatedly update their strategies in response to the incentive mechanism, especially when the physical system experiences a random shock and players are learning to reach a new equilibrium \cite{barrera2014dynamic,como2021distributed,maheshwari2021dynamic}.  
}

To address this challenge, we propose an adaptive incentive mechanism that adjusts incentives based on the strategies of players, who repeatedly update their strategies as part of a learning process.  
This results in a \textit{coupled} dynamical system that comprises both incentive and strategy updates.

{Our proposed incentive mechanism has four key features. Firstly, our framework applies to both atomic and non-atomic games.} Secondly, the incentive update incorporates the \textit{externality} generated by the players' current strategies, quantified as the difference between their own marginal cost and the marginal cost for the entire system. 
{Thirdly, the incentive mechanism is agnostic to the strategy update dynamics used by players and requires only oracle access to either the gradient (in atomic games) or the value (in non-atomic games) of the cost function, given the current strategy, to evaluate the externality.}
 Finally, the incentive update occurs on a \textit{slower timescale} compared to the players' strategy updates.  
This slower evolution of incentives is a desirable characteristic because frequent incentive updates often hinder players' participation.

We prove that any fixed point of the coupled incentive and strategy updates leads to a socially optimal outcome. Specifically, at any fixed point, the incentive provided to each player equals the externality of the equilibrium strategy, ensuring that the resulting Nash equilibrium is socially optimal (Proposition \ref{prop: Alignment}).  
Additionally, we establish sufficient conditions on the underlying game that ensure the fixed point -- coinciding with the socially optimal incentive mechanism -- is unique (Proposition \ref{prop: Alignment}).

{We characterize sufficient conditions for (both local and global) convergence of the coupled dynamical system  to the fixed points (Proposition \ref{thm: ConvergenceFin}).} Since the convergent strategy profile and incentive mechanism correspond to a socially optimal outcome, these sufficient conditions ensure that the coupled dynamical system  induce a socially optimal outcome in the long run. Our analysis builds on the theory of two-timescale dynamical systems \cite{borkar1997stochastic}.  
Due to the timescale separation between the strategy and incentive updates, we can decouple the convergence of the strategy update from that of the incentive update. 
First, the convergence of the strategy update, which evolves on the faster timescale, is analyzed by treating the incentive mechanism, which evolves on the slower timescale, as static.  
In our work, we offload this analysis to the extensive literature on learning in games (e.g., \cite{mertikopoulos2024unified, leslie2006generalised, benaim2005stochastic, sandholm2010population, swenson2018best}).  
Second, the convergence of the incentive mechanism update is examined through the corresponding continuous-time dynamical system, evaluated at the fixed point of the strategy update (i.e., the Nash equilibrium).  

To demonstrate the usefulness of the adaptive incentive mechanism, we apply it to two practically relevant classes of games: \textit{(i)} atomic aggregative games and \textit{(ii)} non-atomic routing games.  
In atomic aggregative games, each player's cost function depends on their own strategy as well as the aggregate strategies of their opponents.  
This aggregation is performed through a linear combination of neighboring players' strategies, with weights characterized by a \textit{network matrix}.  
Our proposed incentive mechanism enables the system operator to adaptively adjust incentives based on each player's externality on their neighbors while players learn their equilibrium strategies.  
When applied to this setting, our results provide sufficient conditions on the network matrix to ensure {global} convergence to a socially optimal outcome.

Furthermore, in non-atomic routing games, players (travelers) make routing decisions in a congested network with multiple origin-destination pairs.  
The system operator imposes incentives in the form of toll prices on network edges.  
Our proposed incentive mechanism is adaptively updated based solely on the observed edge flows and the gradient of the edge latency functions.  
Players can follow various strategy update rules that lead to the equilibrium of the routing game.  
We show that the adaptive incentive mechanism {locally converges} to the toll prices that minimize total congestion.  

The article is organized as follows: In Sec. \ref{sec: Model}, we describe the setup for both atomic and non-atomic games and introduce the joint strategy and incentive update framework. Sec. \ref{sec: Results} presents our results on the fixed points being socially optimal, and sufficient conditions for local and global convergence in general games. In Sec. \ref{sec: Applications}, we apply these convergence results to atomic aggregative games (Sec. \ref{sec:aggregate}) and non-atomic routing games (Sec. \ref{sec:routing}). Finally, we conclude in Sec. \ref{sec: Conclusion}. 

\subsection{Related Works} 
\textbf{Two-timescale Learning Dynamics:} Learning dynamics in which incentives are updated on a slower timescale than players' strategies have been studied in \cite{mojica2022stackelberg, chen2023high, poveda2017class, ochoa2022high, liu2021inducing, li2023inducing, alpcan2009nash, alpcan2009control}. Specifically, \cite{mojica2022stackelberg} examines Stackelberg games with a single leader and a population of followers, where the leader employs gradient-based updates while the followers adjust their strategies using replicator dynamics. Moreover, \cite{chen2023high, poveda2017class} focus on incentive design in affine congestion games, where incentives are updated using a distributed version of gradient descent. Similarly, \cite{ochoa2022high} studies incentive design for traffic control on a single highway through gradient-based incentive updates.
Additionally, \cite{liu2021inducing, li2023inducing} propose a two-timescale discrete-time learning dynamic in which players update their strategies using mirror descent, while the system operator adjusts the incentive parameter via a gradient-based method.
{Furthermore, \cite{alpcan2009nash, alpcan2009control} study the  convergence of gradient-based incentive updates when the system operator has access to the gradient of the equilibrium strategy with respect to the incentive.}

{All of these works adopt gradient-based incentive updates. In such approaches, ensuring that the fixed point is socially optimal relies on the assumption that the equilibrium social cost is a convex function of the incentive parameter \cite{mojica2022stackelberg, chen2023high, poveda2017class, ochoa2022high, liu2021inducing, li2023inducing} or that the gradient of the equilibrium strategy with respect to the incentive is non-singular \cite{alpcan2009control, alpcan2009nash}. However, these assumptions are restrictive and often do not hold, even in simple games. In Appendix \ref{sec: AppendixCounterexample}, we provide a counterexample—a two-link routing game—in which both the convexity and non-singular gradient assumptions fail to hold.}

\textbf{Single-timescale Learning Dynamics:} The problem of \textit{steering} non-cooperative players toward a desired Nash equilibrium using an incentive update that operates on the same timescale as strategy updates has been studied in \cite{shakarami2022dynamic, shakarami2022steering, zhang2023steering, ratliff2020adaptive}. Specifically, \cite{shakarami2022dynamic} examines such updates in the setting of quadratic aggregative games. In \cite{shakarami2022steering}, 
the authors consider a scenario where players' costs depend only on their own actions and a price signal provided by an operator. 
In \cite{zhang2023steering}, the authors address the problem of guiding no-regret learning players toward an optimal equilibrium; however, their approach requires solving an optimization problem at each time step to compute the incentive mechanism.
The work in \cite{ratliff2020adaptive} explores incentive design while simultaneously learning players' cost functions. The authors assume that both cost functions and incentive policies are linearly parameterized, with incentive updates relying on knowledge of the players’ strategy update rules rather than solely on their current strategies, as in our setting.

\textbf{Learning in Stackelberg Games:} Our work is also related to the literature on learning in Stackelberg games, where the planner often has limited information about the interactions between players and must design an optimal mechanism by dynamically incorporating feedback from players' responses (see, e.g., \cite{blum2014learning, letchford2009learning, peng2019learning, bai2021sample, cui2024learning, jalota2022online, maheshwari2023follower}). This line of research typically imposes structural assumptions on the game among followers, such as a finite action space or linearly parameterized utility functions \cite{blum2014learning, letchford2009learning, peng2019learning, bai2021sample, cui2024learning, jalota2022online}. Alternatively, some works, such as \cite{maheshwari2023follower}, focus on ensuring convergence only to a locally optimal solution.

{Compared to the preceding three lines of research, we introduce a novel externality-based adaptive incentive design that applies to both atomic and nonatomic games, accommodates continuous action spaces, and allows for nonlinear utility functions. Unlike gradient-based incentive updates, externality-based updates ensure that any fixed point of the dynamics is socially optimal without requiring the equilibrium social cost function to be convex in the incentive vector {or the gradient of equilibrium strategy with respect to the incentive to be non-singular}. Furthermore, our incentive update is agnostic to the players' learning dynamics and relies only on oracle access to zeroth-order or first-order information about players' costs given their current strategies.}

\subsection*{Notations}
Given a function \(f:\R^{n}\rightarrow \R\), we use \(\nabla_{x_i} f(x)\) to denote the partial derivative of $f$ with respect to $x_i$ for any \(i\in\{1,2,...,n\},\) and \(\nabla f(x)\) to denote the gradient of the function. 
For any set \(A\), we use \(\textsf{conv}(A)\) to denote its convex hull.
For any set \(X\subseteq\mathbb{R}^n\), a function \(f: X\rightarrow\mathbb{R}\) is \textit{Lipschitz} if there exists a positive scalar \(L\) such that \(\|f(x)-f(x')\|\leq L\|x-x'\|,\) for every \(x,x'\in X\). For any vector $x \in X$ and any positive scalar $r>0$, the set $\mathcal{B}_{r}(x) = \{x' \in X | \|x'-x\|<r\}$ denotes the $r$-radius neighborhood of the vector $x$. For any set \(X\), we define \(\textsf{boundary}(X)\) and \(\textsf{int}(X)\) to be the boundary and interior of set \(X\), respectively. Finally, for any function \(f(\cdot)\), we denote the domain of the function by \(\textsf{dom}(f)\). For any vector \(x\in \mathbb{R}^n\), we define \(\textsf{diag}(x)\in \mathbb{R}^{n\times n}\) to be a diagonal matrix with diagonal entries corresponding to \(x\). 

\section{Model}\label{sec: Model}
We introduce both atomic and non-atomic static games in Sec. \ref{subsec:static}. In Sec. \ref{ssec: DynamicsModel}, we present our proposed adaptive incentive design approach.
\subsection{Static Games}\label{subsec:static}
\subsubsection{Atomic Games}
Consider a game $\finGame$ with a finite set of players $\playerSet$. 
The strategy of each player $i \in \playerSet$ is denoted by $\strategyFin_i \in \strategySetFin_i$, 
where $\strategySetFin_i$ is a {non-empty, closed interval in $\mathbb{R}$}. 
The joint strategy profile of all players is given by $\strategyFin = (\strategyFin_i)_{i \in \playerSet}$, 
and the set of all joint strategy profiles is $\strategySetFin \defas \prod_{i \in \playerSet} \strategySetFin_i$. 
The cost function of each player $i \in \playerSet$ is represented as $\ell_i: \mathbb{R}^{|\playerSet|} \to \mathbb{R}$.

A \textit{system operator} designs incentives by setting a payment $p_i x_i \in \mathbb{R}$ for each player $i$, which is linear in their strategy $x_i$. Here, $\incentiveFin_i \in \mathbb{R}$ represents the marginal payment for every unit increase in the strategy of player $i$. The value of $\incentiveFin_i x_i$ can be either negative or positive, representing a marginal subsidy or a marginal tax, respectively. 
Given the incentive mechanism $p = (p_i)_{i \in I}$, the total cost for player \(i \in \playerSet\) is:
\begin{align}\label{eq: TotCost}
    \costFin_i(\strategyFin,\incentiveFin) = \lossFin_i(\strategyFin) + \incentiveFin_i \strategyFin_i, \quad \forall \ \strategyFin \in \strategySetFin.
\end{align}

A strategy profile $\xEqFin{}(p) \in \strategySetFin$ is a
\textit{Nash equilibrium} in the atomic game $\finGame$ with the incentive mechanism $\incentiveFin$ if  
\begin{align*}
    \costFin_i(\xEqFin{i}(p),\xEqFin{-i}(p),\incentiveFin) \leq \costFin_i(\strategyFin_i,\xEqFin{-i}(p),\incentiveFin), ~\forall \ \strategyFin_i \in \strategySetFin_i,~ \forall i \in \playerSet. 
\end{align*}
A strategy profile $\socOptFin \in \strategySetFin$ is \textit{socially optimal} if it minimizes the social cost function \(\socCostFin: \mathbb{R}^{|\playerSet|} \to \mathbb{R}\) over \(\strategySetFin\).

{
\smallskip 
\begin{assm}\label{assm: SocCostAtomic}
For any \( p \in \mathbb{R}^{|\playerSet|} \), the Nash equilibrium \( x^\ast(p) \) is unique and Lipschitz continuous in \( p \).  
Moreover, the social cost function \( \socCostFin(x) \) is continuously differentiable, has a Lipschitz gradient, and is strictly convex in \( x \). 
\end{assm}

Assumption \ref{assm: SocCostAtomic} is widely adopted in the literature to study incentive design in atomic games (e.g., \cite{li2024socially, liu2021inducing, li2023inducing, shakarami2022dynamic}), either directly or through other conditions that guarantee this\footnote{Uniqueness and Lipschitz continuity of \( x^\ast(p) \) hold if, for every \( i \in \playerSet \) and \( \strategyFin_{-i} = (\strategyFin_j)_{j \in \playerSet \setminus \{i\}} \), the cost function \( \ell_i(\strategyFin_i, \strategyFin_{-i}) \) is strongly convex in \( \strategyFin_i \) and \( \ell_i(\cdot) \) is continuously differentiable with a Lipschitz gradient \cite{dafermos1988sensitivity}.}. 
}
\subsubsection{Non-atomic Games}\label{ssec: PopGame} 
Consider a game \( \popGame \) with a finite set of player populations \( \pop \). Each population \( i \in \pop \) is comprised of a continuum set of (infinitesimal) players with mass \( \massPop_i >0 \). Every (infinitesimal) player in any population can choose an action in a finite set \( \stratPop_i \). The strategy distribution of population \( i \in \pop \) is \( \strategyPop_i = (\strategyPop_i^j)_{j \in \stratPop_i} \), where \( \strategyPop_i^j \) is the mass of players in population \( i \) who choose action \( j \in \stratPop_i \). 
Then, the set of all strategy distributions of population $i$ is $\strategySetPop_i =\left\{\strategyPop_i\in \R^{|\stratPop_i|}| \sum_{j \in \stratPop_i} \strategyPop_i^j = \massPop_i, ~ \strategyPop_i^j \geq 0, \forall j \in \stratPop_i\right\}$. The strategy distribution of all populations is given by $\strategyPop=(\strategyPop_i)_{i \in \pop} \in \strategySetPop = \prod_{i \in \pop} \strategySetPop_i$. We define \(\stratPop=\prod_{i\in\pop}\stratPop_i\). Given a strategy distribution $\strategyPop\in \strategySetPop$, the cost of players in population $i \in \pop$ for choosing action $j \in \stratPop_i$ is $\lossPop_{i}^{j}(\strategyPop)$. We denote $\lossPop_i(\strategyPop) = (\lossPop_i^j(\strategyPop))_{j \in \stratPop_i}$ as the vector of costs for each $i \in \pop$.

A system operator designs incentives by setting a payment \(\incentivePop{i}{j}\) 
for players in population \(i\) who choose action \(j\in\stratPop_i\). 
Consequently, given the incentive mechanism \(\incentivePop{}{} = ({\incentivePop{i}{j}})_{j\in S_i, i\in\pop}\), 
the total cost experienced by any player in population \(i \in \pop\) who chooses action \(j \in \stratPop_i\) is
\begin{align}\label{eq: TotCostPop}
    \costPop{i}{j}(\strategyPop,\incentivePop{}{}) = \lossPop_i^{j}(\strategyPop)+\incentivePop{i}{j}, \quad \forall \  \strategyPop\in \strategySetPop.
\end{align}

A strategy distribution $\xEqPop{}{}(\ptilde) \in \strategySetPop$ 
is a Nash equilibrium in the non-atomic game $\popGame$ with
$\incentivePop{}{}$ if 
\begin{equation}\label{eq: NEPop}
\begin{aligned}
    \xEqPop{i}{j}(\ptilde) &> 0, \quad \Rightarrow \costPop{i}{j}(\xEqPop{}{}(\ptilde),\incentivePop{}{})\leq \costPop{i}{j'}(\xEqPop{}{}(\ptilde),\incentivePop{}{}), \\
    & \hspace{2cm} ~\forall j, j'\in \stratPop_i, ~\forall i \in \pop.
\end{aligned}
\end{equation}
A strategy distribution $\socOptPop \in \strategySetPop$ is socially optimal if $\socOptPop$ minimizes a social cost function $\socCostPop: \R^{|\stratPop|} \to \mathbb{R}$. 

\smallskip

{
\begin{assm}\label{assm: MonotonicCostPop}
For any \(p\in \R^{|\tilde{\playerSet}|}\), the Nash equilibrium \(\tilde{x}^\ast(\tilde{p})\) is unique and Lipschitz continuous in \(\tilde{p}\). Moreover, \(\socCostPop(\tilde{x})\) is continuously differentiable and strictly convex.
\end{assm}

Assumption \ref{assm: MonotonicCostPop} is widely adopted in the literature on incentive design for non-atomic games (e.g., \cite{liu2021inducing, ochoa2022high, mojica2022stackelberg}), either directly or through other conditions that guarantee this\footnote{Uniqueness and Lipschitz continuity of \(x^\ast(p)\) hold if \(\tilde{\ell}(\cdot)\) is Lipschitz continuous and strongly monotone \cite{sandholm2010population}. That is, there exists \(\rho > 0\) such that  \(
    \langle \tilde{\ell}(\tilde{x}) - \tilde{\ell}(\tilde{x}'),\tilde{x}-\tilde{x}' \rangle \geq \rho\|\tilde{x}-\tilde{x}'\|^2\) for every \(\tilde{x}\neq \tilde{x}'\in \tilde{X}.\)
}. 
}

\subsection{Coupled Strategy and Incentive Update}\label{ssec: DynamicsModel}
We consider a coupled dynamical system that jointly updates players' strategies and the incentive mechanism with discrete time-steps \(k\in \mathbb{N}\). At step \(k\), the strategy profile in the atomic game \(\G\) (resp. non-atomic game \(\Gtilde\)) is \(\xk{k}= (\xki{k})_{i \in \playerSet}\) (resp. \(\xtildek{k}= (\xtildeki{k})_{i \in \pop}\)), where \(\xki{k}\) (resp. \(\xtildeki{k}\)) is the strategy of player \(i\) (population \(i\)), and the incentive mechanism is \(\pk{k}= (\pki{k})_{i \in \playerSet}\) (resp. \(\ptildek{k} = (\ptildeki{k}^j)_{j \in S_i, i \in \pop}\)). The strategy updates and the incentive updates are presented below:

\medskip 
\noindent\textbf{Strategy update.} 
\begin{align}
   \xk{k+1}&= (1-\stepx{k})\xk{k}+\stepx{k}\f(\xk{k},\pk{k}),\tag{$x$-update}\label{eq: FinUpdateX}\\
    \xtildek{k+1} &= (1-\stepx{k})\xtildek{k}+\stepx{k}\tilde{f}(\xtildek{k},\ptildek{k}).\tag{$\xtilde$-update}\label{eq: PopUpdateX}
\end{align}
In each step $k+1$, the updated strategy is a linear combination of the strategy in stage $k$ (i.e. $\xk{k}$ in $\G$ and $\xtildek{k}$ in $\Gtilde$), and a new strategy $\f(\xk{k},\pk{k}) \in \X$ in $\G$ (resp. $\ftilde(\xtildek{k},\ptildek{k}) \in \Xtilde$ in $\Gtilde$) that depends on the previous strategy and the incentive mechanism. The relative weight in the linear combination is determined by the step-size $\stepx{k} \in (0, 1)$. 
We require that for any $p$ (resp. $\tilde{p}$), the fixed point associated with update \eqref{eq: FinUpdateX} (resp. \eqref{eq: PopUpdateX}) is a Nash equilibrium, i.e. 
\begin{equation}\label{eq: FixedPointUpdates}
\begin{aligned}
    x^\ast(p)&=\{x: f(x,p)=x\},\quad \forall  p\in \R^{|\mathcal{I}|}, \\
    \tilde{x}^\ast(\tilde{p}) &= \{\tilde{x}: \tilde{f}(\tilde{x},\tilde{p}) = \tilde{x}\}, \quad \forall \ \tilde{p}\in \R^{|\pop|}.
\end{aligned}
\end{equation}  
{We shall impose additional assumptions on \(f(\cdot)\) and \(\tilde{f}(\cdot)\) when studying the convergence of strategy and incentive updates in the next section.
}
Some examples of commonly studied learning dynamics \eqref{eq: FinUpdateX} and \eqref{eq: PopUpdateX} include: 
\begin{enumerate}
    \item \textit{Equilibrium update (\cite{cui2024learning, jalota2022online}):}
        The strategy update incorporates a Nash equilibrium strategy profile with respect to the incentive mechanism in step $k$:
        \begin{align}\label{eq: EqUpdate}
            \f(\strategyFin_k,\incentiveFin_k) = \xEqFin{}(\incentiveFin_k), \quad \text{and}~\tilde{f}(\strategyPop_k,\incentivePop{k}{})=\xEqPop{}{}(\incentivePop{k}{}).
        \end{align}
        \item \textit{Best response update (\cite{fudenberg1998theory,sandholm2010population}):} The strategy update incorporates a best response strategy with respect to the strategy and the incentive mechanism in step $k$:
        \begin{equation}\label{eq: BestResponse}
            \begin{aligned}
                \f_i(\xk{k},\pk{k}) 
     &= \underset{y_i\in\strategySetFin_i}{\arg\min} \ \costFin_i(y_i,x_{-i,k},\pk{k}), \\ 
    \tilde{f}_i(\xtilde_{k},\ptildek{k})&= \underset{\tilde{y}_i\in\strategySetPop_i}{\arg\min}  \  \tilde{y}_i^\top \costPop{i}{}(\tilde{x}_{k},\ptildek{k}),
            \end{aligned}
        \end{equation}
       {{} where the first equation is the best response update in atomic games \cite{fudenberg1998theory}, and the second is the best response update in non-atomic games \cite{sandholm2010population}.}
        \item {\textit{Gradient-based update (\cite{littlestone1994weighted,laraki2013higher,sandholm2010population})}: Gradient-based strategy update commonly studied in literature takes the following form: 
        }
        \begin{equation}\label{eq: GradientBased}
            \begin{aligned}
                \f_i(\xk{k},\pk{k}) &= \underset{y_i\in X_i}{\arg\max} \ z_i(x_k,p_k) y_i - h(y_i), \\ 
    \tilde{f}_i(\xtilde_{k},\ptildek{k}) &=\underset{\tilde{y}_i\in \tilde{X}_i}{\arg\max} \ {{} \tilde{y}_i^\top}\tilde{c}_i(\tilde{x}_k,\tilde{p}_k) - \tilde{h}(\tilde{y}_i),
            \end{aligned}
        \end{equation}
        where \(z_{i}(x_k,p_k)= x_k - \eta \Der_{x_i}c_i(x_k,p_k)\), \(\eta\) is step size, and \(h(\cdot),\tilde{h}(\cdot)\) are regularizers. If \(h(\cdot)\) is a quadratic function, then the update becomes projected gradient descent \cite{mertikopoulos2019learning}. Furthermore, if \(\tilde{h}(\cdot)\) is the entropy function, then the update becomes a perturbed best-response update\cite{sandholm2010population}.
        \end{enumerate}

\medskip 
\noindent\textbf{Incentive update.}
\begin{align}
\pk{k+1} &= (1-\stepp{k})\pk{k}+\stepp{k} \mdFin(\xk{k}), \tag{$p$-update}\label{eq: FinUpdateP}\\
   \ptildek{k+1} &= (1-\stepp{k})\ptildek{k}+\stepp{k} \mdPop(\xtildek{k}), \tag{$\ptilde$-update}\label{eq: PopUpdateP}
\end{align}
where $e(x) = (e_i(x))_{i \in \playerSet}$, $\tilde{e}(\tilde{x}) = (\tilde{e}_i^j(\tilde{x}))_{j \in S_i, i \in \tilde{\playerSet}}$, and 
\begin{subequations}
\begin{align}
    \externality_i(\strategyFin) &= \Der_{\strategyFin_i} \socCostFin(\strategyFin) -  \Der_{\strategyFin_i} \ell_i(\strategyFin), \quad \forall i \in \playerSet, \label{eq: ExterFin} \\
     \externalityPop{i}{j}(\strategyPop) &= \Der_{\strategyPop_i^{j}} \socCostPop(\strategyPop) - \lossPop_{i}^{j}(\strategyPop), \quad \forall j \in \tilde{S}_i, \quad \forall i \in \pop.\label{eq: ExterPop}
\end{align}
\end{subequations}
In \eqref{eq: ExterFin}, $\externality_i(\strategyFin)$ represents the difference between the marginal social cost and the marginal cost of player \(i\) given \(\strategyFin\). Similarly, $\externalityPop{i}{j}(\strategyPop)$ denotes the difference between the marginal social cost and the cost experienced by players in population \(i\) who choose action \(j\). We refer to $\externality_i(\strategyFin)$ and $\tilde{e}_i(\tilde{x})= (\externalityPop{i}{j}(\strategyPop))_{j\in S_i}$ as the externalities of players \(i\) and population \(i\), respectively, since they capture the difference in the impact of their strategies on the social cost and individual cost.

 The updates \eqref{eq: FinUpdateP}-\eqref{eq: PopUpdateP} modify the incentives on the basis of the externality caused by the players. In each step $k+1$, the updated incentive mechanism is a linear combination of the incentive mechanism in step $k$ (i.e. $\pk{k}$ in $\G$ and $\ptildek{k}$ in $\Gtilde$), and the externality (i.e. $\mdFin(\xk{k})$ in $\G$ and $\mdPop(\xtildek{k})$ in $\Gtilde$) given the strategy in step $k$. The relative weight in the linear combination is determined by the step size $\stepp{k} \in (0, 1)$. 

In summary, the joint evolution of strategy and incentive mechanism $(\xk{k}, \pk{k})_{k=1}^{\infty}$ (resp. $(\xtildek{k}, \ptildek{k})_{k=1}^{\infty}$) in the atomic game $\G$ (resp. non-atomic game $\Gtilde$) is governed by the learning dynamics \eqref{eq: FinUpdateX}--\eqref{eq: FinUpdateP} (resp. \eqref{eq: PopUpdateX}--\eqref{eq: PopUpdateP}).  The step-sizes $(\stepx{k})_{k=1}^{\infty}$ and $(\stepp{k})_{k=1}^{\infty}$ determine the speed of strategy updates and incentive updates, respectively. 

\medskip 
{{} \noindent\textbf{Information environment of incentive update.}
The incentive updates in \eqref{eq: FinUpdateP} and \eqref{eq: PopUpdateP} are based on the externalities created by players' strategies. In the absence of additional problem structure, computing externality requires oracle access to  the gradient of players' costs (first-order information) in atomic games (cf. \eqref{eq: ExterFin}) or the players' cost functions (zeroth-order information) in non-atomic games (cf. \eqref{eq: ExterPop}), both evaluated at the current strategy profile. This information requirement is less demanding compared to the gradient-based incentive updates adopted in previous literature \cite{mojica2022stackelberg, chen2023high, poveda2017class, ochoa2022high, liu2021inducing, li2023inducing, alpcan2009nash, alpcan2009control}, where estimating the gradient of the social cost function with respect to the incentive vector often requires the knowledge of the game Jacobian (i.e., second-order information) \footnote{{{} For instance, the gradient based incentive update of atomic games studied in
  \cite{liu2021inducing}
  takes the following form \(
      p_{k+1} = p_k - \beta_k \nabla_p x^\ast(p_k)^\top \nabla_x\Phi(x^\ast(p_k)),\)
  which is a gradient descent update on the function \(\Phi(x^\ast(p))\). The authors estimate \(\nabla_p x^\ast(p_k)^{\top}\) with \(-\nabla_{p}J(x_k;p_k)^\top(\nabla_x J(x_k;p_k))^{-1} \), where \(J(x;p) = (\partial c_i(x,p)/\partial x_i)_{i\in \mathcal{I}}\) is the \textit{game Jacobian}. Therefore, these updates require second order information about the cost function of players. Meanwhile, our approach of externality based pricing only requires first-order information about the cost function of players.}} \cite{liu2021inducing, li2023inducing}, {or knowledge of the gradient of equilibrium strategy with respect to incentive (i.e., \(\nabla_p x^\ast(p)\))  \cite{alpcan2009control, alpcan2009nash}}. Furthermore, our incentive updates do not require knowledge of players' entire cost function, and are agnostic to the specific strategy update dynamics (i.e., \eqref{eq: FinUpdateX} and \eqref{eq: PopUpdateX}) employed by the players. 
  
In many settings, leveraging the structure of the underlying problem enables the social planner to compute externalities with less information.  
For instance, in non-atomic routing games (see Section \ref{sec:routing}), the social planner can compute externality using only the travel time costs of edges (road segments in the network) instead of the cost of each path taken by each population given their origin-destination pair. Additionally, in energy system applications (e.g., \cite{li2024socially}), player's cost function \(\ell_i(x) = g_i(x_i)\) often only depends on their own energy consumption $x_i$, and the social cost function \(\Phi(x) = r(x) + \sum_{i \in \mathcal{I}} g_i(x_i)\) is modeled as the sum of the public cost \(r(x)\) that depends on the joint action \(x\) and the cost of individual players. 
In this case, the externality for any player \(i\) depends only on the gradient of the public cost function \(r(x)\) and not on the private cost of players.
\[
    e_i(x) = \frac{\partial \Phi(x)}{\partial x_i} - \frac{\partial \ell_i(x)}{\partial x_i} = \frac{\partial r(x)}{\partial x_i}.
\]  
}

\section{General results}\label{sec: Results}
In Section \ref{subsec:fpa}, we characterize the set of fixed points of the updates \eqref{eq: FinUpdateX}-\eqref{eq: FinUpdateP} and \eqref{eq: PopUpdateX}-\eqref{eq: PopUpdateP}, and show that any fixed point corresponds to a socially optimal incentive mechanism such that the induced Nash equilibrium strategy profile minimizes the social cost. In Section \ref{subsec: Convergence}, we provide a set of sufficient conditions that {{} guarantee (local and global) convergence} of incentive updates. Under these conditions, our adaptive incentive mechanism eventually induces a socially optimal outcome.  

\subsection{Fixed point analysis}\label{subsec:fpa}
We first characterize the set of fixed points of the updates \eqref{eq: FinUpdateX}-\eqref{eq: FinUpdateP}, and \eqref{eq: PopUpdateX}-\eqref{eq: PopUpdateP} as follows: 
\begin{subequations}
\begin{align}
   &\text{Atomic game $\G$,} ~ \left\{(x, p) | f(x, p) = x, ~ e(x) = p\right\}, \label{subeq:fixedfin}\\
   &\text{Non-atomic game $\Gtilde$,} ~ \left\{(\tilde{x}, \tilde{p}) | \tilde{f}(\tilde{x}, \tilde{p}) = \tilde{x}, ~ \tilde{e}(\tilde{x}) = \tilde{p}\right\}.\label{subeq:fixedpop}
\end{align}
\end{subequations}
Using \eqref{eq: FixedPointUpdates}, from \eqref{subeq:fixedfin} -- \eqref{subeq:fixedpop}, we can write the set of incentive mechanisms at the fixed point, $\Peq$ as follows: 
\begin{equation}
\begin{aligned}\label{eq: PDagger}
   &\text{Atomic game $\G$,} ~ \Peq = \{(\peq_i)_{i\in\playerSet} | \exterFin(\xEqFin{}(\peq)) = \peq \}, \\
    &\text{Non-atomic game $\Gtilde$,} ~ \Ptildeeq = \{(\ptildeeq_i)_{i\in\pop} | \exterPop (\tilde{x}^{*}(\ptildeeq)) = \ptildeeq \}.
\end{aligned}
\end{equation}
That is, at any fixed point, the incentive of each player is set to be equal to the externality evaluated at their equilibrium strategy profile.

Our first result characterizes conditions under which the fixed point set $\Peq$ (resp. $\Ptildeeq$) is non-empty and singleton in $\G$ (resp. $\Gtilde$). Moreover, given any fixed point incentive mechanism $\peq \in \Peq$ and $\ptildeeq \in \Ptildeeq$, the corresponding Nash equilibrium is socially optimal. 
\begin{prop}\label{prop: Alignment}
Let Assumptions \ref{assm: SocCostAtomic} hold and the strategy set \(X\) in an atomic game $G$ be compact. The set \(\Peq\) is a non-empty singleton set. The unique \(\peq \in \Peq\) is socially optimal, i.e. \(\xEqFin{}(\peq) = \socOptFin\). 

Moreover, in a non-atomic game $\popGame$ under Assumptions \ref{assm: MonotonicCostPop}, \(\Ptildeeq\) is a non-empty singleton set. The unique \(\ptildeeq \in \Ptildeeq\) is socially optimal, i.e., \(\xEqPop{}{}(\ptildeeq) = \socOptPop\). 
\end{prop}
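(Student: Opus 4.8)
The plan is to exploit the algebraic identity built into the externality update: the fixed-point incentive $p_i = e_i(x)$ equals $\nabla_{x_i}\socCostFin(x) - \nabla_{x_i}\ell_i(x)$, so when it is added to player $i$'s own marginal cost $\nabla_{x_i}\ell_i(x)$, the selfish term cancels and only the social marginal cost $\nabla_{x_i}\socCostFin(x)$ remains. Consequently, at any fixed point the equilibrium condition reduces to the first-order condition for minimizing $\socCostFin$, and strict convexity of $\socCostFin$ simultaneously forces the equilibrium to be the social optimum and makes the fixed-point incentive unique.

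For the atomic game, I would first use compactness of $X$ together with continuous differentiability and strict convexity of $\socCostFin$ to obtain the unique minimizer $\socOptFin = \arg\min_{x\in X}\socCostFin(x)$, characterized via the product structure $X = \prod_i X_i$ by $\langle \nabla_{x_i}\socCostFin(\socOptFin),\, y_i - \socOptFin_i\rangle \ge 0$ for all $y_i \in X_i$, $i \in \playerSet$. For uniqueness and optimality of the fixed point, take $p \in \Peq$ and set $x = x^\ast(p)$; by \eqref{eq: FixedPointUpdates}, $x$ is the Nash equilibrium and $p = e(x)$. Since $x_i$ minimizes the differentiable map $y_i \mapsto \ell_i(y_i, x_{-i}) + p_i y_i$ over the interval $X_i$, the necessary first-order condition gives $\langle \nabla_{x_i}\ell_i(x) + p_i,\, y_i - x_i\rangle \ge 0$ for all $y_i \in X_i$; substituting $p_i = e_i(x)$ from \eqref{eq: ExterFin} collapses this to $\langle \nabla_{x_i}\socCostFin(x),\, y_i - x_i\rangle \ge 0$ for every $i$, so $x$ solves the variational inequality characterizing $\arg\min_X \socCostFin$, whence $x = \socOptFin$ and $p = e(\socOptFin)$. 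This proves $\Peq \subseteq \{e(\socOptFin)\}$ and that any fixed-point incentive is socially optimal. Non-emptiness then amounts to showing $x^\ast(e(\socOptFin)) = \socOptFin$: reversing the computation, with $p = e(\socOptFin)$ the first-order derivative of $y_i \mapsto \ell_i(y_i,\socOptFin_{-i}) + p_i y_i$ at $\socOptFin_i$ coincides with $\nabla_{x_i}\socCostFin(\socOptFin)$, so the social first-order condition makes $\socOptFin_i$ stationary for $\min_{y_i \in X_i}\bigl(\ell_i(y_i,\socOptFin_{-i}) + p_i y_i\bigr)$; convexity of $\ell_i(\cdot,x_{-i})$ in its own argument (implicit in the game model underlying Assumption \ref{assm: SocCostAtomic}) upgrades stationarity to optimality, so $\socOptFin$ is a Nash equilibrium under $e(\socOptFin)$, and uniqueness of the equilibrium (Assumption \ref{assm: SocCostAtomic}) yields $x^\ast(e(\socOptFin)) = \socOptFin$. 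Alternatively, non-emptiness follows from Brouwer's theorem applied to the continuous self-map $p \mapsto e(x^\ast(p))$ of the compact convex set $\textsf{conv}(e(X))$, continuity of $x^\ast(\cdot)$ being its Lipschitz property, with the uniqueness step then identifying that fixed point with $e(\socOptFin)$.

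The non-atomic case follows the same template, with two adjustments. The interval first-order condition is replaced by the standard equivalence between the complementarity/Wardrop conditions \eqref{eq: NEPop} and the variational inequality $\langle \tilde\ell(\tilde x) + \tilde p,\, \tilde y - \tilde x\rangle \ge 0$ for all $\tilde y \in \tilde{X}$; substituting $\tilde p_i^j = \tilde e_i^j(\tilde x) = \nabla_{\tilde x_i^j}\socCostPop(\tilde x) - \tilde\ell_i^j(\tilde x)$ from \eqref{eq: ExterPop} turns this into the first-order condition for $\min_{\tilde y \in \tilde{X}}\socCostPop(\tilde y)$, and strict convexity of $\socCostPop$ together with uniqueness of $\tilde x^\ast(\cdot)$ (Assumption \ref{assm: MonotonicCostPop}) closes the argument exactly as above. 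Here $\tilde{X}$, a product of scaled simplices, is automatically compact, which is why no compactness hypothesis is added in the non-atomic statement.

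The step I expect to require the most care is \emph{non-emptiness} of $\Peq$ and $\Ptildeeq$: the externality computation alone only shows $\Peq \subseteq \{e(\socOptFin)\}$, so one must separately certify that $e(\socOptFin)$ is actually realized as the argument of $x^\ast(\cdot)$ — the point at which convexity of individual costs (to pass from stationarity to a genuine best response) or, alternatively, a fixed-point theorem must enter. A secondary subtlety is to use the variational reformulation of the equilibrium only in its \emph{necessary} direction when proving uniqueness (so that no convexity of individual costs is needed there, only that a local minimizer over a convex set satisfies the first-order inequality) and in its \emph{sufficient} direction only in the existence argument.
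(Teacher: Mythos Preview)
Your proposal is correct and, for optimality and uniqueness, follows exactly the paper's argument: substitute the externality into the Nash first-order variational inequality to obtain the first-order condition for $\min_{X}\Phi$, invoke strict convexity to identify the equilibrium with $x^\dagger$, and conclude $p^\dagger=e(x^\dagger)$ is uniquely determined. For non-emptiness the paper uses only your \emph{alternative} route---the Schauder (equivalently Brouwer, in finite dimensions) fixed-point theorem applied to $p\mapsto e(x^\ast(p))$ on $\textsf{conv}(e(X))$---and never attempts the direct verification $x^\ast(e(x^\dagger))=x^\dagger$; as you correctly flag, that step needs own-action convexity of $\ell_i$, which Assumption~\ref{assm: SocCostAtomic} does not explicitly impose (it appears only in the footnote as a sufficient condition), so the fixed-point argument is the one that goes through under the hypotheses as written.
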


\medskip 

{{} 
\noindent\textbf{Advantage of externality-based incentive updates.} 
Proposition \ref{prop: Alignment} demonstrates that the externality-based incentive updates \eqref{eq: FinUpdateP} and \eqref{eq: PopUpdateP} ensure that any fixed point must achieve social optimality. In contrast, the gradient-based incentive update, commonly considered in the literature (e.g., \cite{liu2021inducing, li2023inducing, mojica2022stackelberg, alpcan2009control, alpcan2009nash}), does not guarantee that its fixed point corresponds to a socially optimal incentive mechanism. Typically, these works impose additional assumptions, such as the equilibrium social cost function \(\Phi(x^\ast(p))\) (resp. \(\tilde{\Phi}(\tilde{x}^\ast(\tilde{p}))\)) being strongly convex in the incentive mechanism \(p\) (resp. \(\tilde{p}\)) \cite{liu2021inducing, li2023inducing, mojica2022stackelberg},  {or that the gradient of the equilibrium strategy, \(\nabla_{{p}}{x}^\ast({p})\) (resp. \(\nabla_{\tilde{p}}\tilde{x}^\ast(\tilde{p})\)), with respect to the incentive mechanism \(p\) (resp. \(\tilde{p}\)) is non-singular} \cite{alpcan2009control, alpcan2009nash}, to ensure that the fixed points of the gradient-based update achieve the socially optimal outcome.
In fact, in Appendix \ref{sec: AppendixCounterexample}, we provide an example of a two-link non-atomic routing game, where these assumptions are not satisfied and nearly all fixed points of the gradient-based incentive update fail to achieve the socially optimal outcome. Consequently, the gradient-based incentive update can lead to inefficient outcomes. In contrast, our externality-based incentive update has a unique fixed point that always induces the socially optimal outcome.

}

\medskip 
\noindent\emph{Proof of Proposition \ref{prop: Alignment}.}
First, we show that \(\Peq\) is non-empty, i.e., there exists \(\pEqFin{}\) such that \(\exterFin(\xEqFin{}(\pEqFin{}))=\pEqFin{}\). Define the function \(\theta(p) = \exterFin(\xEqFin{}(p))\). By Assumption \ref{assm: SocCostAtomic}, \(\theta\) is well-defined. Thus, the problem reduces to proving the existence of a solution to \(p= \theta(p)\).

We note that Assumption \ref{assm: SocCostAtomic} ensures that \(\theta(p)\) is a continuous function.
Now, 
define \(K\defas \{\theta(p): p\in \R^{|\playerSet|}\}\subseteq \R^{|\playerSet|}\). We claim that the set \(K\) is compact.
Indeed, this follows from two observations. First, the externality function \(\exterFin(\cdot)\) is continuous. Second,  the range of the function \(\xEqFin{}(\cdot)\) is \(\strategySetFin\), which is a compact set.  These two observations ensure that \(\theta(p)= \exterFin(\xEqFin{}(p))\) is a bounded function. 
Let \(\tilde{K}\defas \textsf{conv}(K)\) be the convex hull of \(K\), which in turn is also a compact set. Let's denote the restriction of function \(\theta\) on the set \(\tilde{K}\) as \(\theta_{|\tilde{K}}:\tilde{K}\ra \tilde{K}\) where \(\theta_{|\tilde{K}}(p)=\theta(p)\) for all \(p\in\tilde{K}\). We note that \(\theta_{|\tilde{K}}\) is a continuous function from a convex compact set to itself and therefore, the Schauder fixed point theorem ensures that there exists \(\pEqFin{}\in\tilde{K}\) such that \(\pEqFin{} =\theta_{|\tilde{K}}(\pEqFin{})= \theta(\pEqFin{})\) \cite{smart1980fixed}. This concludes the proof of the existence of \(\pEqFin{}\).
Analogous argument applies for the non-atomic game \(\popGame\) to show that \(\tilde{P}^\dagger\) is non-empty.  

Next, we show that the incentive \(\pEqFin{}\) aligns the Nash equilibrium with socially optimal strategy (i.e. for any \(\pEqFin{}\in\Peq\), \(\xEqFin{}(\pEqFin{})=\socOptFin\)). 
For any \(\pEqFin{} \in \Peq\) and any \(i \in \playerSet\), it holds that \(\pEqFin{i} = \exterFin_i(\xEqFin{}(\pEqFin{}))\).
 This implies that
\(
    \Der_{\strategyFin_i} \ell_i(\xEqFin{}(\pEqFin{}))+\pEqFin{i} = \Der_{\strategyFin_i} \socCostFin(\xEqFin{}(\pEqFin{}))\) for every \(i\in\playerSet\), and thus
    \begin{align}
       \label{eq: AlignmentSocNash} \JacobianIncentiveFin(\xEqFin{}(\pEqFin{}),\pEqFin{}) = \grad \socCostFin(\xEqFin{}(\pEqFin{})),
    \end{align}
    where \(\JacobianIncentiveFin(x,p)\) is the \textit{game Jacobian} defined as \(\JacobianIncentiveFin_i(x,p) = \Der_{\strategyFin_i} \ell_i(x)+p_i\) for every \(i\in\playerSet\).  
From Assumption \ref{assm: SocCostAtomic} and the first order necessary condition for Nash equilibrium \cite{facchinei2007finite}, {we know that the Nash equilibrium \(\xEqFin{}(\pEqFin{})\) must satisfy  
\begin{align}\label{eq: NASHVIProof}
\langle \JacobianIncentiveFin(\xEqFin{}(\pEqFin{}),\pEqFin{}), \strategyFin- \xEqFin{}(\pEqFin{})  \rangle \geq 0, \quad \forall \ \strategyFin \in \strategySetFin. 
\end{align}}
From \eqref{eq: AlignmentSocNash} and \eqref{eq: NASHVIProof}, we observe that 
\begin{align}\label{eq: SOCFINProof}
\langle \nabla \socCostFin(\xEqFin{}(\pEqFin{})), \strategyFin-\xEqFin{}(\pEqFin{})  \rangle \geq 0, \quad \forall \ \strategyFin \in \strategySetFin.
\end{align}
Further, from the first order conditions of optimality for social cost function we know that \(x^\dagger\) is socially optimal if any only if it satisfies 
{\begin{align}\label{eq: SocCostFiniteVI}
    \langle \nabla \Phi(x^\dagger), x-x^\dagger \rangle  \geq 0, \quad \forall \ x\in X.
\end{align}}Comparing \eqref{eq: SOCFINProof} with \eqref{eq: SocCostFiniteVI}, we note that \(\xEqFin{}(\pEqFin{})\) is the minimizer of social cost function \(\socCostFin\). This implies that \(\xEqFin{}(\pEqFin{}) = \socOptFin\), since \(\socOptFin\) is the unique minimizer of the social cost function \(\socCostFin\) under Assumption \ref{assm: SocCostAtomic}.

Similarly, for non-atomic game \(\popGame\), we show that the incentive \(\pEqPop{}{}\) aligns the Nash equilibrium with social optimality. Fix \(\pEqPop{}{}\in \Ptildeeq\). For every \(j\in \tilde{S}_i\) and \(i\in \pop\), it holds that \(\pEqPop{i}{j} = \exterPop_{i}^{j}(\xEqPop{}{}(\pEqPop{}{}))\). Consequently, 
\begin{align}\label{eq: ExternalityEqnon-atomic}
    \tilde{c}_i^j(\xEqPop{}{}(\pEqPop{}{}), \pEqPop{}{}) = \Der_{\tilde{x}_i^j}\socCostPop(\xEqPop{}{}(\pEqPop{}{})). 
\end{align}
Under Assumption \ref{assm: MonotonicCostPop}, 
{\(\xEqPop{}{}(\pEqPop{}{})\) is a Nash equilibrium only if
\begin{align}\label{eq: VarEqnon-atomic}
    \langle \tilde{c}(\xEqPop{}{}(\pEqPop{}{}), \pEqPop{}{}), \tilde{x} -  \xEqPop{}{}(\pEqPop{}{}) \rangle \geq 0, \quad \forall \ \tilde{x} \in \tilde{X}. 
\end{align}}From \eqref{eq: ExternalityEqnon-atomic} and \eqref{eq: VarEqnon-atomic}, we observe that
\begin{align}\label{eq: VarEqnon-atomicSocCost}
    \langle \nabla \socCostPop(\xEqPop{}{}(\pEqPop{}{})), \tilde{x} -  \xEqPop{}{}(\pEqPop{}{}) \rangle \geq 0, \quad \forall \ \tilde{x} \in \tilde{X}. 
\end{align}
Comparing \eqref{eq: VarEqnon-atomicSocCost} 
with the first order necessary and sufficient conditions of optimality of social cost function, 
we note that \(\xEqPop{}{}(\pEqPop{}{})\) is the minimizer of the social cost function \(\socCostPop\). This implies that \(\xEqPop{}{}(\pEqPop{}{}) = \socOptPop\), since \(\socOptPop\) is the unique minimizer of the social cost function \(\socCostPop\) under Assumption \ref{assm: MonotonicCostPop}.

Finally, we show that the set \(P^\dagger\) is singleton. We prove this via contradiction. Suppose that \(P^\dagger\) contains two element \(p^\dagger_1,p^\dagger_2\), and both align the Nash equilibrium with social optimality. Then, \(x^\dagger = x^\ast(p^\dagger_1) = x^\ast(p^\dagger_2)\). From \eqref{eq: PDagger}, we know that  \(p^\dagger_1 = \exterFin(x^\ast(p^\dagger_1))\) and \(p^\dagger_2 = \exterFin(x^\ast(p^\dagger_2))\). Thus, we must have \( p^\dagger_1 = \exterFin(x^\dagger) = p^\dagger_2 \), which implies that \( P^{\dagger} \) is a singleton.
The proof of uniqueness of \(\tilde{P}^\dagger\) follows analogously. \hfill $\square$

\subsection{Convergence to optimal incentive mechanism}\label{subsec: Convergence}
In this subsection, we provide sufficient conditions for the convergence of strategy and incentive updates \eqref{eq: FinUpdateX}-\eqref{eq: FinUpdateP} and \eqref{eq: PopUpdateX}-\eqref{eq: PopUpdateP}. 
Before presenting the convergence result, we first introduce two assumptions. 
\begin{assm}\label{assm: StepSizeAssumption} The step sizes in \eqref{eq: FinUpdateX}-\eqref{eq: FinUpdateP} and \eqref{eq: PopUpdateX}-\eqref{eq: PopUpdateP} satisfy the following conditions:
\begin{itemize}
\item[(i)] $\sum_{k=1}^{\infty}\stepx{k}=\sum_{k=1}^{\infty}\stepp{k}=+\infty$, $ \sum_{k=1}^{\infty}\stepx{k}^2+\stepp{k}^2 < +\infty$. 
\item[(ii)]$\lim_{k\to\infty}{\stepp{k}}/{\stepx{k}}=0$.
\end{itemize}
\end{assm}
Assumption \ref{assm: StepSizeAssumption}-(i) is a standard assumption on step sizes that allows us to analyze the convergence properties of the
discrete-time learning updates through that of a continuous-time dynamical system \cite{borkar2009stochastic}.  {{}  Assumption \ref{assm: StepSizeAssumption}-(ii) ensures that the incentive update evolves on a slower timescale than the players' strategy updates {{} \cite{borkar2009stochastic, lakshminarayanan2017stability}. Any step sizes of the form \(\gamma_k = k^{-a}\) and \(\beta_k = k^{-b}\) with \(0.5<a<b\leq 1,\) satisfy Assumption \ref{assm: StepSizeAssumption}. 

Assumption \ref{assm: StepSizeAssumption} has been adopted in several previous works on adaptive incentive design (e.g., \cite{li2024socially, chandak2024learning}).  
Under Assumption \ref{assm: StepSizeAssumption}, the strategy update \eqref{eq: FinUpdateX} represents a \emph{fast transient}, whereas the incentive update \eqref{eq: FinUpdateP} is a \emph{slow component}. To analyze such discrete-time updates, we employ techniques from two-timescale approximation theory \cite{borkar2009stochastic, borkar2018concentration, chandak2024learning}, which allows us to analyze the convergence of the strategy and incentive updates separately.  
An intermediate step in this process is to ensure that, for every \( p, \tilde{p} \), the trajectories of the following continuous-time strategy dynamics globally converge (cf. \cite{borkar2009stochastic, borkar2018concentration, chandak2024learning}):  
\begin{align}
   \dot{x}(t)&= \f(x(t),p) - x(t),\tag{$x$-dynamics}\label{eq: FinUpdateXCont}\\
    \dot{\tilde{x}}(t) &= \tilde{f}(\tilde{x}(t),\tilde{p})- \tilde{x}(t).\tag{$\xtilde$-dynamics}\label{eq: PopUpdateXCont}
\end{align}
In this work, we do not focus on analyzing the convergence of \eqref{eq: FinUpdateXCont}-\eqref{eq: PopUpdateXCont}. Instead, we assume any off-the-shelf convergent strategy update that satisfies the following assumption:
\begin{assm}\label{assm: ConvergenceStrategy}
For any incentive mechanism \(p\) (resp. \(\tilde{p}\)), the Nash equilibrium \(x^\ast(p)\) (resp. \(\tilde{x}^\ast(\tilde{p})\)) is the globally asymptotically stable fixed point of the continuous-time dynamical system \eqref{eq: FinUpdateXCont} (resp. \eqref{eq: PopUpdateXCont}). 
\end{assm}
Assumption \ref{assm: ConvergenceStrategy} is satisfied for a variety of strategy updates in various games. 
This includes the best-response and fictitious play strategy update in zero-sum and potential games \cite{hofbauer2006best, benaim2005stochastic, swenson2018best, leslie2006generalised}, and gradient-based strategy update in continuous games \cite{mazumdar2020gradient, mertikopoulos2024unified}.

Our goal here is to characterize conditions under
which the coupled strategy and incentive updates \eqref{eq: FinUpdateX}-\eqref{eq: FinUpdateP} and \eqref{eq: PopUpdateX}-\eqref{eq: PopUpdateP}
converge. Before stating the convergence results, we define two notions of convergence. 
}

{{} 
\begin{defn}\label{def: ConvergenceCriterion}
    We say that the coupled strategy and incentive updates \eqref{eq: FinUpdateX}-\eqref{eq: FinUpdateP} 
    \begin{itemize}
\item[(i)] \textit{globally converges to the fixed point \((x^\dagger, p^\dagger)\)}  if, for any initial condition \(p_0\in \mathbb{R}^{|\playerSet|}\) and \(x_0\in X\), and any selection of step sizes that satisfy Assumption \ref{assm: StepSizeAssumption}, the discrete-time updates \eqref{eq: FinUpdateX}-\eqref{eq: FinUpdateP} asymptotically converge to \((x^\dagger, p^\dagger)\). 
\item[(ii)] \textit{locally converges to the fixed point \((x^\dagger, p^\dagger)\)}
if there exist positive scalars \(\bar{r}, \bar{\alpha}, \bar{\beta}, \bar{\gamma}\) such that, when \(p_0 \in \mathcal{B}_{\bar{r}}(p^\dagger)\) and \(x_0 \in \mathcal{B}_{\bar{r}}(x^\ast(p_0))\), the step sizes satisfy Assumption \ref{assm: StepSizeAssumption} and the following condition:  
\begin{align}\label{eq: StepSizeBound}
\sup_{k \in \mathbb{N}} \frac{\stepP_k}{\stepX_k} \leq \bar{\alpha}, \quad  
\sup_{k \in \mathbb{N}} \stepP_k \leq \bar{\beta}, \quad  
\sup_{k \in \mathbb{N}} \stepX_k \leq \bar{\gamma},
\end{align}
then the discrete-time updates \eqref{eq: FinUpdateX}--\eqref{eq: FinUpdateP} asymptotically converge to \((\socOptFin, \peq)\).
    \end{itemize}
    Local and global convergence are analogously defined for the updates \eqref{eq: PopUpdateX}-\eqref{eq: PopUpdateP} in non-atomic games. 
\end{defn}
}

{{} \begin{prop}\label{thm: ConvergenceFin}
Consider the atomic game \(G\) with discrete-time update \eqref{eq: FinUpdateX}-\eqref{eq: FinUpdateP} that satisfy Assumptions \ref{assm: StepSizeAssumption} and \ref{assm: ConvergenceStrategy}. The following requirements provide sufficient conditions for \eqref{eq: FinUpdateX}-\eqref{eq: FinUpdateP} to locally converge to the fixed point\footnote{This result holds even if the updates \eqref{eq: FinUpdateX}-\eqref{eq: FinUpdateP} and \eqref{eq: PopUpdateX}-\eqref{eq: PopUpdateP} are perturbed with square-integrable martingale difference noise \cite{borkar2018concentration}.} \((x^\dagger,p^\dagger)\) in the sense of Definition \ref{def: ConvergenceCriterion}:  
\begin{enumerate}
    \item[(R1)] \(p^\dagger\) is a locally asymptotically stable equilibrium of the following continuous-time dynamical system:
    \begin{align}
        \dot{\incentiveFinSep}(t) = \exterFin(\xEqFin{}(\incentiveFinSep(t))) - \incentiveFinSep(t). \label{subeq:p_fin11}
    \end{align}
    \item[(R2)] The trajectories of the discrete-time updates satisfy the boundedness condition:
    \[
        \sup_{k \in \mathbb{N}} \big(\|x_k\| + \|p_k\|\big) < +\infty.
    \]
\end{enumerate}Furthermore, the sufficient conditions for \eqref{eq: FinUpdateX}-\eqref{eq: PopUpdateP} to globally converge to the fixed point \((x^\dagger,p^\dagger)\) in the sense of Definition \ref{def: ConvergenceCriterion} are (R1') and (R2), where
\begin{enumerate}
    \item[(R1')] \(p^\dagger\) is a globally asymptotically stable equilibrium of the continuous-time dynamical system:
    \begin{align}
        \dot{\incentiveFinSep}(t) = \exterFin(\xEqFin{}(\incentiveFinSep(t))) - \incentiveFinSep(t). \label{subeq:p_fin12}
    \end{align}
\end{enumerate}
Analogous result holds for the non-atomic game \(\popGame\).
\end{prop}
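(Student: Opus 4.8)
The plan is to reduce the claim to the ODE method for two-timescale stochastic approximation \cite{borkar2009stochastic, borkar2018concentration}, exploiting the timescale separation in Assumption \ref{assm: StepSizeAssumption}-(ii). As a preliminary step I would record the relevant regularity: by Assumption \ref{assm: SocCostAtomic}, $p\mapsto x^\ast(p)$ is Lipschitz; the externality $e(\cdot)$ is Lipschitz because $\Phi$ and each $\ell_i$ have Lipschitz gradients; hence $p\mapsto e(x^\ast(p))$ — the drift of \eqref{subeq:p_fin11} — is Lipschitz. Together with the Lipschitz regularity of $f$ needed for the ODE method and Assumption \ref{assm: StepSizeAssumption}-(i), this places us in the standard setting, with \eqref{eq: FinUpdateX} as the fast iteration and \eqref{eq: FinUpdateP} as the slow iteration.

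First, I analyze the fast timescale treating $p$ as quasi-static. Rescaling \eqref{eq: FinUpdateX} by $\gamma_k$, its limiting ODE is exactly \eqref{eq: FinUpdateXCont}, whose unique globally asymptotically stable equilibrium is $x^\ast(p)$ by Assumption \ref{assm: ConvergenceStrategy}. The standard tracking lemma for the faster component then gives $\|x_k - x^\ast(p_k)\|\to 0$, using boundedness of the iterates from (R2). Second, I substitute this into the slow iteration by writing
\[
p_{k+1} = p_k + \beta_k\big(e(x^\ast(p_k)) - p_k\big) + \beta_k\big(e(x_k) - e(x^\ast(p_k))\big),
\]
where the last term is $O(\beta_k\|x_k - x^\ast(p_k)\|)$ by Lipschitzness of $e$, hence $o(\beta_k)$ by the first step; thus $(p_k)$ is a vanishingly-perturbed Euler discretization of \eqref{subeq:p_fin11}. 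For the global statement, (R1') makes $p^\dagger$ a globally asymptotically stable equilibrium of \eqref{subeq:p_fin12} and (R2) supplies boundedness, so the ODE lemma of \cite{borkar2009stochastic} yields $p_k\to p^\dagger$, and combined with the first step $x_k\to x^\ast(p^\dagger)=x^\dagger$. By Proposition \ref{prop: Alignment}, $p^\dagger$ is the unique socially optimal incentive and $x^\dagger=x^\ast(p^\dagger)$ is socially optimal, as claimed.

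Third — and this is where I expect the real work — I would handle local convergence, where (R1) gives only local asymptotic stability of \eqref{subeq:p_fin11}, so no global ODE lemma applies. Here I would run a local (lock-in) argument: fix a strict Lyapunov function $V$ for \eqref{subeq:p_fin11} on a ball $\mathcal{B}_\rho(p^\dagger)$, choose $\bar r$ so that $\mathcal{B}_{\bar r}(p^\dagger)$ lies inside a sublevel set of $V$, and then show that if $p_0\in\mathcal{B}_{\bar r}(p^\dagger)$, $x_0\in\mathcal{B}_{\bar r}(x^\ast(p_0))$, and the step-size caps $\bar\alpha,\bar\beta,\bar\gamma$ in \eqref{eq: StepSizeBound} are small enough, then (i) the fast error $\|x_k-x^\ast(p_k)\|$ remains uniformly small — because $x_0$ starts near $x^\ast(p_0)$, the fast ODE contracts toward $x^\ast(\cdot)$, and the slow variable drifts only at rate $O(\beta_k/\gamma_k)\le\bar\alpha$ — and (ii) consequently $p_k$ never leaves $\mathcal{B}_\rho(p^\dagger)$, since the Lyapunov decrement along \eqref{subeq:p_fin11} dominates the $o(\beta_k)$ perturbation. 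With this forward invariance in hand, I re-run the argument of the previous paragraph restricted to the neighborhood to conclude $p_k\to p^\dagger$ and $x_k\to x^\dagger$. The main obstacle is precisely this simultaneous quantitative control — certifying how small $\bar r,\bar\alpha,\bar\beta,\bar\gamma$ must be so that the transient of the fast variable cannot eject the slow variable from the basin of attraction — which I would discharge using the finite-time concentration / lock-in bounds for two-timescale schemes in \cite{borkar2018concentration}; the same machinery also covers the square-integrable martingale-difference-noise extension mentioned in the footnote. The argument for the non-atomic game $\popGame$ is verbatim, with \eqref{eq: PopUpdateX}-\eqref{eq: PopUpdateP}, Assumption \ref{assm: MonotonicCostPop}, and $\tilde e,\tilde x^\ast$ replacing their atomic counterparts.
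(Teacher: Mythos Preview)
Your proposal is correct and follows essentially the same route as the paper: a two-timescale stochastic approximation argument that first establishes $\|x_k - x^\ast(p_k)\|\to 0$ via the fast ODE \eqref{eq: FinUpdateXCont} and Assumption \ref{assm: ConvergenceStrategy}, then treats $(p_k)$ as an $o(\beta_k)$-perturbed Euler scheme for \eqref{subeq:p_fin11}, invoking (R1') directly for the global case and, for the local case, a lock-in argument based on a (converse) Lyapunov function together with the quantitative tracking bound of \cite[Theorem IV.1]{borkar2018concentration} to certify that $p_k$ remains in the basin of attraction under the step-size caps \eqref{eq: StepSizeBound}. The paper's proof makes the lock-in step slightly more explicit via nested sublevel sets $\bar V_{\bar r'}\subsetneq\mathcal{B}_{\bar r'}(p^\dagger)\subsetneq\mathcal{B}_{\bar r}(p^\dagger)\subsetneq\bar V_{\bar r}$ and a $(T,\delta)$-perturbation argument, but this is exactly the mechanism you describe.
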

}

Proposition~\ref{thm: ConvergenceFin} states two sets of generic conditions that can be verified when studying convergence in any specific game. In particular, by leveraging results from nonlinear dynamical systems theory, (R1) (or (R1')) can be verified by showing the existence of a Lyapunov function~\cite{sastry2013nonlinear} or by establishing that the dynamical system is cooperative~\cite{hirsch1985systems}; see Lemma~\ref{lem: SufficientConditionsPropMain} in Appendix~\ref{sec: AppProofs}. Additionally, (R2) holds in any game with a compact strategy set. In games with an unbounded strategy set, (R2) can be verified by analyzing the global convergence of continuous-time (`scaled') strategy and incentive dynamics~\cite[Theorem 10]{lakshminarayanan2017stability}. In Section~\ref{sec: Applications}, we verify that the conditions in Proposition~\ref{thm: ConvergenceFin} are satisfied in atomic aggregative games and non-atomic routing games.
}
\medskip 
\noindent\textbf{Proof of Proposition \ref{thm: ConvergenceFin}.}
    Assumption \ref{assm: StepSizeAssumption}-(ii),  allow us to study the 
    convergence of \eqref{eq: FinUpdateX}-\eqref{eq: FinUpdateP} in two stages \cite{borkar1997stochastic, borkar2018concentration}. First, we study the convergence of fast strategy updates, for every fixed value of incentive. Second, we
    study the convergence of slow incentive updates, assuming that the fast strategy updates have converged to the equilibrium.
    
    Formally, to study the convergence of fast strategy updates, we re-write \eqref{eq: FinUpdateX}-\eqref{eq: FinUpdateP} as follows 
    \begin{equation}
    \begin{aligned}\label{eq: FastUpdate_Borkar}
        x_{k+1} &= x_k + \gamma_k \lr{f(x_k,p_k) - x_k}, \\ 
        p_{k+1} &= p_k + \gamma_k \frac{\beta_k}{\gamma_k} \lr{e(x_k)-p_k}. 
    \end{aligned}
    \end{equation}
    Since \(\sup_{k\in\mathbb{N}}(\|x_k\|+\|p_k\|)<+\infty\) (cf. requirement (R2)) and \(\lim_{k\rightarrow \infty}\beta_k/\gamma_k = 0\) (cf. Assumption \ref{assm: StepSizeAssumption}), the term \(\frac{\beta_k}{\gamma_k} \lr{e(x_k)-p_k}\) in \eqref{eq: FastUpdate_Borkar} goes to zero as \(k\rightarrow\infty\). Consequently, leveraging the standard approximation arguments \cite[Lemma 1, Section 2.2]{borkar2009stochastic}, we conclude that  
    the asymptotic behavior of the updates in \eqref{eq: FastUpdate_Borkar} is same as that of the following dynamical system
    \begin{align*}
        \dot{\textbf{x}}(t) = f(\textbf{x}(t), \textbf{p}(t)) - \textbf{x}(t), \quad \dot{\textbf{p}}(t)  =  0. 
    \end{align*}
    Using Assumption \ref{assm: ConvergenceStrategy}, we conclude that 
    \begin{align}
\label{eq: ConvergenceFast}\lim_{k\rightarrow\infty}(\strategyFin_k,\incentiveFin_k) \rightarrow \{(\xEqFin{}(\incentiveFin),\incentiveFin):\incentiveFin\in\R^{|\playerSet|}\}.
    \end{align}
    
Next, to study the convergence of the slow incentive updates, we re-write \eqref{eq: FinUpdateP} as follows 
\begin{align}\label{eq: SlowIncentiveUpdates}
        p_{k+1} = p_k   +\beta_k \lr{e(x^\ast(p_k))-p_k} + \beta_k \lr{e(x_k)-e(x^\ast(p_k))}.
\end{align}
We will show that \((p_k)_{k\in\mathbb{N}}\) will asymptotically follow the trajectories of the following continuous-time dynamics:
\begin{align}
    \dot{\incentiveFinSep}(t) &=  \exterFin(\xEqFin{}(\incentiveFinSep(t))) - \incentiveFinSep(t). 
\end{align}
Note that \(p^\dagger\) is the fixed point of the trajectories of the dynamical system \eqref{subeq:p_fin11} (cf. Proposition \ref{prop: Alignment}). 
Requirement (R1) in Proposition \ref{thm: ConvergenceFin} ensures convergence of \eqref{subeq:p_fin11}.

Let \(D^\dagger\) denote the domain of attraction of \(p^\dagger\) for the dynamical system \eqref{subeq:p_fin11}. 
From the converse Lyapunov theorem \cite{scheidegger1964stability}, we know that there exists a continuously differentiable function \(\bar{V}:D^\dagger\rightarrow \mathbb{R}_+\) such that \(\bar{V}(p^\dagger) = 0\), \(\bar{V}(p)>0\) for all \(p\in D^\dagger\backslash\{p^\dagger\}\) and \(\bar{V}(p)\rightarrow\infty\) as \(p\rightarrow \textsf{boundary}(D^\dagger)\).  
For any \(r>0\), define \(\bar{V}_r = \{p\in \textsf{dom}(\bar{V}):\bar{V}(p)\leq r\}\) to be a sub-level set of \(\bar{V}\).    
There exists \( 0<\bar r'<\bar r\) such that  \(\bar V_{\bar{r}'}\subsetneq \mathcal{B}_{\bar{r}'}(p^\dagger)\subsetneq \mathcal{B}_{\bar{r}}(p^\dagger) \subsetneq \bar V_{\bar{r}}\). 
Additionally, define \(t_0 = 0, t_k  = \sum_{i=1}^{k}\beta_i\) and \(L_k = t_{n(k)}\) where \(n(0)=0\), and  
\begin{align}\label{eq: n_k}
    n(k) = \min\left\{ m\geq n(k-1) : \sum_{j = n(k-1)+1}^{m}\beta_j\geq T \right\}~ \forall k\in\mathbb{N}. 
\end{align}
Here, \(T\) is a positive integer to be described shortly. Furthermore, 
define \(\bar{\textbf{p}}^{(k)}:\mathbb{R}_+\rightarrow \mathbb{R}^{|\playerSet|}\) to be a solution of \eqref{subeq:p_fin11} on \([L_k,\infty)\) such that \(\bar{\textbf{p}}^{(k)}(L_k)  = p_{L_k}\).

To ensure that \(\bar{\textbf{p}}^{(k)}(L_k)\in \textsf{dom}(\bar{V})\) for \(k>0\), we show that for an appropriate choice of \(T\) in \eqref{eq: n_k}, \(p_{L_k}\in \textsf{int}(D^\dagger)\) for every \(k\in\mathbb{N}\).
 From \cite[Theorem IV.1]{borkar2018concentration}, we know that there exists \(K>0\) such that for all \(k\in \mathbb{N},\)
\begin{align*}
    &\|p_k - \bar{\textbf{p}}^{(0)}(t_k)\|\notag  \\&\leq K\lr{\sup_{k}\beta_k + \sup_{k}\gamma_k + \sup_{k}\frac{\beta_k}{\gamma_k} + \sup_{k}\frac{\beta_k}{\gamma_k}\|x_0 - x^\ast(p_0)\|}\notag \\
    &= K\lr{\bar{\alpha}+\bar{\beta} + \bar{\gamma} + \bar{\alpha}\bar{r}} =: \kappa. 
\end{align*}
Consequently, using the triangle inequality, it holds that  
\begin{align}\label{eq: TriangleInequality}
    \|p_k-p^\dagger\| \leq \kappa + \|\bar{\textbf{p}}^{(0)}(t_k)-p^\dagger\|. 
\end{align}

Since \(\bar{V}\) is a Lyapunov function of \eqref{subeq:p_fin11} and  \(\bar{\textbf{p}}^{(0)}(0) = p_0 \in\mathcal{B}_{\bar{r}}(p^\dagger) \subsetneq \bar V_{\bar{r}}\), there exists \(\bar{k}\in\mathbb{N}\) such that for all \(k\geq \bar{k}\), \(\bar{\textbf{p}}^{(0)}(t_k)\in \bar V_{\bar{r}'}\subsetneq \mathcal{B}_{\bar{r}'}(p^\dagger)\). If  we choose \(\kappa < \bar{r}-\bar{r}'\) then, from \eqref{eq: TriangleInequality}, it holds that for all \(k\geq \bar{k}\), \(p_k\in \mathcal{B}_{\bar{r}}(p^\dagger)\). Therefore, if we choose \(T\geq \bar{k}\) in \eqref{eq: n_k}, it holds that 
\begin{align}\label{eq: P_LKInSet}
    p_{L_k} \in \textsf{dom}(\bar V), \quad \forall \ k \in \mathbb{N}. 
\end{align}

Define \(\hat{p}:\mathbb{R}_+\rightarrow \mathbb{R}\) such that, for every \(k\in \mathbb{N}\), \(\hat{p}(t_k) = p_k\) with linear interpolation on \([t_k, t_{k+1}]\). Using the standard approximation arguments from \cite[Chapter 6]{borkar2009stochastic}, it holds that\footnote{ For any \(T\geq \bar{k}\) and \(\delta>0\), there exists \(k(\delta)\) such that \(\hat{p}(t_{k(\delta)}+\cdot)\) form a ``\((T,\delta)\)'' perturbation (cf. \cite{borkar1997stochastic}) of  \eqref{subeq:p_fin11}.} 
\begin{align}\label{eq: TDeltaPerturbation}
  &\sup_{t\in [L_k,L_{k+1}]}  \|\hat{p}(t) - \bar{\textbf{p}}^{(k)}(t) \| \notag \\&\leq \mathcal{O}\left(\sum_{m\geq L_k}\beta_m^2 + \sup_{m\geq L_k}\|x_m-x^\ast(p_m)\|\right).
\end{align}
Using \eqref{eq: ConvergenceFast} and Assumption \ref{assm: StepSizeAssumption}, we conclude that RHS in the above equation goes to zero as \(k\rightarrow \infty\).
Finally, using \eqref{eq: P_LKInSet}, \eqref{eq: TDeltaPerturbation} and \cite[Lemma 2.1]{borkar1997stochastic}, we conclude that \(p_k\rightarrow p^\dagger\) as \(k\rightarrow\infty\). \hfill $\square$

\section{Applications}\label{sec: Applications}
In this section, we study the applicability of the {{}  general results from Section \ref{sec: Results} to study convergence of our} externality-based incentive updates in two practically relevant classes of games: atomic aggregative games, and non-atomic routing games. 
\subsection{Atomic Aggregative Games}\label{sec:aggregate}
Here, we study \textit{quadratic} networked aggregative games \cite{bramoulle2007public,bramoulle2016oxford,shakarami2022dynamic, acemoglu2013aggregate}.
Consider a game \(G\) comprised of a finite set of players $\playerSet$.
The strategy set of every player is the entire real line \(\R\).  
Given the joint strategy profile \(x = (x_i)_{i \in \playerSet}\), the cost of each player \(i \in \playerSet\) is given by  
\begin{align}\label{eq: CostAggregate}
    \ell_i(x) = \frac{1}{2} q_i x_i^2 + \alpha x_i (A x)_i,
\end{align}
where \(A \in \R^{|\playerSet| \times |\playerSet|}\) is the \textit{network matrix}, with \(A_{ij}\) representing the impact of player \(j\)'s strategy on the cost of player \(i\). The parameter \(\alpha > 0\) characterizes the impact of the aggregate strategy on the individual cost of players.  
Moreover, \(q_i > 0\) determines the influence of each player's own strategy on their cost function.   
Without loss of generality, we consider \(A_{ii}=0\) for all \(i\in\playerSet\). 
For notational brevity, we define \(Q=\textsf{diag}((q_i)_{i\in \playerSet}) \in \R^{|\playerSet|\times |\playerSet|}\). 

A system operator designs incentives through a payment \( p_i x_i \) for player \( i \) when choosing strategy \( x_i \). Thus, the total cost of player \( i \) is given by  \(
c_i(x, p) = \ell_i(x) + p_i x_i.
\)
{{} The system operator's cost is
\begin{align}\label{eq: AggregativeSocialCost}
    \Phi(x) = \sum_{i=1}^{n} \frac{1}{2}(x_i - \zeta_i)^2,
\end{align}
where \(\zeta = (\zeta_i)_{i\in I}\in \mathbb{R}^{|\playerSet|}\) denotes the socially optimal strategy. Similar cost function has been considered for systemic risk analysis in financial networks \cite{acemoglu2015networks}.  In Appendix \ref{sssec: AdditionalResults}, we generalize our results for a broader class of social cost functions.}

\begin{prop}\label{thm: UniquenessAggregative}
Suppose \(M := Q + \alpha A\) is invertible. Then, the Nash equilibrium is given by \(\xEq(p) = -M^{-1}p\). Furthermore, the set \(P^\dagger\) is a singleton set. 
\end{prop}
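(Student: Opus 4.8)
The plan is to verify the claim in two parts, mirroring the structure of the statement. First I would compute the Nash equilibrium in closed form, and then I would invoke Proposition~\ref{prop: Alignment} (or re-derive the singleton property directly) to conclude that $P^\dagger$ is a singleton.

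\emph{Step 1: Computing the Nash equilibrium.} Since each player's strategy set is all of $\mathbb{R}$, the first-order stationarity condition $\nabla_{x_i} c_i(x,p) = 0$ is both necessary and sufficient for a Nash equilibrium, provided the individual cost $c_i(\cdot, x_{-i}, p)$ is convex in $x_i$ --- which holds because $\nabla^2_{x_i x_i} c_i = q_i > 0$. Writing out $\nabla_{x_i} c_i(x,p) = q_i x_i + \alpha (Ax)_i + \alpha x_i A_{ii} + p_i$ and using $A_{ii} = 0$ (by the normalization in the model), the stacked stationarity conditions read $Q x + \alpha A x + p = 0$, i.e.\ $(Q + \alpha A) x = -p$. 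Here one must be slightly careful: the term $\alpha x_i (Ax)_i$ contributes $\alpha (Ax)_i + \alpha (A^\top x)_i$ in general to the gradient, so I should double-check whether the intended reading gives $M = Q + \alpha A$ or $M = Q + \alpha(A + A^\top)/2$ or similar. Assuming the paper's convention yields $M = Q + \alpha A$ as stated, invertibility of $M$ gives the unique solution $x^\ast(p) = -M^{-1} p$, which is manifestly affine (hence Lipschitz) in $p$. This also shows that Assumption~\ref{assm: SocCostAtomic} is compatible here: the equilibrium is unique and Lipschitz, and the social cost \eqref{eq: AggregativeSocialCost} is strictly convex with Lipschitz gradient. (The compactness of $X$ assumed in Proposition~\ref{prop: Alignment} fails since $X = \mathbb{R}^{|\playerSet|}$, so I would not cite that proposition for existence; instead I would argue existence and uniqueness of the fixed point directly, as below.)

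\emph{Step 2: $P^\dagger$ is a singleton.} By definition \eqref{eq: PDagger}, $p^\dagger \in P^\dagger$ iff $e(x^\ast(p^\dagger)) = p^\dagger$. The externality is $e_i(x) = \nabla_{x_i}\Phi(x) - \nabla_{x_i}\ell_i(x) = (x_i - \zeta_i) - (q_i x_i + \alpha(Ax)_i) $, so in vector form $e(x) = (I - Q)x - \alpha A x - \zeta = (I - M)x - \zeta$. Substituting $x = x^\ast(p) = -M^{-1}p$ gives the fixed-point map $\theta(p) = e(x^\ast(p)) = -(I - M)M^{-1} p - \zeta = (I - M^{-1})p - \zeta$ (using $(I-M)M^{-1} = M^{-1} - I$). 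The equation $p = \theta(p)$ becomes $p = p - M^{-1}p - \zeta$, i.e.\ $M^{-1} p = -\zeta$, i.e.\ $p = -M\zeta$. This is a single, explicitly determined vector, so $P^\dagger = \{-M\zeta\}$ is a non-empty singleton. (As a sanity check consistent with Proposition~\ref{prop: Alignment}, the induced equilibrium is $x^\ast(-M\zeta) = -M^{-1}(-M\zeta) = \zeta$, which is exactly the socially optimal strategy.)

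\emph{Main obstacle.} The only genuinely delicate point is Step 1 --- pinning down the exact form of the gradient of the aggregative term $\alpha x_i (Ax)_i$ and confirming it reduces to $Mx = -p$ with $M = Q + \alpha A$ under the paper's conventions (symmetry of $A$, the role of $A_{ii}=0$, and whether a factor arising from differentiating the quadratic form has been absorbed). Everything downstream is routine linear algebra: once the equilibrium map $x^\ast(p) = -M^{-1}p$ is established, the singleton property of $P^\dagger$ follows from solving one linear system, and no fixed-point theorem or compactness argument is needed because the relevant map is affine with the shift matrix $I - M^{-1}$ having no eigenvalue equal to $1$ precisely when $M$ is invertible. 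I should also note explicitly that no separate existence argument is required here, unlike the general Proposition~\ref{prop: Alignment}, since the fixed-point equation is linear and explicitly solvable.
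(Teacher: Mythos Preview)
Your proposal is correct and follows essentially the same approach as the paper: derive the Nash equilibrium from the unconstrained first-order conditions (the paper also gets $\nabla_{x_i}\ell_i(x)=q_ix_i+\alpha(Ax)_i$, so your hedging about $A^\top$ is unnecessary once $A_{ii}=0$ is used), and then solve a single linear equation to pin down $p^\dagger=-M\zeta$. The only cosmetic difference is that the paper shortcuts Step~2 by first observing $e_i(x^\ast(p))-p_i=\nabla_{x_i}\Phi(x^\ast(p))=x_i^\ast(p)-\zeta_i$ via the FOC, so that $P^\dagger=\{p:x^\ast(p)=\zeta\}$ immediately, whereas you compute $e(x)=(I-M)x-\zeta$ first and substitute; both routes land on the same linear system and your explicit handling of existence (avoiding the compactness hypothesis of Proposition~\ref{prop: Alignment}) is a nice touch.
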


{{}  The proof follows by noting that the game is strongly convex and equilibrium is computed by first order conditions. 
Proof of Proposition \ref{thm: UniquenessAggregative} is provided in Appendix \ref{sssec: ProofAggregative}.}

Next, we provide sufficient conditions to ensure {{} global convergence} of \eqref{eq: FinUpdateX}-\eqref{eq: FinUpdateP} to the fixed points.
{{} 
\begin{prop}\label{prop: PropGlobalAggregative}
    Consider the updates  \eqref{eq: FinUpdateX}-\eqref{eq: FinUpdateP} associated to the aggregative game $G$. Suppose that Assumptions \ref{assm: StepSizeAssumption} and \ref{assm: ConvergenceStrategy}
    are satisfied. 
Additionally, if 
\begin{itemize}
    \item[(i)] \(M:= Q+\alpha A\) is symmetric positive definite, and 
    \item[(ii)] 
    The function \(f_c(x,p) := \frac{1}{c}(f(cx,cp)-cx)\), satisfy \(f_{c} \rightarrow f_\infty\) as \(c\rightarrow\infty,\) uniformly on the compacts, and for every incentive vector \( p \in \mathbb{R}^{|\playerSet|} \), \( x^\ast(p) \) is the globally asymptotically stable fixed point of  
\begin{align}\label{eq: scaledLimitingSys}
    \dot{x}(t) = f_{\infty}(x(t), p),
\end{align}  
where, for any \( x \in X \) and \( p \in \mathbb{R}^{|\playerSet|} \),  \(
f_{\infty}(x, p) = \lim_{c\rightarrow \infty} f_{c}(x, p).\)  
 
\end{itemize}
Then, the discrete-time updates \eqref{eq: FinUpdateX} and \eqref{eq: FinUpdateP} globally converges to the fixed point \((\socOptFin, \peq)\) in the sense of Definition \ref{def: ConvergenceCriterion}.
\end{prop}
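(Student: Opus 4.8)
The plan is to verify the two sufficient conditions (R1') and (R2) of Proposition~\ref{thm: ConvergenceFin} and then invoke that proposition to conclude global convergence to $(\socOptFin,\peq)$. The first step is to write the externality in closed form: from $\Phi(x)=\sum_{i\in\playerSet}\tfrac12(x_i-\zeta_i)^2$ and $\ell_i(x)=\tfrac12 q_i x_i^2+\alpha x_i(Ax)_i$ with $A_{ii}=0$, one gets $\nabla_{x_i}\Phi(x)=x_i-\zeta_i$ and $\nabla_{x_i}\ell_i(x)=q_i x_i+\alpha(Ax)_i$, hence in vector form $e(x)=(I-M)x-\zeta$ with $M=Q+\alpha A$. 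Since $M$ is invertible (being positive definite under (i)), Proposition~\ref{thm: UniquenessAggregative} gives $x^\ast(p)=-M^{-1}p$, so substituting yields $e(x^\ast(p))-p=-M^{-1}p-\zeta$. Thus the continuous-time incentive dynamics appearing in (R1') is the affine linear system $\dot p(t)=-M^{-1}p(t)-\zeta=-M^{-1}(p(t)-p^\dagger)$, with unique equilibrium $p^\dagger=-M\zeta$ and corresponding $x^\dagger=x^\ast(p^\dagger)=\zeta$, consistent with Proposition~\ref{prop: Alignment}.

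Next I would verify (R1'). Because $M$ is symmetric positive definite, so is $M^{-1}$, hence $-M^{-1}$ is Hurwitz; equivalently, taking $V(p)=\tfrac12\|p-p^\dagger\|^2$ one computes $\dot V=-(p-p^\dagger)^\top M^{-1}(p-p^\dagger)<0$ for all $p\neq p^\dagger$, and $V$ is radially unbounded, so $p^\dagger$ is a globally asymptotically stable equilibrium of $\dot p=e(x^\ast(p))-p$. This is exactly (R1').

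The bulk of the argument is (R2). Since the strategy set here is all of $\mathbb{R}^{|\playerSet|}$, boundedness of the coupled iterates is not automatic, and I would obtain it from the stability criterion for two-timescale stochastic approximation (the Borkar--Meyn-type result in \cite[Theorem~10]{lakshminarayanan2017stability}). Its hypotheses are: (a) the scaled fast drift $f_c(x,p)=\tfrac1c(f(cx,cp)-cx)$ converges uniformly on compacts to $f_\infty$ and, for each frozen $p$, $x^\ast(p)$ is the globally asymptotically stable fixed point of $\dot x=f_\infty(x,p)$ --- this is precisely hypothesis (ii); (b) the analogously scaled effective slow drift $\tfrac1c\big(e(x^\ast(cp))-cp\big)=-M^{-1}p-\zeta/c$ converges to $g_\infty(p)=-M^{-1}p$, whose origin is globally asymptotically stable --- this is exactly the Hurwitz property of $-M^{-1}$ from (i); together with Lipschitz continuity of the drifts, which holds since $e$ is affine and $f$ is Lipschitz for the learning dynamics under consideration. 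These conditions give $\sup_{k}(\|x_k\|+\|p_k\|)<\infty$, i.e., (R2).

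With (R1') and (R2) in hand, and Assumptions~\ref{assm: StepSizeAssumption} and~\ref{assm: ConvergenceStrategy} in force, Proposition~\ref{thm: ConvergenceFin} yields global convergence of \eqref{eq: FinUpdateX}--\eqref{eq: FinUpdateP} to the fixed point $(\socOptFin,\peq)=(\zeta,-M\zeta)$. The main obstacle is the third step: aligning hypothesis (ii) and the positive-definiteness of $M$ with the precise list of conditions of the two-timescale stability theorem, in particular controlling the fast-timescale tracking error $\|x_k-x^\ast(p_k)\|$ uniformly in the unbounded incentive variable so that the scaled slow ODE is the correct limiting object; once the closed-form externality is available, the remaining steps are routine.
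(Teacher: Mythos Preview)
Your proposal is correct and follows essentially the same route as the paper: verify (R1') via a quadratic Lyapunov function for the affine incentive ODE and verify (R2) by invoking the two-timescale boundedness theorem of \cite{lakshminarayanan2017stability}, then apply Proposition~\ref{thm: ConvergenceFin}. The only cosmetic differences are your choice of $V(p)=\tfrac12\|p-p^\dagger\|^2$ versus the paper's $V(p)=(p-p^\dagger)^\top M^{-\top}(p-p^\dagger)$, and your (correct) identification of the limiting slow drift as $-M^{-1}p$ rather than the paper's stated $-(Q+\alpha A)p$; both are Hurwitz under (i), so the conclusion is unaffected.
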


We establish Proposition \ref{prop: PropGlobalAggregative} by verifying requirements (R1') and (R2) of Proposition \ref{thm: ConvergenceFin}.  
To verify (R1'), we use Proposition \ref{prop: PropGlobalAggregative}-(i) to show that  \(
V(p) = (p - p^\dagger)^{\top} M^{-\top} (p - p^\dagger)\)
serves as a Lyapunov function candidate for the dynamical system \eqref{subeq:p_fin12}, guaranteeing global convergence.  
Next, we leverage Proposition \ref{prop: PropGlobalAggregative}-(ii) along with \cite[Theorem 10]{lakshminarayanan2017stability} to show that (R2) of Proposition \ref{thm: ConvergenceFin} holds. Proof of Proposition \ref{prop: PropGlobalAggregative} is in Appendix \ref{sssec: AggregativeDynamicsProof}.

Condition (ii) in Proposition \ref{prop: PropGlobalAggregative} and Assumption \ref{assm: ConvergenceStrategy} both impose global convergence of a suitably defined continuous-time strategy dynamics. In general, one need not imply the other. However, the two conditions become equivalent if the strategy update rule \( f(x, p) \) (cf. \eqref{eq: FinUpdateX}) is linear in both \( x \) and \( p \), which is the case if the strategy updates are best-response-based \eqref{eq: BestResponse} or gradient-based \eqref{eq: GradientBased} in aggregative game.

}

\subsection{Non-atomic Traffic Routing on General Networks}\label{sec:routing}
Consider a routing game \(\tilde{G}\) that models the interactions of strategic travelers over a directed graph \(\graph = (\edge, \node)\), where \(\node\) is the set of nodes and \(\edge\) is the set of edges.  
Let \(\odPair\) be the set of origin-destination (o-d) pairs. Each o-d pair \(i \in \odPair\) is connected by a set of routes,\footnote{A route is a sequence of contiguous edges.} denoted by \(\routes_i\). Let \(\routes = \bigcup_{i\in\odPair} \routes_i\) represent the set of all routes in the network.

An infinitesimal traveler on the network is associated with an o-d pair and chooses a route to commute between the o-d pair. Let the total population of travelers associated with any o-d pair \(i\in\odPair\) be denote by \(\massTraffic_i\). Let \(\flowRoute_{i}^j\) be the amount of travelers taking route \(j\in\routes_i\) to commute between o-d pair \(i\in\odPair\) and \(\flowRoute=(\flowRoute_{i}^j)_{j\in\routes_i,i\in\odPair}\) is a vector which contains, as its entries, the route flow of all population on different routes. 
Naturally, for every \(i\in\odPair\), it holds that
\(\sum_{j\in\routes_i}\flowRoute_{i}^j = \massTraffic_i\). Any route flow \(\tilde{x}\) induces a flow on the edges of the network, denoted by \(\tilde{w}\), such that \(\flowEdge_a =  \sum_{i\in\odPair}\sum_{j\in \routes_i} \flowRoute_{i}^j\mathbbm{1}(a\in j),\) for every \(a\in \edge\). We denote the set of feasible route flows by \(\tilde{X}\) and the set of feasible edge flows by \(\tilde{W} = \{(\tilde{w}_a)_{a\in\edge}: \exists \tilde{x}\in \tilde{X}, \flowEdge_a = \sum_{i\in\odPair}\sum_{j\in\routes_i} \tilde{x}_{i}^j\}\). 
For any o-d pair \(i\in\odPair\) and route flow
\(\flowRoute\in \R^{|\routes|}\), the cost experienced by travelers using route \(j\in \routes_i\) is
\(\lossRoute_{i}^{j}(\tilde{x}) = \sum_{a\in \edge}\lossEdge_a(\flowEdge_a)\mathbbm{1}(a\in j)\), where \(\lossEdge_a(\cdot)\) is the \textit{edge latency function} that depends on the edge flows. For every edge \(a\in \edge\), we assume that the edge latency function \(\lossEdge_a(\cdot)\) is convex and strictly increasing. This property of edge latency function captures the congestion effect on the transportation network \cite{beckmann1956studies,roughgarden2010algorithmic}.   
A system operator designs incentives by setting tolls on the edges of the network  in the form of edge tolls\footnote{{}  If we directly use the setup of non-atomic games presented in Section \ref{ssec: PopGame}, we would require the system operator to use \textit{route-based} tolls rather than \textit{edge-based} tolls. Our approach of using edge-based tolls is rooted in practical consideration with implementation of tolls.}, denoted by \(\edgeTolls = (\edgeTolls_a)_{a\in \edge}\). Every edge toll vector induces a unique route toll vector \(\routeTolls\). That is, for any o-d pair \(i\in \odPair\), the toll on route \(j\in\routes_i\) is 
\begin{align}\label{eq: RelEdgeRouteToll}
\routeTolls_{i}^j = \sum_{a\in \edge:a\in j} \edgeTolls_a.
\end{align}
Consequently, the total cost experienced by travelers on o-d pair \(i\in\odPair\) who choose route \(j\in \routes_i\) is \(\loss_{i}^j(\flowRoute,\routeTolls) = \lossRoute_{i}^j(\flowRoute) + \routeTolls_{i}^j\). 
Let \(\xWardrop(\routeTolls)\) denote a Nash equilibrium ({{}  also known as Wardrop equilibrium in non-atomic routing games literature}) corresponding to
route tolls \(\routeTolls\). 
Owing to \eqref{eq: RelEdgeRouteToll}, with slight abuse of notation, we shall frequently use 
\(\xWardrop(\routeTolls)\) and \(\xWardrop(\edgeTolls)\) interchangeably.
Typically, the equilibrium route flows can be non-unique but the corresponding edge flows \(\tilde{w}^\ast(\tilde{p})\) are unique. Furthermore, the function \(\tilde{p}\mapsto\tilde{w}^\ast(\tilde{p})\) is a continuous function \cite{yang2005mathematical}. 

The system operator's objective is to design tolls that ensure that the resulting equilibrium minimizes the overall travel time incurred by travelers on the network, characterized as the minimizer of
\begin{align}\label{eq: SocCostTraffic}
\socCostRouting(\tilde{x}) 
    = \sum_{i\in\odPair}\sum_{j\in\routes_i}\flowRoute_{i}^j\lossRoute_{i}^j(\flowRoute).
\end{align}
Note that the optimal route flow can be non-unique but the optimal edge flow, denoted by \(w^\dagger\), is unique \cite{yang2005mathematical}. 

Using the description of travelers' costs, the externality caused by travelers from o-d pair \(i \in \odPair\) using route \(j \in \routes_i\), based on \eqref{eq: ExterPop}, is given by   
\begin{align}\label{eq: ExternalityRoutingRoute}
    \exterPop_{i}^j(\flowRoute) = \sum_{i'\in\odPair}\sum_{j'\in\routes_i}\flowRoute_{i'}^{j'}\frac{\partial \lossRoute_{i'}^{j'}(\flowRoute)}{\partial \flowRoute_{i}^{j}} \stackrel{(a)}{=}\sum_{a\in \edge: a\in j} \nabla \lossEdge_a(\flowEdge_a) \flowEdge_a,
\end{align}
where \((a)\) is due to Lemma \ref{lem: ExternalityExpression} in Appendix \ref{sec: AppProofs}.

{{} 
From \eqref{eq: ExternalityRoutingRoute}, we note that the externality on any route \(j\) is the sum externality on every edge on that route. Therefore, we study the following incentive update, which updates the edge-tolls as follows: 
\begin{align}\label{eq: P_edge_Dyn}
    \tilde{p}_{a,k+1} = (1-\beta_k)\tilde{p}_{a,k} + \beta_k\tilde{e}_a(\tilde{x}_k), \quad \forall \ a\in \edge, 
\end{align}
where \(\tilde{e}_a(\tilde{x}_k)=  \nabla \lossEdge_a(\flowEdge_{a,k}) \flowEdge_{a,k}\), and \(w_{a,k} = \sum_{i\in \tilde{\playerSet}}\sum_{j\in \tilde{R}_i}\tilde{x}_{i,k}^j.\)
Define 
\begin{align*}
\tilde{\textbf{P}}^\dagger = \{(\optToll_a)_{a\in\edge}: \optToll_a = \pertEdge_a(\optToll)\nabla \lossEdge_a(\pertEdge_a(\optToll)), \forall \ a\in \edge\big\},
\end{align*}
to be the fixed point of the joint update \eqref{eq: PopUpdateX}-\eqref{eq: P_edge_Dyn}.}

\begin{prop}\label{thm: FixedPointsRouting}
The set \(\tilde{\textbf{P}}^\dagger\) is non-empty singleton set. The unique \(\peq \in \tilde{\textbf{P}}^\dagger\) is socially optimal, i.e. \(\tilde{w}(\peq) = w^\dagger\).
\end{prop}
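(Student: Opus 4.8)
The plan is to reduce Proposition~\ref{thm: FixedPointsRouting} to the general fixed point analysis of Proposition~\ref{prop: Alignment}, but carried out at the level of \emph{edge} tolls and \emph{edge} flows rather than route tolls and route flows, since that is the formulation actually used in the update \eqref{eq: P_edge_Dyn}. Concretely, the key objects are the continuous map $\tilde{p} \mapsto \tilde{w}^\ast(\tilde{p})$ (continuity and uniqueness of equilibrium edge flows is quoted from \cite{yang2005mathematical}), the edge externality $\tilde{e}_a(\tilde{x}) = \nabla \ell_a(w_a) w_a$, which by \eqref{eq: ExternalityRoutingRoute} depends on $\tilde{x}$ only through the induced edge flow $w$, and the map $\theta: \mathbb{R}^{|\edge|} \to \mathbb{R}^{|\edge|}$ defined by $\theta(\tilde{p})_a = w_a^\ast(\tilde{p}) \nabla \ell_a(w_a^\ast(\tilde{p}))$. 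A fixed point of $\theta$ is exactly an element of $\tilde{\textbf{P}}^\dagger$.

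\textbf{Existence.} First I would argue existence exactly as in the proof of Proposition~\ref{prop: Alignment}: $\theta$ is continuous (composition of the continuous map $\tilde{p} \mapsto w^\ast(\tilde{p})$ with the continuous map $w \mapsto (w_a \nabla \ell_a(w_a))_a$, the latter continuous since each $\ell_a$ is convex hence $C^1$ a.e.---or, more cleanly, I would simply assume the $\ell_a$ are $C^1$, which is standard for latency functions). The range of $w^\ast(\cdot)$ lies in the feasible edge-flow polytope $\tilde{W}$, which is compact, so $\theta$ has bounded range; letting $\tilde K = \textsf{conv}(\theta(\mathbb{R}^{|\edge|}))$ be the (compact, convex) closed convex hull of the range, the restriction $\theta_{|\tilde K}: \tilde K \to \tilde K$ is a continuous self-map of a compact convex set, so Schauder's fixed point theorem gives a fixed point $\optToll \in \tilde{\textbf{P}}^\dagger$.

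\textbf{Social optimality.} Next, for any $\optToll \in \tilde{\textbf{P}}^\dagger$ I would show $\tilde{w}^\ast(\optToll) = w^\dagger$. The Wardrop equilibrium $\tilde{x}^\ast(\optToll)$ satisfies the variational inequality $\langle \tilde{c}(\tilde{x}^\ast(\optToll), \tilde{q}(\optToll)), \tilde{x} - \tilde{x}^\ast(\optToll)\rangle \ge 0$ for all $\tilde{x} \in \tilde{X}$, where $\tilde{q}(\optToll)$ is the induced route toll (cf.\ \eqref{eq: RelEdgeRouteToll}) and $\tilde{c}_i^j = \ell_i^j(\tilde{x}) + \tilde{q}_i^j$. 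The fixed point condition says $\optToll_a = w_a^\ast(\optToll)\nabla\ell_a(w_a^\ast(\optToll))$, which by \eqref{eq: ExternalityRoutingRoute} means exactly that the route-level toll equals the route externality, i.e.\ $\tilde{q}_i^j(\optToll) = \exterPop_i^j(\tilde{x}^\ast(\optToll))$; hence $\tilde{c}_i^j(\tilde{x}^\ast(\optToll),\tilde{q}(\optToll)) = \partial_{\tilde{x}_i^j}\socCostRouting(\tilde{x}^\ast(\optToll))$, exactly as in \eqref{eq: ExternalityEqnon-atomic}. Substituting into the equilibrium VI yields $\langle \nabla\socCostRouting(\tilde{x}^\ast(\optToll)), \tilde{x} - \tilde{x}^\ast(\optToll)\rangle \ge 0$ for all $\tilde{x} \in \tilde{X}$, which is the first-order optimality condition for the convex program $\min_{\tilde{x}\in\tilde{X}}\socCostRouting(\tilde{x})$ (convexity of $\socCostRouting$ follows from each $\ell_a$ being convex and increasing, so $w_a\ell_a(w_a)$ is convex). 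Therefore $\tilde{x}^\ast(\optToll)$ is a socially optimal route flow, and since the optimal edge flow $w^\dagger$ is unique \cite{yang2005mathematical}, $\tilde{w}^\ast(\optToll) = w^\dagger$.

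\textbf{Uniqueness.} Finally, uniqueness follows by the same contradiction argument as in Proposition~\ref{prop: Alignment}: if $\optToll_1, \optToll_2 \in \tilde{\textbf{P}}^\dagger$, both induce the (unique) optimal edge flow $w^\dagger$, so $(\optToll_1)_a = w^\dagger_a\nabla\ell_a(w^\dagger_a) = (\optToll_2)_a$ for every $a$, hence $\optToll_1 = \optToll_2$. The main obstacle I anticipate is purely bookkeeping: the update and the set $\tilde{\textbf{P}}^\dagger$ live in edge-space while Proposition~\ref{prop: Alignment} is phrased in route-space, so one must be careful that the edge externality \eqref{eq: ExternalityRoutingRoute}, when ``lifted'' through \eqref{eq: RelEdgeRouteToll}, coincides with the route externality $\exterPop_i^j$ used in the social-optimality argument---this is precisely what Lemma~\ref{lem: ExternalityExpression} provides---and that continuity/compactness are invoked for the edge-flow map (which is the well-behaved one) rather than the possibly set-valued route-flow map.
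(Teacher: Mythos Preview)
Your proposal is correct and follows the same three-step structure as the paper (existence via Schauder, social optimality via a variational inequality argument, uniqueness by contradiction). The only substantive difference is in the social optimality step: you lift the edge fixed-point condition to route space via Lemma~\ref{lem: ExternalityExpression} and then rerun the route-level VI argument from Proposition~\ref{prop: Alignment}, whereas the paper stays in edge space throughout---it uses the Beckmann potential characterization \eqref{eq: StochasticPerturbedEq} to write the equilibrium VI directly in edge flows, and invokes Lemma~\ref{lem: DerPotTrafficExpression} to rewrite the social-cost first-order condition in edge flows as well. Both routes are equivalent; yours is more modular (it reduces cleanly to Proposition~\ref{prop: Alignment}), while the paper's version sidesteps the possibly non-unique route-flow map entirely by never leaving edge space.
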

{{} 
Proof of Proposition \ref{thm: FixedPointsRouting} follows in two steps. First, we show that any \(\pEqFin{}\in \tilde{\textbf{P}}^\dagger\) aligns the Nash equilibrium with social optimality, i.e. \(\tilde{w}(\peq) = w^\dagger\). Next, using contradiction argument similar to the proof of Proposition \ref{prop: Alignment}, we show that \(\tilde{\textbf{P}}^\dagger\) is singleton. Detailed proof of Proposition \ref{thm: FixedPointsRouting} is provided in Appendix \ref{sssec: RoutingFixedPoint}.}

Next, {{} we provide sufficient conditions for local convergence of} the updates \eqref{eq: PopUpdateX}-\eqref{eq: PopUpdateP}. 
\begin{prop}\label{thm: ConvergenceRouting} {{} Consider the updates  \eqref{eq: PopUpdateX}-\eqref{eq: PopUpdateP} associated with the routing game $\tilde{G}$.
Suppose that Assumptions  \ref{assm: StepSizeAssumption} and \ref{assm: ConvergenceStrategy} are satisfied,} and there exists an equilibrium route flow \(\tilde{x}^\ast(\tilde{p}^\dagger)\) such that for every \(i\in \tilde{I}, j, j'\in \tilde{R}_i,\)
\begin{align}\label{eq: degenerate}
\tilde{c}_i^j(\tilde{x}^\ast(\tilde{p}^\dagger)) \leq  \tilde{c}_i^{j'}(\tilde{x}^\ast(\tilde{p}^\dagger))   \implies  \tilde{x}_{i}^{j,\ast}(\tilde{p}^\dagger) > 0. 
\end{align}
The discrete-time updates \eqref{eq: PopUpdateX}-\eqref{eq: PopUpdateP} locally converges to fixed point \((\tilde{x}^\dagger, \tilde{p}^\dagger)\) in the sense of Definition \ref{def: ConvergenceCriterion}. 

\end{prop}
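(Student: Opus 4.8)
The plan is to verify the two requirements (R1) and (R2) of Proposition~\ref{thm: ConvergenceFin} (in the non-atomic version) for the routing game equipped with the edge-toll update \eqref{eq: P_edge_Dyn}, which is the edge-level reparametrization of \eqref{eq: PopUpdateP}. Since Proposition~\ref{thm: FixedPointsRouting} already tells us that $\tilde{\textbf{P}}^\dagger$ is a singleton and socially optimal, the only work left is to establish \emph{local} asymptotic stability of the fixed point $\tilde{p}^\dagger$ for the continuous-time incentive dynamics, and boundedness of the discrete-time trajectories.

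\textbf{Step 1: reduce to edge variables and write the limiting incentive ODE.} Because the externality on a route is the sum of edge externalities (cf.\ \eqref{eq: ExternalityRoutingRoute}), the toll dynamics \eqref{eq: P_edge_Dyn} live on the edge space $\mathbb{R}^{|E|}$, and the relevant continuous-time system is $\dot{\tilde p}_a(t) = \tilde e_a(\tilde w^\ast(\tilde p(t))) - \tilde p_a(t)$ with $\tilde e_a(\tilde w) = \nabla \ell_a(\tilde w_a)\,\tilde w_a$. Here I use the fact that the equilibrium edge flow $\tilde w^\ast(\tilde p)$ is unique and continuous in $\tilde p$ (stated in the excerpt via \cite{yang2005mathematical}), even though route flows may not be. So (R1) amounts to showing $\tilde p^\dagger$ is a locally asymptotically stable equilibrium of this edge-level ODE.

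\textbf{Step 2: establish local asymptotic stability of the incentive ODE via a Lyapunov/monotonicity argument.} The natural candidate is to linearize $\tilde p \mapsto \tilde e(\tilde w^\ast(\tilde p))$ at $\tilde p^\dagger$ and show the Jacobian $J$ of the map $g(\tilde p) := \tilde e(\tilde w^\ast(\tilde p)) - \tilde p$ has all eigenvalues with negative real part, i.e.\ that $I - \nabla_{\tilde p}[\tilde e(\tilde w^\ast(\tilde p))]$ is positively stable. The chain rule gives $\nabla_{\tilde p}[\tilde e(\tilde w^\ast(\tilde p))] = D_e \cdot \nabla_{\tilde p}\tilde w^\ast(\tilde p)$, where $D_e = \mathrm{diag}\big((\nabla\ell_a(w_a) + w_a \nabla^2\ell_a(w_a))_{a}\big)$ is diagonal with strictly positive entries (using $\ell_a$ convex, strictly increasing). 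The sensitivity $\nabla_{\tilde p}\tilde w^\ast(\tilde p)$ of the Wardrop equilibrium edge flow w.r.t.\ tolls is, by standard variational-inequality sensitivity analysis (Dafermos-type, as already invoked in the paper's footnotes), a \emph{negative semidefinite} matrix — increasing a toll on an edge weakly decreases flow on that edge in an aggregate sense. The nondegeneracy condition \eqref{eq: degenerate} is precisely what is needed to guarantee that the active-constraint set of the equilibrium is locally stable, so that $\tilde w^\ast(\tilde p)$ is in fact differentiable near $\tilde p^\dagger$ with a well-behaved (symmetric negative semidefinite, on the relevant subspace) derivative. Combining: $D_e \nabla_{\tilde p}\tilde w^\ast(\tilde p^\dagger)$ has eigenvalues with non-positive real part (it is a product of a positive diagonal and a negative semidefinite matrix, hence similar to a symmetrized negative semidefinite matrix), so $I - D_e\nabla_{\tilde p}\tilde w^\ast$ has eigenvalues with real part $\geq 1 > 0$; thus $-J$ is positively stable and $\tilde p^\dagger$ is locally asymptotically stable. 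Alternatively, one can avoid eigenvalue bookkeeping entirely by exhibiting $V(\tilde p) = \tfrac12\|\tilde p - \tilde p^\dagger\|^2$ as a local Lyapunov function: $\dot V = \langle \tilde p - \tilde p^\dagger, \tilde e(\tilde w^\ast(\tilde p)) - \tilde e(\tilde w^\ast(\tilde p^\dagger))\rangle - \|\tilde p - \tilde p^\dagger\|^2$, and bounding the cross term using Lipschitz continuity of $\tilde p \mapsto \tilde e(\tilde w^\ast(\tilde p))$ together with the local contraction implied by the sensitivity structure; this is cleaner to state but requires the same monotonicity input. Either way this is the main obstacle — making the Wardrop-equilibrium sensitivity analysis rigorous under \eqref{eq: degenerate}, since equilibrium route flows are nonunique and one must argue entirely at the level of edge flows and active-path structure.

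\textbf{Step 3: verify (R2), boundedness.} Since the feasible route-flow set $\tilde X$ (and hence the edge-flow set $\tilde W$) is compact, the strategy iterates $\tilde x_k$ are automatically bounded. For the tolls, $\tilde e_a(\tilde w)=\nabla\ell_a(\tilde w_a)\tilde w_a$ is a continuous function on the compact set $\tilde W$, hence bounded by some constant $C$; the update \eqref{eq: P_edge_Dyn} is then a convex combination of $\tilde p_{a,k}$ and a quantity bounded by $C$, so by induction $\|\tilde p_k\|_\infty \leq \max(\|\tilde p_0\|_\infty, C)$ for all $k$, giving $\sup_k(\|\tilde x_k\|+\|\tilde p_k\|)<\infty$. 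Then invoke Proposition~\ref{thm: ConvergenceFin} (non-atomic version): (R1) from Step~2 and (R2) from Step~3, together with Assumptions~\ref{assm: StepSizeAssumption} and \ref{assm: ConvergenceStrategy}, yield local convergence of \eqref{eq: PopUpdateX}--\eqref{eq: PopUpdateP} to $(\tilde x^\dagger,\tilde p^\dagger)$, and Proposition~\ref{thm: FixedPointsRouting} identifies this limit as the socially optimal (minimum-total-latency) configuration. I expect Steps~1 and 3 to be routine; essentially all the difficulty is concentrated in the equilibrium-sensitivity / stability argument of Step~2, and condition \eqref{eq: degenerate} is the hypothesis that makes it go through.
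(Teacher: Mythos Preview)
Your overall architecture is the same as the paper's: verify (R1) and (R2) of Proposition~\ref{thm: ConvergenceFin} at the edge level, with (R2) coming from compactness of $\tilde X$ (your convex-combination argument for the toll iterates is in fact more careful than the paper, which just invokes compactness). The difference is in how you handle (R1).

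For (R1) the paper does \emph{not} linearize or use the unweighted norm. It takes the weighted quadratic Lyapunov function $V(\tilde p)=(\tilde p-\tilde p^\dagger)^\top\Delta(\tilde p-\tilde p^\dagger)$ with $\Delta=D_e^{-1}$ evaluated at $\tilde p^\dagger$, i.e.\ $\Delta_{a,a}=(\nabla\ell_a(\tilde w_a^\ast(\tilde p^\dagger))+\tilde w_a^\ast(\tilde p^\dagger)\nabla^2\ell_a(\tilde w_a^\ast(\tilde p^\dagger)))^{-1}$. The point of this weight is the identity $\Delta_{a,a}\,\partial\phi_a(\tilde p^\dagger)/\partial\tilde p_{a'}=\partial\tilde w_a^\ast(\tilde p^\dagger)/\partial\tilde p_{a'}$, so that the cross term in $\dot V$ becomes exactly $(\tilde p-\tilde p^\dagger)^\top\nabla\tilde w^\ast(\tilde p^\dagger)(\tilde p-\tilde p^\dagger)\le 0$ by the monotonicity of the Wardrop edge-flow map. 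Your linearization route (a) is essentially the spectral shadow of this same computation: the Jacobian of the ODE at $\tilde p^\dagger$ is $D_e\nabla\tilde w^\ast-I$, and $\Delta=D_e^{-1}$ is precisely the quadratic Lyapunov matrix certifying its Hurwitz stability. One caveat: your claim that $D_e\nabla\tilde w^\ast$ is ``similar to a symmetrized negative semidefinite matrix'' presumes $\nabla\tilde w^\ast$ is symmetric, which is not established (the paper only uses $z^\top\nabla\tilde w^\ast z\le 0$). The eigenvalue conclusion is nevertheless correct --- if $M+M^\top\preceq 0$ and $D\succ 0$ is diagonal, then any eigenpair $DMv=\lambda v$ gives $\mathrm{Re}(\lambda)\,v^\ast D^{-1}v=\mathrm{Re}(v^\ast Mv)\le 0$ --- but you should supply that argument rather than the similarity claim.

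Your alternative (b) with $V=\tfrac12\|\tilde p-\tilde p^\dagger\|^2$ does \emph{not} go through without the weight: the cross term is governed by $D_e\nabla\tilde w^\ast$, and a positive diagonal times a matrix with negative semidefinite symmetric part need not itself have negative semidefinite symmetric part, so $\dot V\le 0$ can fail. This is exactly why the paper inserts $\Delta=D_e^{-1}$. Drop option (b), tighten the eigenvalue justification in (a), and you have the paper's proof.
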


{{} 
\begin{remark}
There is a subtle distinction between the definition of Nash equilibrium (cf. \eqref{eq: NEPop}) and \eqref{eq: degenerate}.  
The former states that at equilibrium, any route with a positive flow must have the minimum cost.  
In contrast, \eqref{eq: degenerate} further requires that all minimum-cost routes have strictly positive equilibrium flow.  
This regularity condition, commonly used in transportation literature (\cite[Chapter 4]{yang2005mathematical}),  
ensures the differentiability of link flows \(\tilde{w}^\ast(p)\) in the neighborhood of \(\tilde{p}^\dagger\).
\end{remark}

We show Proposition \ref{thm: ConvergenceRouting} by verifying the requirements (R1)-(R2) in Proposition  \ref{thm: ConvergenceFin}. 
(R2) holds due to the fact that \(\tilde{X}\) is compact. Thus, it only remains to verify (R1). 
Towards this goal, we define \(\Delta\in\R^{|\edge|\times|\edge|}\) to be a diagonal matrix such that, for every \(a\in \edge,\)
\begin{align}\label{eq: DeltaMat}
\Delta_{a,a} = (\nabla \lossEdge_a(\pertEdge_a(\optToll)) + \pertEdge_a(\optToll) \nabla^2\lossEdge_a(\pertEdge_a(\optToll)))^{-1}.
\end{align}
Using condition \eqref{eq: degenerate}, we show that \(V(\edgeTolls) = (\edgeTolls-\optToll)^\top \Delta(\edgeTolls-\optToll)\),
acts as a Lyapunov function candidate for the following dynamical system
\begin{align}\label{eq: DynTraffic}
\dot{\tilde{\mathbf{p}}}_a(t) = \tilde{w}^\ast_a(\tilde{\mathbf{p}})\nabla l_a(\tilde{w}^\ast_a(\tilde{\mathbf{p}}))- \tilde{\mathbf{p}}_a,\quad \forall \ a\in \tilde{\mathcal{E}}. 
\end{align}
Detailed proof is provided in Appendix \ref{sssec: ConvergenceRouting}. 
}

\section{Concluding Remarks}\label{sec: Conclusion}
We propose an adaptive incentive mechanism that updates based on agents' externalities, operates independently of their learning rules, and evolves on a slower timescale, forming a two-timescale coupled strategy and incentive dynamics. We show that its fixed point corresponds to an optimal incentive ensuring he Nash equilibrium of the corresponding game achieves social optimality. Additionally, we provide sufficient conditions for convergence of the coupled dynamics and validate our approach in atomic quadratic aggregative games and non-atomic routing games.

There are several interesting directions for future research. One important direction is to verify the convergence conditions in a broader class of games than studied in this paper.  
{Furthermore, it would be valuable to study the design and analysis of externality-based incentive update in scenarios where the social cost depends not only on the agents' strategies but also on the incentive mechanism itself. Additionally, developing adaptive processes for online estimation of game-relevant parameters—necessary for computing externality—by only using agents' strategies would be an interesting research direction.}

\section*{Acknowledgements}
This work is supported by NSF Collaborative Research: Transferable, Hierarchical, Expressive, Optimal, Robust, Interpretable NETworks (THEORINET) under award No. DMS-2031899.

\newpage 
\appendix  
\section{Counter-example.}\label{sec: AppendixCounterexample}
In this section, we present a non-atomic game, where the standard gradient-based incentive design approach would fail. {Specifically, we will show that the gradient of equilibrium strategy with respect to incentive is singular and the equilibrium social cost function is non-convex in the incentive. Furthermore, we show that the fixed points of the gradient-based incentive update is non-unique, and almost all fixed points fail to induce a socially efficient outcome. In contrast, the fixed point of our externality-based incentive update is unique and results in a socially optimal outcome.} 

Consider a non-atomic routing game, comprising of two nodes and two edges connecting them. This network is used by one unit of travelers traveling from the source node \(\mc{S}\) to the destination node \(\mc{D}\). The latency function of two edges are denoted in Figure \ref{fig: SchematicFigure}. 
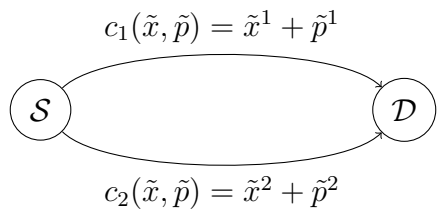
\begin{figure}[h]
\centering 
\begin{tikzpicture}[main/.style = {draw, circle}] 
\node[main] (1) {$\mc{S}$}; 
\node[main] (2) [right= 4cm of 1] {$\mc{D}$}; 
\draw[->] (1) to [out=45,in=135,looseness=0.3] node[midway,above] {$c_1(\tilde{x}, \tilde{p})= \tilde{x}^1+ \tilde{p}^{1}$} (2);
\draw[->] (1) to [out=-45,in=-135,looseness=0.3] node[midway,below] {$c_2(\tilde{x},\tilde{p}) = \tilde{x}^{2} + \tilde{p}^{2}$} (2);
\end{tikzpicture} 
\caption{Two-link routing game.}
\label{fig: SchematicFigure}
\end{figure} 
In this game, the strategy set is $\strategySetPop = \{\strategyPop \in \mathbb{R}^2 : \strategyPop_1+ \strategyPop_2 = 1\}$. The equilibrium congestion levels on the two edges is obtained by computing the minimizer of the following function \cite{yang2005mathematical}: 
\begin{align*}
\potential(\strategyPop,\incentivePop{}{}) = \frac{1}{2}\strategyPop_1^2 + \frac{1}{2} \strategyPop_2^2 + \incentivePop{1}{} \strategyPop_1 + \incentivePop{2}{} \strategyPop_2. 
\end{align*}
Thus, for any toll vector \(\incentivePop{}{}\), the Nash equilibrium \(\xEqPop{}{}(\incentivePop{}{}) = \arg\min_{\strategyPop\in\strategySetPop}\potential(\strategyPop,\incentivePop{}{})\) satisfies  
\(
    \xEqPop{1}{}(\incentivePop{}{}) = \mathcal{P}_{[0,1]}\left( \frac{\incentivePop{2}{} - \incentivePop{1}{} + 1}{2} \right),
    \quad
    \xEqPop{2}{}(\incentivePop{}{}) = \mathcal{P}_{[0,1]}\left( \frac{\incentivePop{1}{} - \incentivePop{2}{} + 1}{2} \right),
\)
where for any scalar \(x \in \mathbb{R}\), \(\mathcal{P}_{[0,1]}(x)\) denotes its projection onto the line segment \([0,1]\).
{The gradient of equilibrium strategy with respect to incentive is a singular matrix for all incentives \(\tilde{p}= (\tilde{p}_1, \tilde{p}_2) \in \mathbb{R}^2\) such that \(|\tilde{p}_1-\tilde{p}_2|>1\). }

The equilibrium social cost function is: 
\begin{align*}
    \socCostPop(\xEqPop{}{}(\incentivePop{}{})) 
    &= \xEqPop{1}{}(\incentivePop{}{}) \lossPop_1( \xEqPop{1}{}(\incentivePop{}{})) +  \xEqPop{2}{}(\incentivePop{}{}) \lossPop_2( \xEqPop{2}{}(\incentivePop{}{}))\\ 
    &=  \begin{cases} \frac{(\incentivePop{1}{}-\incentivePop{2}{})^2 + 1}{2}, &\text{ if } |\incentivePop{1}{} - \incentivePop{2}{}| \leq 1, \\ 1, &\text{otherwise}. \end{cases} 
\end{align*}
{{}  Note that the equilibrium social cost function \(\tilde{\Phi}(\tilde{x}^\ast(\tilde{p}))\) is non-convex in \(\tilde{p}\), which contradicts the assumption commonly adopted in gradient-based incentive learning literature \cite{liu2021inducing, li2023inducing, mojica2022stackelberg}.  Furthermore, the gradient-based update\footnote{Since this function is non-differentiable, it is common to use Clarke's subdifferential to study gradient-based updates \cite{clarke1990optimization}.} for this function takes the following form:
\begin{align*}
    \tilde{p}_{k+1} = \tilde{p}_{k} - \beta_k \partial \Phi(\tilde{x}^\ast(\tilde{p}_k)),
\end{align*}
where   
\begin{align*}
\partial \Phi(\tilde{x}^\ast(\tilde{p})) \in \begin{cases}
        \left\{\begin{bmatrix}
            \tilde{p}_1-\tilde{p}_2 \\ 
            \tilde{p}_2-\tilde{p}_1
        \end{bmatrix}\right\}, & \text{if} \ |\tilde{p}_1-\tilde{p}_2| < 1, \\
       \textsf{conv}\left\{\begin{bmatrix}
            \pm 1 \\ 
            \mp 1
        \end{bmatrix}, \begin{bmatrix}
           0  \\ 
           0
        \end{bmatrix}\right\},  & \text{if} \ \tilde{p}_1-\tilde{p}_2 = \pm 1,
        \\
       \left\{ \begin{bmatrix}
            0 \\ 
            0
        \end{bmatrix} \right\}, & \text{otherwise.}
    \end{cases} 
\end{align*}
Consequently, the set of fixed points for the gradient-based incentive update (i.e., where the gradient is zero) is given by:
\begin{align}\label{eq: Grad_zero}
  \{(\tilde{p}_1,\tilde{p}_2)\in \mathbb{R}^2: |\tilde{p}_1 - \tilde{p}_2| \in \{0\}\cup \{[1,\infty)\}\}.
\end{align}
On the other hand, the set of socially optimal tolls that minimize \(\tilde{\Phi}(\tilde{x}^\ast(\tilde{p}))\) is given by \(\{(\tilde{p}_1,\tilde{p}_2)\in \mathbb{R}^2: \tilde{p}_1 = \tilde{p}_2\}\), which has measure zero within the set of fixed points of the gradient-based update \eqref{eq: Grad_zero}.

In contrast, the fixed point of our externality-based incentive mechanism (cf. \eqref{eq: PDagger}) is unique \(\tilde{p}^\dagger_1=\tilde{p}^\dagger_2= 1/2\) and minimizes the social cost. 
}
{{} 
\section{Proofs and Additional Results on Aggregative Game in Section \ref{sec:aggregate}}
In this section, we present the proofs of Propositions \ref{thm: UniquenessAggregative} and \ref{prop: PropGlobalAggregative}. Additionally, we introduce a generalization of the results in Section \ref{sec:aggregate}.
\subsection{Proof of Proposition \ref{thm: UniquenessAggregative}}\label{sssec: ProofAggregative}
First, we show that \(\xEq(p) = -M^{-1}p\) for any \(p\in\R^{|\playerSet|}\). Note that the cost function \(c_i(x_i,x_{-i},p)\) is strongly convex in \(x_i\) and that the strategy space \(X_i\) is unconstrained, ensuring that the game is strongly convex game. Therefore, \(x^\ast(p)\) is Nash equilibrium if and only if
\(
\nabla_{x_i}c_i(x^\ast(p),p) = 0,\) for every 
\(i\in\playerSet.
\)  
Consequently, using \eqref{eq: CostAggregate}, we obtain  
\begin{align}\label{eq: FOCAggregative}
        q_i x_i^\ast(p) + \alpha (Ax^\ast(p))_i + p_i = 0, \quad \forall i \in \playerSet.
\end{align}  
Stacking \eqref{eq: FOCAggregative} in vector form yields  \(
M x^{\ast}(p) = -p\).

Next, we show that \(P^\dagger\) is a singleton set. Note that 
\begin{align}\label{eq: PdaggerReformulate}
    P^\dagger = \{p^{\dagger}\in \R^{|\playerSet|}: x^\ast_i(p^\dagger) = \zeta_i,~  \ \forall \ i\in \playerSet\}.
\end{align}
The proof concludes by noting that \(x^\ast(p) = -M^{-1}p\). 

\subsection{Proof of Proposition \ref{prop: PropGlobalAggregative}}\label{sssec: AggregativeDynamicsProof}
Here, we verify the requirements (R1') and (R2) of Proposition \ref{thm: ConvergenceFin}. We start with verifying (R1'). We define a Lyapunov function candidate $V(p) = (p-p^\dagger)^{\top} M^{-\top} (p-p^\dagger)$ for the dynamical system \eqref{subeq:p_fin12}.
  Note that \(V(\pEqFin{}) = 0\) and \(V(p)>0\) for all \(p\neq \pEqFin{}\). Next, we show that \(\nabla V(p)^\top(e(x^\ast(p))-p) < 0,\) for every \(p\neq p^\dagger\).  Indeed, 
    \begin{align*}
        &\nabla V(p)^\top(e(x^\ast(p))-p) = 2(p-p^\dagger)^\top M^{-\top}(x^\ast(p)-\zeta) \\
        &= -2(x^\ast(p)-x^\ast(p^\dagger))(x^\ast(p)-x^\ast(p^\dagger)) < 0, \quad \forall \ p \neq p^\dagger.
    \end{align*}
    This shows that \(p^\dagger\) is globally asymptotically stable for \eqref{subeq:p_fin12}. 

Next, for requirement (R2) of Proposition \ref{thm: ConvergenceFin}, we verify sufficient conditions for the boundedness of iterates in two-timescale approximation theory \cite{lakshminarayanan2017stability}. In particular, using \cite[Theorem 10]{lakshminarayanan2017stability}, it is sufficient to show that the following two conditions are satisfied:  
\begin{itemize}
    \item[(a)] The function \(
    f_c(x,p) := \frac{1}{c}(f(cx,cp)-cx)\)
    satisfies \( f_{c} \rightarrow f_\infty \) as \( c \rightarrow \infty \), uniformly on compact sets, for some \( f_{\infty} \).  Also, for every incentive vector \( p \in \mathbb{R}^{|\playerSet|} \), \( x^\ast(p) \) is the globally asymptotically stable fixed point of the following continuous-time dynamical system:  
    \begin{align}\label{eq: scaledFast}
        \dot{x}(t) = f_{\infty}(x(t),p).
    \end{align}
    Furthermore, \( x^\ast(0) = 0 \), and the system \( \dot{x}(t) = f_{\infty}(x(t),0) \) has the origin as its globally asymptotically stable fixed point.  
    \item[(b)] The function  \(
    h_c(p) := \frac{1}{c}(e(cx^\ast(p))-cp)
    \)
    satisfies \( h_c \rightarrow h_{\infty} \) as \( c \rightarrow \infty \), uniformly on compact sets, for some \( h_{\infty} \).  
    Also, the origin is a globally asymptotically stable fixed point of the dynamical system:  
    \begin{align}\label{eq: scaledSlow}
        \dot{p}(t) = h_{\infty}(p(t)).
    \end{align}
\end{itemize}  
Condition (a) is satisfied due to Proposition \ref{prop: PropGlobalAggregative}-(ii) and the fact that \( x^\ast(p) = -M^{-1}p \) in the atomic aggregative game.  
Condition (b) holds since \( h_{\infty}(p) = -(Q+\alpha A)p \). Moreover, since \( Q+\alpha A \) is symmetric positive definite, the origin is a globally asymptotically stable fixed point of \eqref{eq: scaledSlow}.  
}

{{} \subsection{Additional Results}\label{sssec: AdditionalResults}
Here, we consider a more general social cost function than \eqref{eq: AggregativeSocialCost}. Specifically, we consider  
\begin{align}\label{eq: CostFuncGen}
    \Phi(x) = \sum_{i\in \playerSet} h_i(x_i),
\end{align}
where, for every \( i \in \playerSet \), the function \( h_i: \mathbb{R} \to \mathbb{R} \) satisfies the following assumption:
\begin{assm}\label{assm: CostStructureAggregative}
For every \(i\in \playerSet\), \(h_i(\cdot)\) is a strictly convex function with a Lipschitz continuous gradient.  
Furthermore, we assume the existence of \(y^\dagger\in \mathbb{R}^{|\playerSet|}\) such that \(\nabla h_i(y_i^\dagger) = 0\) for every \(i\in \playerSet\).
\end{assm}

\begin{prop}\label{thm: UniquenessAggregativeGen}
Suppose that Assumption \ref{assm: CostStructureAggregative} holds and \(M := Q + \alpha A\) is invertible. Then, the Nash equilibrium \(\xEq(p) = M^{-1}p\) for any \(p\in \mathbb{R}^{|\playerSet|}\). Furthermore, the set \(P^\dagger\) is singleton. 
\end{prop}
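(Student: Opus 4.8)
The plan is to split the argument along the two assertions: the Nash-equilibrium part reduces to the quadratic case already treated in Proposition \ref{thm: UniquenessAggregative}, and the singleton part is handled by exploiting that the social cost $\Phi$ is now separable.

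First, I would note that Assumption \ref{assm: CostStructureAggregative} alters only the \emph{operator's} social cost; the players' total costs $c_i(x,p)=\ell_i(x)+p_i x_i$, with $\ell_i$ as in \eqref{eq: CostAggregate}, are untouched. Hence the Nash-equilibrium computation in the proof of Proposition \ref{thm: UniquenessAggregative} carries over verbatim: each $c_i(\cdot,x_{-i},p)$ is strongly convex in $x_i$, the strategy set $X_i=\mathbb{R}$ is unconstrained, so the stationarity conditions $\nabla_{x_i}c_i(x^\ast(p),p)=0$ are necessary and sufficient; stacking them gives $M x^\ast(p)=-p$, and invertibility of $M$ yields the closed-form expression for $x^\ast(p)$ (which is in particular affine in $p$, hence unique and Lipschitz). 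This step uses none of Assumption \ref{assm: CostStructureAggregative}.

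Second, for the fixed-point set $P^\dagger$ in \eqref{eq: PDagger}, I would substitute the closed form $x^\ast(p)=-M^{-1}p$ into the defining equation $e(x^\ast(p))=p$. From \eqref{eq: ExterFin} together with $A_{ii}=0$ one has $\nabla_{x_i}\ell_i(x)=(Mx)_i$, so the externality in this game is $e_i(x)=\nabla h_i(x_i)-(Mx)_i$, i.e. $e(x)=\nabla\Phi(x)-Mx$. Since $M x^\ast(p)=-p$, the fixed-point equation collapses to $\nabla\Phi(x^\ast(p))=0$, that is, $\nabla h_i(x_i^\ast(p))=0$ for every $i\in\playerSet$.

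Third, I would conclude via strict convexity: by Assumption \ref{assm: CostStructureAggregative} there is a $y^\dagger$ with $\nabla h_i(y_i^\dagger)=0$ for all $i$, and strict convexity makes each $\nabla h_i$ strictly increasing, hence injective, so $y^\dagger$ is the \emph{unique} vector annihilating all the $\nabla h_i$. Therefore the fixed-point condition forces $x^\ast(p)=y^\dagger$, and $x^\ast(p)=-M^{-1}p$ then pins down $p=-My^\dagger$ uniquely, so $P^\dagger=\{-My^\dagger\}$ is a non-empty singleton (and since $y^\dagger$ is the unique minimizer of the strictly convex separable $\Phi$, this $p^\dagger$ induces the socially optimal outcome, consistent with Proposition \ref{prop: Alignment}). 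The only point needing care is that here $X=\mathbb{R}^{|\playerSet|}$ is not compact, so Proposition \ref{prop: Alignment} cannot be quoted for non-emptiness of $P^\dagger$; non-emptiness instead comes directly from the assumed existence of $y^\dagger$. Beyond this, the argument is a short computation, so I do not expect a substantial obstacle.
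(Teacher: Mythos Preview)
Your proposal is correct and follows essentially the same route as the paper: both defer the Nash-equilibrium computation to the quadratic case (Proposition~\ref{thm: UniquenessAggregative}), then compute $e_i(x^\ast(p))=\nabla h_i(x_i^\ast(p))+p_i$ so that the fixed-point condition $e(x^\ast(p))=p$ becomes $\nabla h_i(x_i^\ast(p))=0$, and conclude via the uniqueness of $y^\dagger$ from strict convexity that $p^\dagger=-My^\dagger$. Your explicit remark that non-emptiness of $P^\dagger$ here does not follow from Proposition~\ref{prop: Alignment} (since $X$ is non-compact) but rather from the assumed existence of $y^\dagger$ is a nice clarification that the paper leaves implicit.
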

\begin{proof}
The proof that \( x^{\ast}(p) = -M^{-1}p \) follows exactly as in Proposition \ref{prop: PropGlobalAggregative}. 
Next, we show that \( P^\dagger \) is a singleton. Using \eqref{eq: ExterFin}, \eqref{eq: CostAggregate}, and \eqref{eq: CostFuncGen}, the externality is given by  
\begin{align}\label{eq: ExternalityAggregative}
    e_i(x) = \nabla h_i(x_i) - q_i x_i - \alpha \sum_{j\in \playerSet} A_{ij}x_j, \quad \forall i \in \playerSet.  
\end{align}
Combining \eqref{eq: FOCAggregative} and \eqref{eq: ExternalityAggregative}, we obtain  
\begin{align}\label{eq: Externality_AggGame}
    e_i(x^\ast(p)) = \nabla h_i(x_i^\ast(p)) + p_i, \quad \forall i \in \playerSet.    
\end{align}
Consequently, using \eqref{eq: PDagger}, we have  
\begin{align}\label{eq: PdaggerReformulate}
    P^\dagger = \{p^{\dagger} \in \mathbb{R}^{|\playerSet|} : \nabla h_i(x^\ast_i(p^\dagger)) = 0,~  \forall i \in \playerSet\}.
\end{align}
Since \( h_i \) is strictly convex, it follows from Assumption \ref{assm: CostStructureAggregative} that there exists a unique \( y^\dagger \) such that \( \nabla h_i(y_i^\dagger) = 0 \) for every \( i \in \playerSet \).  
Therefore, for every \( p^\dagger \in P^\dagger \), it must hold that \( x^\ast(p^\dagger) = y^\dagger \).  
Since \( x^\ast(p) = -M^{-1}p \), it follows that \( p^\dagger = -My^\dagger \), establishing the uniqueness of \( P^\dagger \). 
\end{proof}

{{} Next, we provide sufficient conditions to ensure the convergence of \eqref{eq: FinUpdateX}-\eqref{eq: FinUpdateP} to the fixed points. In particular, we present two sets of conditions: the first set establishes global convergence guarantees, while the second set ensures local convergence guarantees.
}
{{} 
\begin{prop}\label{prop: PropGlobalAggregativeGen}
Consider the updates \eqref{eq: FinUpdateX}-\eqref{eq: FinUpdateP} associated with the aggregative game \( G \). Suppose that Assumptions \ref{assm: StepSizeAssumption}, \ref{assm: ConvergenceStrategy}, and \ref{assm: CostStructureAggregative} hold, and that  
\(\sup_{k\in\mathbb{N}}(\|x_k\|+\|p_k\|)<+\infty\). Additionally,  
\begin{itemize}
    \item[(i)] If \( M := Q+\alpha A \) is symmetric positive definite, then the discrete-time updates \eqref{eq: FinUpdateX} and \eqref{eq: FinUpdateP} globally converge to the fixed point \((\socOptFin, \peq)\) in the sense of Definition \ref{def: ConvergenceCriterion}.  
    \item[(ii)] If \( M := Q+\alpha A \) is invertible with non-negative entries, \( M^{-1} \) has strictly negative off-diagonal entries, and there exists a vector \( y^\dagger \in \mathbb{R}^{|\playerSet|}_- \) such that \( \nabla h_i(y^\dagger_i) = 0 \) for every \( i \in \playerSet \)
    \footnote{A similar statement can be obtained for the case when \( y^\dagger \in \mathbb{R}^{|\playerSet|}_+ \), but we omit it for brevity.}  
    then the discrete-time updates \eqref{eq: FinUpdateX} and \eqref{eq: FinUpdateP} locally converge to the fixed point \((\socOptFin, \peq)\) in the sense of Definition \ref{def: ConvergenceCriterion}.     
\end{itemize}

\end{prop}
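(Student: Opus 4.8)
The plan is to verify the sufficient conditions (R1')/(R1) and (R2) of Proposition~\ref{thm: ConvergenceFin} in each of the two regimes. Condition (R2) is assumed as a hypothesis here ($\sup_k(\|x_k\|+\|p_k\|)<\infty$), so the entire burden is on establishing the appropriate stability of the reduced incentive dynamics
\begin{align*}
\dot{\incentiveFinSep}(t) = e(x^\ast(\incentiveFinSep(t))) - \incentiveFinSep(t),
\end{align*}
which, using $x^\ast(p) = -M^{-1}p$ and the externality expression \eqref{eq: Externality_AggGame}, namely $e_i(x^\ast(p)) = \nabla h_i(x^\ast_i(p)) + p_i$, simplifies to the \emph{decoupled-looking but actually coupled} system $\dot p_i(t) = \nabla h_i\big((-M^{-1}p(t))_i\big)$ for each $i\in\playerSet$. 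The key observation is that $p^\dagger = -My^\dagger$ is the unique equilibrium (Proposition~\ref{thm: UniquenessAggregativeGen}), since $\nabla h_i(y_i)=0$ has the unique root $y^\dagger_i$ by strict convexity.

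For part (i), I would reuse verbatim the Lyapunov argument from the proof of Proposition~\ref{prop: PropGlobalAggregative}: take $V(p) = (p-p^\dagger)^\top M^{-\top}(p-p^\dagger)$, which is positive definite and radially unbounded because $M$ symmetric positive definite implies $M^{-1}$ is too. Then
\begin{align*}
\nabla V(p)^\top\big(e(x^\ast(p))-p\big) &= 2(p-p^\dagger)^\top M^{-\top}\big(\nabla h(x^\ast(p))\big) \\
&= -2(x^\ast(p)-y^\dagger)^\top\big(\nabla h(x^\ast(p))-\nabla h(y^\dagger)\big) < 0
\end{align*}
for all $p\neq p^\dagger$, where the last step uses $x^\ast(p)-y^\dagger = -M^{-1}(p-p^\dagger)$, symmetry of $M^{-1}$, $\nabla h(y^\dagger)=0$, and strict monotonicity of $\nabla h_i$ (from strict convexity of $h_i$, Assumption~\ref{assm: CostStructureAggregative}). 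Hence $p^\dagger$ is globally asymptotically stable, (R1') holds, and global convergence follows from Proposition~\ref{thm: ConvergenceFin}.

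For part (ii), since $M$ is only invertible (not symmetric positive definite), a quadratic Lyapunov function is not available, so I would instead invoke the \emph{cooperative systems} route flagged after Proposition~\ref{thm: ConvergenceFin} (Lemma~\ref{lem: SufficientConditionsPropMain}, via Hirsch's monotone systems theory \cite{hirsch1985systems}). The Jacobian of the vector field $g(p) := (\nabla h_i((-M^{-1}p)_i))_{i\in\playerSet}$ is $\partial g(p)/\partial p = -\,\textsf{diag}\big((\nabla^2 h_i(x^\ast_i(p)))_i\big)\,M^{-1}$; since each $\nabla^2 h_i > 0$ (strict convexity) and $M^{-1}$ has strictly negative off-diagonal entries, the off-diagonal entries of this Jacobian are strictly positive, so the system \eqref{subeq:p_fin11} is cooperative (and irreducible if $M^{-1}$ is, which the hypotheses effectively ensure). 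For a cooperative irreducible system with a unique equilibrium, trajectories from an open neighborhood converge to $p^\dagger$, giving the local asymptotic stability required for (R1). The sign condition $M\geq 0$ entrywise together with $y^\dagger\in\mathbb{R}^{|\playerSet|}_-$ and $p^\dagger = -My^\dagger$ is used to place $p^\dagger$ and a neighborhood inside the region where the monotone-convergence argument applies (consistent with the ``$y^\dagger\in\mathbb{R}^{|\playerSet|}_+$'' footnote variant). Combined with the assumed boundedness (R2), Proposition~\ref{thm: ConvergenceFin} then yields local convergence to $(\socOptFin,\peq)$.

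The main obstacle is part (ii): verifying the precise hypotheses under which the cooperative-systems machinery delivers a \emph{basin of attraction} (not merely linearized stability) around $p^\dagger$, and checking irreducibility and the sign/positivity bookkeeping that guarantee the relevant orthant or order interval is forward-invariant. The part (i) argument is essentially a transcription of the already-proven Proposition~\ref{prop: PropGlobalAggregative} with $\nabla h$ in place of the identity, so it is routine; everything delicate lives in reconciling the structural matrix assumptions on $M$ and $M^{-1}$ with the monotone dynamical systems criterion of Lemma~\ref{lem: SufficientConditionsPropMain}.
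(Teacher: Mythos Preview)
Your proposal for part (i) is correct and matches the paper exactly: the same Lyapunov function $V(p)=(p-p^\dagger)^\top M^{-\top}(p-p^\dagger)$ and the same monotonicity computation using $\nabla h(y^\dagger)=0$ and strict convexity of the $h_i$.

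For part (ii) you have the right overall route---cooperative systems via Lemma~\ref{lem: SufficientConditionsPropMain}---and your Jacobian computation showing $\partial e_i(x^\ast(p))/\partial p_j=\nabla^2 h_i(x_i^\ast(p))\,(-M^{-1})_{ij}>0$ for $i\neq j$ is exactly what the paper does. However, cooperativity alone does not give a basin of attraction; Lemma~\ref{lem: SufficientConditionsPropMain} (condition (C1)-(i)) additionally requires (a) $e_i(x^\ast(0))\ge 0$, (b) $p^\dagger\in\mathbb{R}_+^{|\playerSet|}$, and (c) for every $p\in\mathbb{R}_+^{|\playerSet|}$ the existence of a dominating $p'$ with $e_i(x^\ast(p'))-p'_i\le 0$. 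You flag ``sign/positivity bookkeeping'' but do not carry it out. The paper verifies all three: (a) follows from $e_i(x^\ast(0))=\nabla h_i(0)\ge 0$ since $\nabla h_i$ is increasing with root $y_i^\dagger\le 0$; (b) follows from $p^\dagger=-My^\dagger$ with $M$ entrywise nonnegative and $y^\dagger\le 0$; and, crucially, (c) is obtained by the explicit one-parameter family $p^\epsilon:=-(1+\epsilon)My^\dagger$, for which $x^\ast(p^\epsilon)=(1+\epsilon)y^\dagger$ and hence $e_i(x^\ast(p^\epsilon))-p^\epsilon_i=\nabla h_i((1+\epsilon)y_i^\dagger)\le 0$ by monotonicity of $\nabla h_i$. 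This super-solution construction is the missing ingredient in your sketch.

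Finally, your appeal to \emph{irreducibility} is a red herring: the hypotheses do not guarantee it (nothing prevents some off-diagonal entries of $M^{-1}$ from being zero in variants, and the lemma as stated does not require it), and the paper's argument never invokes it. Drop that assumption and instead close the gap by verifying (C1)-(i) as above.
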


\medskip
Propositions \ref{prop: PropGlobalAggregativeGen}-(i) and \ref{prop: PropGlobalAggregative} are related but differ in two key aspects. First, the social cost function in Proposition \ref{prop: PropGlobalAggregative} is a special case of the more general function in \eqref{eq: CostFuncGen}. Second, Proposition \ref{prop: PropGlobalAggregativeGen}-(i) directly assumes boundedness of iterates, \(\sup_{k\in\mathbb{N}}(\|x_k\|+\|p_k\|)<+\infty\), whereas Proposition \ref{prop: PropGlobalAggregative} instead relies on the global convergence of the limiting dynamical system associated with strategy updates (cf. \eqref{eq: scaledLimitingSys}).  
The simpler social cost function \eqref{eq: AggregativeSocialCost} in Proposition \ref{prop: PropGlobalAggregative} allows us to use stability results from two-timescale approximation theory \cite[Theorem 10]{lakshminarayanan2017stability} to establish boundedness. Extending this approach to Proposition \ref{prop: PropGlobalAggregativeGen} would require imposing global convergence of suitably defined limiting dynamical systems (cf. \eqref{eq: scaledFast}-\eqref{eq: scaledSlow}). To maintain clarity, we impose \(\sup_{k\in\mathbb{N}}(\|x_k\|+\|p_k\|)<+\infty\) directly in Proposition \ref{prop: PropGlobalAggregativeGen}.

The conditions imposed on the matrix \(M\) in Proposition \ref{prop: PropGlobalAggregativeGen}-(i) and (ii) are not directly comparable; neither necessarily implies the other \footnote{
 For example, consider aggregative games with parameters \((Q_1, A_1)\) and \((Q_2, A_2)\) such that \(M_1 = Q_1 + \alpha A_1\), \(M_2 = Q_2 + \alpha A_2\) and 
\[
M_1 = \begin{bmatrix}
    1 & 0.1 \\ 
    1 & 1
\end{bmatrix}, 
\quad 
M_2 = \begin{bmatrix}
    1 & -0.1 \\ 
    -0.1 & 1
\end{bmatrix}.
\]
The matrix \(M_1\) satisfies the conditions in Proposition \ref{prop: PropGlobalAggregativeGen}-(ii) but does not satisfy the conditions in Proposition \ref{prop: PropGlobalAggregativeGen}-(i). On the other hand, the matrix \(M_2\) satisfies the conditions in Proposition \ref{prop: PropGlobalAggregativeGen}-(i) but does not satisfy the conditions in Proposition \ref{prop: PropGlobalAggregativeGen}-(ii).}.
}

\medskip

}

\textit{Proof of Proposition \ref{prop: PropGlobalAggregativeGen}:}
 We prove Proposition \ref{prop: PropGlobalAggregativeGen}(i)-(ii) in order. 
\begin{itemize}
\item[(a)] {{} Proposition \ref{prop: PropGlobalAggregativeGen}-(i) follows by verifying the requirements (R1')-(R2) of Proposition \ref{thm: ConvergenceFin}. We only need to verify (R1') as (R2) is satisfied due to the assumption that \(\sup_{k}\|x_k\| + \|p_k\| < \infty\). We define a Lyapunov function candidate $V(p) = (p-p^\dagger)^{\top} M^{-\top} (p-p^\dagger)$ for the dynamical system \eqref{subeq:p_fin12}.  Note that \(V(\pEqFin{}) = 0\) and \(V(p) > 0\) for all \(p \neq \pEqFin{}\). Next, we show that \(\nabla V(p)^\top (e(x^\ast(p)) - p) < 0\) for every \(p \neq p^\dagger\).}
  Indeed, 
    \begin{align*}
        &\nabla V(p)^\top(e(x^\ast(p))-p) = 2(p-p^\dagger)^\top M^{-\top}\nabla h(x^\ast(p)) \\
        &= -2(x^\ast(p)-x^\ast(p^\dagger))\nabla h(x^\ast(p))          \\ 
        &\stackrel{\eqref{eq: PdaggerReformulate}}{=} -2(x^\ast(p)-x^\ast(p^\dagger))\left(\nabla h(x^\ast(p)) - \nabla h(x^\ast(p^\dagger))\right) \\ 
         &= -2(x^\ast(p)-x^\ast(p^\dagger))\left(\nabla h(x^\ast(p)) - \nabla h(x^\ast(p^\dagger))\right)  \\ 
         &< 0, \quad \forall \ p \neq p^\dagger, 
    \end{align*}
where the last equality follows from the strict convexity of \(h_i\) for each \(i \in \playerSet\), completing the proof.

\item[(b)] {{} Proposition \ref{prop: PropGlobalAggregativeGen}-(ii) follows by verifying conditions (R1) and (R2) of Proposition \ref{thm: ConvergenceFin}. Given that \(\sup_{k}\|x_k\| + \|p_k\| < \infty\), it suffices to verify (R1). This follows since condition (C1) in Lemma \ref{lem: SufficientConditionsPropMain} holds under Proposition \ref{prop: PropGlobalAggregative}-(ii) and Assumption \ref{assm: CostStructureAggregative}.  
}

First, we show that for \( i, j \in \playerSet \) with \( i \neq j \), it holds that \( \frac{\partial e_i(x^\ast(p))}{\partial p_j} > 0 \). Indeed,  
\begin{align*}
    \frac{\partial e_i(x^\ast(p))}{\partial p_j} &= \nabla^2 h_i(x_i^\ast(p))\frac{\partial x_i^{\ast}(p)}{\partial p_j} \\  
    &= \nabla^2 h_i(x_i^\ast(p))(-M^{-1})_{ij} > 0,  
\end{align*}
where the inequality follows from the strict convexity of \( h_i \) and the fact that \( (M^{-1})_{ij} < 0 \).  
Second, we show that condition (C1)-(i) in Lemma \ref{lem: SufficientConditionsPropMain} holds. 
First, we establish that \( e_i(x^\ast(0)) \geq 0 \) for every \( i \in \playerSet \). From \eqref{eq: Externality_AggGame}, we note that \( e_i(x^\ast(0)) = \nabla h_i(0) \) for every \( i \in \playerSet \). Therefore, it suffices to show that \( \nabla h_i(0) \geq 0 \) for all \( i \in \playerSet \). 
By Assumption \ref{assm: CostStructureAggregative}, \( \nabla h_i(\cdot) \) is strictly increasing, and for each \( i \in \playerSet \), there exists a unique \( y_i^\dagger \leq 0 \) such that \( \nabla h_i(y_i^\dagger) = 0 \). This implies that \( \nabla h_i(0) \geq 0 \), for every \(i\in \playerSet\). 
Next, we verify that \( p^\dagger \in \mathbb{R}_+^{|\playerSet|} \). From Proposition \ref{thm: UniquenessAggregativeGen}, \( p^\dagger = -My^\dagger \). Since \( M \) has non-negative entries and \( y^\dagger \in \mathbb{R}_-^{|\playerSet|} \), it follows that \( p^\dagger \in \mathbb{R}_+^{|\playerSet|} \).
Finally, we show the other condition in (C1)-(i) in Lemma \ref{lem: SufficientConditionsPropMain}, which requires that for any \( p \in \mathbb{R}_{+}^{|\playerSet|} \), there exists \( p' \in \mathbb{R}_{+}^{|\playerSet|} \) such that for every \( i \in \playerSet \), \( p'_i > p_i \) and \( e_i(x^\ast(p')) - p_i' \leq 0 \), for all \( i \in \playerSet \).
To show this, we define \( p^\epsilon = -(1+\epsilon)My^\dagger \) for every \( \epsilon > 0 \). Note that \(p^\epsilon \in \mathbb{R}_+^{|\playerSet|}\) and for any \(p\in \mathbb{R}^{|\playerSet|}_+\), we can select \(\epsilon > 0\) such that \(p^\epsilon_i > p_i\) for every \(i\in \playerSet\). Therefore, we show that for every \( \epsilon > 0\),  
\begin{align}\label{eq: e_diff}
e_i(x^\ast(p^\epsilon)) - p^\epsilon \leq 0, \quad \forall i \in \playerSet.
\end{align}
From \eqref{eq: Externality_AggGame}, we note that  \(
e_i(x^\ast(p^\epsilon)) - p^\epsilon = \nabla h_i(x^\ast_i(p^\epsilon)),
\) for every \(i\in \playerSet.\)
Therefore, to show \eqref{eq: e_diff}, it is sufficient to show that
\begin{align}\label{eq: h_diff}
\nabla h_i(x^\ast_i(p^\epsilon)) \leq 0, \quad \forall i \in \playerSet, \epsilon > 0.
\end{align}
Indeed, for every \(i\in \playerSet\) and \(\epsilon> 0\), 
\begin{align*}
0 &< (\nabla h_i(x_i^\ast({p}^\epsilon)) - \nabla h_i(y^\dagger_i))(x_i^\ast(p^\epsilon)-y^\dagger_i) \\ 
&= \nabla h_i(x_i^\ast(p^\epsilon))\epsilon y^\dagger_i,
\end{align*}
where we note that \( x^\ast(p^\epsilon) = (1+\epsilon)y^\dagger \). To conclude, 
\eqref{eq: h_diff} follows because \(y^\dagger_i \leq 0\) and \(\epsilon > 0\).
\end{itemize}

\section{Proofs of Results in Section \ref{sec:routing}}
\subsection{Proof of Proposition \ref{thm: FixedPointsRouting}.}\label{sssec: RoutingFixedPoint}
First, we show that \(\tilde{\textbf{P}}^\dagger\) is non-empty. This can be shown analogously to the proof of existence in Proposition \ref{prop: Alignment} by using the Schauder fixed-point theorem and the continuity of the function \(\tilde{w}^\ast(\cdot)\). We omit the details of this proof for the sake of brevity.

Next, we show that any \(\pEqFin{}\in \tilde{\textbf{P}}^\dagger\) aligns the Nash equilibrium with social optimality, i.e. \(\tilde{w}(\peq) = w^\dagger\). 
For any \(\pEqFin{}\in \tilde{\textbf{P}}^\dagger\), we have \(\optToll_a = \pertEdge_a(\optToll)\nabla \lossEdge_a(\pertEdge_a(\optToll))\) for every \(a\in \edge\). This implies, for every \(a\in \edge,\)
\begin{align}\label{eq: EqualTraffic}
\frac{\partial}{\partial \tilde{w}_a} \left(\tilde{w}_a(\tilde{p}^\dagger)\lossEdge_a(\tilde{w}_a(\tilde{p}^\dagger)) \right) = \lossEdge_a(\tilde{w}_a(\tilde{p}^\dagger)) + \tilde{p}^\dagger_a.
\end{align}

Note that for any arbitrary edge toll \(\edgeTolls\in \R^{|\edge|}\), \(\tilde{w}^\ast(\edgeTolls)\) is the unique solution to the following strictly convex optimization problem \cite{sandholm2010population}. 
\begin{equation}\label{eq: StochasticPerturbedEq}
\begin{aligned}
    &\min_{\tilde{w}\in\tilde{W}} &&\tilde{T}(\tilde{w}) = \sum_{a\in \edge} \int_{0}^{\flowEdge_a}\lossEdge_a(\tau)\,d\tau + \sum_{a\in \edge}\edgeTolls_a\flowEdge_a.
\end{aligned}
\end{equation}
Therefore, \(\tilde{w}^\ast(\tilde{p})\) is a Nash equilibrium if and only if 
\begin{align}\label{eq: VarEqnon-atomicTraffic}
\sum_{a\in \edge} \Big( \lossEdge_a\big(\tilde{w}_a(\tilde{p})+\tilde{p}_a\big)\big(\tilde{w}_a-\tilde{w}_a(\tilde{p})\big) \Big) \geq 0,  ~\forall \ \tilde{w} \in \tilde{W}. 
\end{align}
Combining \eqref{eq: EqualTraffic} and \eqref{eq: VarEqnon-atomicTraffic}, we conclude that for every \(\tilde{w} \in \tilde{W},\)
\begin{align}\label{eq: VarEqnon-atomicSocCostTraffic}
\sum_{a\in\edge}\frac{\partial}{\partial \tilde{w}_a}(\tilde{w}_a(\tilde{p}^\dagger)\lossEdge_a(\tilde{w}_a(\tilde{p}^\dagger))) (\tilde{w}_a-\tilde{w}_a^\ast(\tilde{p}^\dagger)) \geq 0. 
\end{align}
Further, from the first-order conditions of optimality for the social cost function, we know that \(\tilde{x}^\dagger\) is socially optimal if and only if, for every \(\tilde{x}\in \strategySetPop,\)
\begin{align}\label{eq: SocCostTrafficDer}
    \sum_{i\in \pop}\sum_{j\in \routes_i} \frac{\partial \Phi(\tilde{x}^\dagger)}{\partial \tilde{x}_i^j} (\tilde{x}_i^j-\tilde{x}_i^{\dagger j}) \geq 0. 
\end{align}
Using Lemma \ref{lem: DerPotTrafficExpression} in Appendix \ref{sec: AppProofs}, we can equivalently write \eqref{eq: SocCostTrafficDer} in terms of edge flows as follows 
\begin{align}\label{eq: SocCostFiniteVInon-atomicTraffic}
\sum_{a\in\edge}\frac{\partial}{\partial \tilde{w}_a}(\tilde{w}_a^\dagger\lossEdge_a(\tilde{w}_a^\dagger)) (\tilde{w}_a-\tilde{w}_a^\dagger) \geq 0  \quad \forall \ \tilde{w} \in \tilde{W}, 
\end{align}
where \(w^\dagger\) is the edge flow corresponding to the route flow \(x^\dagger\). 
Comparing \eqref{eq: VarEqnon-atomicSocCostTraffic} with \eqref{eq: SocCostFiniteVInon-atomicTraffic}, we note that \(\tilde{w}^\ast(p^\dagger)\) is the minimizer of social cost function \(\socCostPop\). Therefore, \(\tilde{w}^\ast(\tilde{p}^\dagger) = \tilde{w}^\dagger\). 

The proof that \(\tilde{\textbf{P}}^\dagger\) is a singleton follows by contradiction, which is analogous to that in Proposition \ref{prop: Alignment}. We omit the details for the sake of brevity.

\subsection{Proof of Proposition \ref{thm: ConvergenceRouting}}\label{sssec: ConvergenceRouting}
{{} The proof follows by verifying the requirements (R1)-(R2) in Proposition \ref{thm: ConvergenceFin}.  
Requirement (R2) holds since the strategy space is a compact set.  
It suffices to show that requirement (R1) holds.}  
Towards this goal, we define a Lyapunov function candidate  \(
V(\edgeTolls) = (\edgeTolls - \optToll)^\top \Delta (\edgeTolls - \optToll)\) 
for the dynamical system \eqref{eq: DynTraffic}, where \(\Delta \in \mathbb{R}^{|\edge| \times |\edge|}\)  
is a diagonal matrix defined in \eqref{eq: DeltaMat}.  
Due to the strict monotonicity and convexity of \(\lossEdge_a(\cdot)\),  
it follows that \(\Delta_{a,a} > 0\) for every \(a \in \edge\). Consequently, the Lyapunov function candidate is positive definite.

We show that there exists a positive scalar \(r\) such that for any \(\edgeTolls \in \mathcal{B}_r(\optToll)\), the following holds:
\begin{align}\label{eq: CondCheckRouting}
\sum_{a \in \edge} \nabla_{\edgeTolls_a} V(\edgeTolls)^\top \left(\pertEdge_a(\edgeTolls) \nabla \lossEdge_a(\pertEdge_a(\edgeTolls)) - \edgeTolls_a \right) < -2V(\edgeTolls).
\end{align}
Indeed, we note that
\begin{align*}
    &\sum_{a \in \edge} \nabla_{\edgeTolls_a} V(\edgeTolls) \left(\pertEdge_a(\edgeTolls) \nabla \lossEdge_a(\pertEdge_a(\edgeTolls)) - \edgeTolls_a \right) \\
    &= 2 \sum_{a \in \edge} \Delta_{a,a} (\edgeTolls_a - \optToll_a) \left(\pertEdge_a(\edgeTolls) \nabla \lossEdge_a(\pertEdge_a(\edgeTolls)) - \edgeTolls_a \right) \\
    &= 2 \sum_{a \in \edge} \Delta_{a,a} (\edgeTolls_a - \optToll_a) \left( \pertEdge_a(\edgeTolls) \nabla \lossEdge_a(\pertEdge_a(\edgeTolls)) - \optToll_a + \optToll_a - \edgeTolls_a \right) \\
    &= -2 V(\edgeTolls) + 2 \sum_{a \in \edge} \Delta_{a,a} (\edgeTolls_a - \optToll_a) \left( \phi_a(\edgeTolls) - \phi_a(\optToll) \right),
\end{align*}
where for every \(a \in \edge\), \(\phi_a(\edgeTolls) := \pertEdge_a(\edgeTolls) \nabla \lossEdge_a(\pertEdge_a(\edgeTolls))\). Thus, to show local convergence, it suffices to show that there exists \(r > 0\) such that
\begin{align*}
    \sum_{a \in \edge} \Delta_{a,a} (\edgeTolls_a - \optToll_a) \left( \phi_a(\edgeTolls) - \phi_a(\optToll) \right) \leq 0, \quad \forall \tilde{p} \in \mathcal{B}_r(\tilde{p}^\dagger).
\end{align*}

To show this, we note that due to condition \eqref{eq: degenerate}, the function \(\phi\) is differentiable in a neighborhood of \(\tilde{p}^\dagger\) (cf. \cite[Chapter 4]{yang2005mathematical}). Consequently, using Lemma \ref{lem: MonotonicityGradient} in Appendix \ref{sec: AppProofs}, it is sufficient to show that 
{{} \begin{align}\label{eq: PhiCond}
\sum_{a,a' \in \tilde{\mathcal{E}}} z_a \Delta_{a,a} \frac{\partial \phi_a(\tilde{p}^\dagger)}{\partial \tilde{p}_{a'}} z_{a'} \leq 0, \quad \forall \ z \in \mathbb{R}^{|\tilde{\mathcal{E}}|}.
\end{align}}
Indeed, by the design of \(\Delta\), it holds that 
\begin{align}\label{eq: Delta_w_Phi}
\Delta_{a,a} \frac{\partial \phi_a(\tilde{p}^\dagger)}{\partial \tilde{p}_{a'}} = \frac{\partial \tilde{w}^\ast_a(\tilde{p}^\dagger)}{\partial \tilde{p}_{a'}}, \quad \forall \ a, a' \in \tilde{\mathcal{E}}.
\end{align}
Furthermore, Lemma \ref{lem: MonotonicityEdgeFlow} and Lemma \ref{lem: MonotonicityGradient} in Appendix \ref{sec: AppProofs} guarantee that 
{{} \begin{align}\label{eq: w_Phi}
\sum_{a,a' \in \tilde{\mathcal{E}}} z_a \frac{\partial \tilde{w}^\ast_a(\tilde{p}^\dagger)}{\partial \tilde{p}_{a'}} z_{a'} \leq 0, \quad \forall \ z \in \mathbb{R}^{|\tilde{\mathcal{E}}|}.
\end{align}}
The proof concludes by noting that \eqref{eq: Delta_w_Phi} and \eqref{eq: w_Phi} imply \eqref{eq: PhiCond}.

\section{Auxiliary Results}\label{sec: AppProofs}
{{} \begin{lemma}\label{lem: SufficientConditionsPropMain}
Requirement (R1) of Proposition \ref{thm: ConvergenceFin} is satisfied if either one of the following conditions holds:
\begin{itemize}
    \item[(C1)] \(\frac{\partial \mdFin_i(\xEqFin{}(\incentiveFin))}{\partial \incentiveFin_j} > 0\) for all \(\incentiveFin \in \R^{\actDimFin}\) and all \(i \neq j\), and at least one of the following conditions holds:
    \begin{itemize}
        \item[\textit{(i)}] \(e_i(x^\ast(0)) \geq 0\) for every \(i \in \playerSet\), \(p^\dagger\in \mathbb{R}_+^{|\playerSet|},\) and for any \(p \in \R_{+}^{|\playerSet|}\), there exists \(p'\in \R_{+}^{|\playerSet|}\) such that \(p'_i>p_i\) and 
    \(
    e_i(x^\ast(p')) - p_i' \leq 0~ \text{for every } i \in \playerSet.
    \)
    Moreover, \(\xk{0} \in X, \pk{0} \in \mathbb{R}^{|\playerSet|}_+\).
    \item[\textit{(ii)}] \(e_i(x^\ast(0)) \leq 0\) for every \(i \in \playerSet\), \(p^\dagger\in \mathbb{R}_-^{|\playerSet|},\) and for any \(p \in \R_{-}^{|\playerSet|}\), there exists \(p'\in \R_{-}^{|\playerSet|}\) such that \(p'_i<p_i\) and 
    \(
    e_i(x^\ast(p')) - p_i' \geq 0~ \text{for every } i \in \playerSet.
    \)
    Moreover, \(\xk{0} \in X, \pk{0} \in \mathbb{R}^{|\playerSet|}_-\).
    \end{itemize}
    \item[(C2)] There exists a set \(\textsf{dom}(V) \subset \mathbb{R}^{|\playerSet|}\) and a continuously differentiable function \(V: \textsf{dom}(V) \rightarrow \mathbb{R}_+\) such that \(V(\pEqFin{}) = 0\) and \(V(p) > 0\) for all \(p \neq \pEqFin{}\). Moreover, for every \(p \neq p^\dagger,\) \(
        \nabla V(p)^\top \lr{e(\xEqFin{}(p)) - p} < 0.\)
\end{itemize}
\end{lemma}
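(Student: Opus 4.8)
The plan is to treat the two sufficient conditions separately, since they invoke different stability tools: (C2) is handled by Lyapunov's direct method, and (C1) by the theory of cooperative (monotone) dynamical systems. In both cases the target is the continuous-time system $\dot{p}(t)=g(p(t))$ with $g(p):=e(x^\ast(p))-p$, whose equilibria are exactly the solutions of $e(x^\ast(p))=p$, i.e.\ the set $P^\dagger$. By Proposition~\ref{prop: Alignment} this set is the singleton $\{p^\dagger\}$, so $p^\dagger$ is the \emph{unique} equilibrium of the flow; this uniqueness is what closes both arguments.

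Under (C2): by hypothesis $V$ is a continuously differentiable, positive-definite function on a neighborhood of $p^\dagger$ in $\textsf{dom}(V)$, and its derivative along the field, $\dot V=\nabla V(p)^\top g(p)$, is strictly negative for every $p\neq p^\dagger$. This is precisely the hypothesis of the local version of Lyapunov's asymptotic-stability theorem \cite{sastry2013nonlinear}: a compact sublevel set $\{V\leq c\}\subseteq\textsf{dom}(V)$ is a positively invariant neighborhood on which $V$ strictly decreases away from $p^\dagger$, which forces convergence to $p^\dagger$, and the nested sublevel sets shrinking to $\{p^\dagger\}$ give stability. Hence $p^\dagger$ is locally asymptotically stable and (R1) holds; this step is essentially a citation.

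Under (C1): since $g$ is $C^1$ near the relevant region with $\partial g_i/\partial p_j=\partial e_i(x^\ast(p))/\partial p_j>0$ for $i\neq j$, the system is cooperative, so by the comparison (Kamke) theorem for cooperative systems its flow $\phi_t$ is order-preserving, $p\leq q\Rightarrow\phi_t(p)\leq\phi_t(q)$. Assume (C1)(i). First, $\mathbb{R}^{|\playerSet|}_+$ is positively invariant: on the face $\{p_i=0,\,p\geq 0\}$, moving from $0$ along a path that only raises the coordinates $j\neq i$ keeps $p_i=0$ and, by off-diagonal positivity, is nondecreasing in $e_i(x^\ast(\cdot))$, so $g_i(p)=e_i(x^\ast(p))\geq e_i(x^\ast(0))\geq 0$, i.e.\ the field points inward. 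Next, $g(0)=e(x^\ast(0))\geq 0$ makes $0$ a subsolution, so $t\mapsto\phi_t(0)$ is nondecreasing; applying the second part of (C1)(i) to any prescribed point yields $b>p^\dagger$ componentwise with $g(b)\leq 0$, a supersolution, so $t\mapsto\phi_t(b)$ is nonincreasing. A symmetric inward-pointing check on the faces $\{p_i=b_i\}$ shows the order interval $[0,b]$ is compact and positively invariant; hence $\phi_t(0)\nearrow$ and $\phi_t(b)\searrow$ each converge to an equilibrium in $[0,b]$, which must be $p^\dagger$. For any $q\in[0,b]$ the sandwich $\phi_t(0)\leq\phi_t(q)\leq\phi_t(b)$ then forces $\phi_t(q)\to p^\dagger$. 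Since (C1)(i) provides such a supersolution above every point of $\mathbb{R}^{|\playerSet|}_+$, the intervals $[0,b]$ exhaust $\mathbb{R}^{|\playerSet|}_+$, so $p^\dagger$ attracts all of $\mathbb{R}^{|\playerSet|}_+$; combined with order-interval stability this gives asymptotic stability of $p^\dagger$ relative to the prescribed initial set, establishing (R1). Case (C1)(ii) is the mirror image, with $\mathbb{R}^{|\playerSet|}_+$ replaced by $\mathbb{R}^{|\playerSet|}_-$, $0$ now a supersolution, and the prescribed point a subsolution.

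The main obstacle I anticipate is the bookkeeping in the (C1) case: one must verify carefully that $\mathbb{R}^{|\playerSet|}_\pm$ and the order intervals are positively invariant (so the monotone sub/supersolution trajectories are well defined and bounded) and apply the cooperative-systems comparison theorem correctly. It is also here that the word ``local'' in (R1) must be read relative to the initial-condition restriction built into (C1), since $p^\dagger$ may lie on the boundary of $\mathbb{R}^{|\playerSet|}_+$. The (C2) case is routine.
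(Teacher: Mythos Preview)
Your proposal is correct and follows essentially the same approach as the paper: (C2) via Lyapunov's direct method and (C1) via cooperative (monotone) dynamical systems theory in the sense of Hirsch. The paper's own proof is in fact just a two-sentence citation sketch to \cite{hirsch1985systems} and \cite{sastry2013nonlinear}, so your write-up is substantially more detailed—your explicit sub/supersolution sandwich on order intervals, the inward-pointing check for positive invariance of the orthant, and the use of uniqueness of $p^\dagger$ from Proposition~\ref{prop: Alignment} all make precise what the paper leaves to the reader.
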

\begin{proof}
    Conditions (C1) and (C2) above are based on results from non-linear dynamical systems which ensure convergence of \eqref{subeq:p_fin11}. In particular, (C1)-(i) (resp. (C1)-(ii)) builds on cooperative dynamical systems theory \cite{hirsch1985systems}, which ensures that  \(\mathbb{R}_+^{|\playerSet|}\) (resp. \(\mathbb{R}_-^{|\playerSet|}\)) is positively invariant for \eqref{subeq:p_fin11}  and \(p^\dagger \in \mathbb{R}_+^{|\playerSet|}\) (resp. \(p^\dagger \in \mathbb{R}_-^{|\playerSet|}\)) is asymptotically stable. On the other hand, condition (C2) ensures the existence of a Lyapunov function that is strictly positive everywhere except at \(p^\dagger\) and decreases along any trajectory of \eqref{subeq:p_fin11} (cf. \cite{sastry2013nonlinear}).
\end{proof}
}

\begin{lemma}\label{lem: ExternalityExpression}
For every \(i\in \pop, j\in \routes_i,\)  
\(
    \exterPop_{i}^j(\flowRoute) = \sum_{a\in j} \flowEdge_a \nabla \lossEdge_a(\flowEdge_a).
\)
\end{lemma}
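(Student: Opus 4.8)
The plan is to compute the externality \(\exterPop_i^j(\flowRoute) = \Der_{\flowRoute_i^j}\socCostRouting(\flowRoute) - \lossRoute_i^j(\flowRoute)\) directly from the definitions and show the path-flow derivative of the social cost collapses to a sum over edges on the route \(j\). First I would recall that the social cost \eqref{eq: SocCostTraffic} can be rewritten in terms of edge flows: since \(\sum_{i\in\odPair}\sum_{j\in\routes_i}\flowRoute_i^j\lossRoute_i^j(\flowRoute) = \sum_{i\in\odPair}\sum_{j\in\routes_i}\flowRoute_i^j\sum_{a\in\edge}\lossEdge_a(\flowEdge_a)\mathbbm{1}(a\in j)\), swapping the order of summation and using \(\flowEdge_a = \sum_{i\in\odPair}\sum_{j\in\routes_i}\flowRoute_i^j\mathbbm{1}(a\in j)\) gives \(\socCostRouting(\flowRoute) = \sum_{a\in\edge}\flowEdge_a\,\lossEdge_a(\flowEdge_a) =: \sum_{a\in\edge} g_a(\flowEdge_a)\), where \(g_a(u) = u\,\lossEdge_a(u)\) so that \(g_a'(u) = \lossEdge_a(u) + u\nabla\lossEdge_a(u)\).

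Next I would apply the chain rule. Since \(\flowEdge_a\) depends on \(\flowRoute_i^j\) only through the incidence indicator, \(\Der_{\flowRoute_i^j}\flowEdge_a = \mathbbm{1}(a\in j)\), and hence
\[
\Der_{\flowRoute_i^j}\socCostRouting(\flowRoute) = \sum_{a\in\edge} g_a'(\flowEdge_a)\,\mathbbm{1}(a\in j) = \sum_{a\in j}\big(\lossEdge_a(\flowEdge_a) + \flowEdge_a\nabla\lossEdge_a(\flowEdge_a)\big).
\]
On the other hand, \(\lossRoute_i^j(\flowRoute) = \sum_{a\in j}\lossEdge_a(\flowEdge_a)\) by definition of the route cost. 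Subtracting, the \(\lossEdge_a(\flowEdge_a)\) terms cancel termwise and we are left with \(\exterPop_i^j(\flowRoute) = \sum_{a\in j}\flowEdge_a\nabla\lossEdge_a(\flowEdge_a)\), which is the claim. This also justifies equality \((a)\) in \eqref{eq: ExternalityRoutingRoute}, since the left-hand expression there, \(\sum_{i'\in\odPair}\sum_{j'\in\routes_{i'}}\flowRoute_{i'}^{j'}\,\Der_{\flowRoute_i^j}\lossRoute_{i'}^{j'}(\flowRoute)\), is exactly \(\Der_{\flowRoute_i^j}\socCostRouting(\flowRoute) - \lossRoute_i^j(\flowRoute)\) after using the symmetry \(\Der_{\flowRoute_i^j}(\flowRoute_{i'}^{j'}\lossRoute_{i'}^{j'}) = \lossRoute_i^j\mathbbm{1}(i'=i,j'=j) + \flowRoute_{i'}^{j'}\Der_{\flowRoute_i^j}\lossRoute_{i'}^{j'}\) summed over \(i',j'\).

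I do not expect any genuine obstacle here; the only point requiring a little care is the bookkeeping in the double-sum interchange and making sure the incidence structure (a route is a set of edges, so \(\mathbbm{1}(a\in j)^2 = \mathbbm{1}(a\in j)\) and there is no double-counting of an edge within a simple route) is handled cleanly. If routes were allowed to traverse an edge more than once one would carry a multiplicity factor, but under the standard convention that a route is a simple sequence of contiguous edges this does not arise, and the computation above is exact.
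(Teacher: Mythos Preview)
Your proof is correct and uses the same ingredients as the paper---the chain rule through the edge--route incidence \(\partial \flowEdge_a/\partial \flowRoute_i^j = \mathbbm{1}(a\in j)\) and the interchange of summation---but you order them differently: you first rewrite \(\socCostRouting\) in edge form \(\sum_{a}\flowEdge_a\lossEdge_a(\flowEdge_a)\) and then differentiate, whereas the paper starts from the middle expression \(\sum_{i',j'}\flowRoute_{i'}^{j'}\,\partial \lossRoute_{i'}^{j'}/\partial \flowRoute_i^j\) in \eqref{eq: ExternalityRoutingRoute}, expands each route-cost derivative edge-by-edge, and only then aggregates over \(i',j'\) back to \(\flowEdge_a\). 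Your aggregate-then-differentiate ordering is marginally cleaner (and in fact the edge-form of \(\socCostRouting\) is reproved separately in the paper's Lemma~\ref{lem: DerPotTrafficExpression}), while the paper's differentiate-then-aggregate ordering makes the equality labeled \((a)\) in \eqref{eq: ExternalityRoutingRoute} literal; substantively the two arguments are the same.
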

\begin{proof}
Using \eqref{eq: ExternalityRoutingRoute}, we note that 
\begin{align*}
    &\exterPop_{i}^j(\flowRoute) = \sum_{i'\in\odPair}\sum_{j'\in\routes_i}\flowRoute_{i'}^{j'}\frac{\partial \lossRoute_{i'}^{j'}(\flowRoute)}{\partial \flowRoute_{i}^j}\\
    &\stackrel{(a)}{=} \sum_{i'\in\odPair}\sum_{j'\in\routes_i}\flowRoute_{i'}^{j'}\sum_{a\in \edge} \mathbbm{1}(a\in j') \nabla \lossEdge_a(\flowEdge_a) \frac{\partial \flowEdge_a}{\partial \flowRoute_{i}^j}\\
    &\stackrel{(b)}{=} \sum_{i'\in\odPair}\sum_{j'\in\routes_i}\flowRoute_{i'}^{j'}\sum_{a\in \edge} \mathbbm{1}(a\in j') \nabla \lossEdge_a(\flowEdge_a) \mathbbm{1}(a\in j)\\
    &\stackrel{(c)}{=} \sum_{a\in j} \nabla \lossEdge_a(\flowEdge_a) \flowEdge_a,
\end{align*}
where \((a)\) follows by expanding out the expression of route costs in terms of edge costs and using the chain rule, \((b)\) follows by the definition of edge flows, and \((c)\) follows by changing the order of summations and using the definition of edge flows. This completes the proof.
\end{proof}

\begin{lemma}\label{lem: DerPotTrafficExpression}
\(x^\dagger\) that satisfies \eqref{eq: SocCostTrafficDer} if and only if \eqref{eq: SocCostFiniteVInon-atomicTraffic}. 
\end{lemma}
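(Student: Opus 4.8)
The plan is to exploit that the social cost $\Phi$ in \eqref{eq: SocCostTraffic}, although written as a function of the route flow $\tilde{x}$, factors through the induced edge flow $\tilde{w}$. Substituting $\tilde{\ell}_i^j(\tilde{x}) = \sum_{a\in\edge}\lossEdge_a(\tilde{w}_a)\mathbbm{1}(a\in j)$ into \eqref{eq: SocCostTraffic} and exchanging the order of summation over $(i,j)$ and over $a$ yields $\Phi(\tilde{x}) = \sum_{a\in\edge}\tilde{w}_a\lossEdge_a(\tilde{w}_a) =: \bar{\Phi}(\tilde{w})$, which is separable across edges. Moreover, the edge-flow map $\tilde{x}\mapsto\tilde{w}$, defined by $\tilde{w}_a = \sum_{i\in\odPair}\sum_{j\in\routes_i}\tilde{x}_i^j\mathbbm{1}(a\in j)$, is linear, and by the definition of $\tilde{W}$ it maps $\tilde{X}$ onto $\tilde{W}$.

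First I would record the chain-rule identity
\[
\frac{\partial \Phi(\tilde{x})}{\partial \tilde{x}_i^j} \;=\; \sum_{a\in j}\frac{\partial \bar{\Phi}(\tilde{w})}{\partial \tilde{w}_a} \;=\; \sum_{a\in j}\bigl(\lossEdge_a(\tilde{w}_a) + \tilde{w}_a\nabla\lossEdge_a(\tilde{w}_a)\bigr),
\]
which holds because $\partial\tilde{w}_a/\partial\tilde{x}_i^j = \mathbbm{1}(a\in j)$ and $\bar{\Phi}$ is separable; equivalently, differentiating $\Phi(\tilde{x})=\sum_{i,j}\tilde{x}_i^j\tilde{\ell}_i^j(\tilde{x})$ directly with the product rule, the cross terms reassemble into $\sum_{a\in j}\tilde{w}_a\nabla\lossEdge_a(\tilde{w}_a)$ (the same bookkeeping as in the proof of Lemma \ref{lem: ExternalityExpression}). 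Substituting this identity into the left-hand side of \eqref{eq: SocCostTrafficDer} and swapping the two summations gives, for $\tilde{w}$ the edge flow induced by $\tilde{x}$ and $\tilde{w}^\dagger$ the one induced by $\tilde{x}^\dagger$,
\[
\sum_{i\in\odPair}\sum_{j\in\routes_i}\frac{\partial \Phi(\tilde{x}^\dagger)}{\partial \tilde{x}_i^j}\,(\tilde{x}_i^j-\tilde{x}_i^{\dagger j}) \;=\; \sum_{a\in\edge}\frac{\partial}{\partial \tilde{w}_a}\bigl(\tilde{w}_a^\dagger\lossEdge_a(\tilde{w}_a^\dagger)\bigr)\,(\tilde{w}_a-\tilde{w}_a^\dagger),
\]
so the summand of \eqref{eq: SocCostTrafficDer} evaluated at any $\tilde{x}\in\tilde{X}$ coincides with the summand of \eqref{eq: SocCostFiniteVInon-atomicTraffic} evaluated at the corresponding $\tilde{w}\in\tilde{W}$.

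It then remains only to pass the quantifiers through the surjection $\tilde{X}\twoheadrightarrow\tilde{W}$. If \eqref{eq: SocCostTrafficDer} holds for every $\tilde{x}\in\tilde{X}$, then given any $\tilde{w}\in\tilde{W}$ I pick some $\tilde{x}\in\tilde{X}$ inducing it and read off \eqref{eq: SocCostFiniteVInon-atomicTraffic} at $\tilde{w}$; conversely, if \eqref{eq: SocCostFiniteVInon-atomicTraffic} holds for every $\tilde{w}\in\tilde{W}$, then for any $\tilde{x}\in\tilde{X}$ its induced edge flow lies in $\tilde{W}$ and the displayed identity delivers \eqref{eq: SocCostTrafficDer} at $\tilde{x}$. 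I do not expect a genuine obstacle: the only two points needing care are that the product-rule cross terms reassemble exactly into the edge-flow expression (as in Lemma \ref{lem: ExternalityExpression}), and that every feasible edge flow comes from a feasible route flow — which is precisely how $\tilde{W}$ is defined — so that ``$\forall\,\tilde{x}\in\tilde{X}$'' and ``$\forall\,\tilde{w}\in\tilde{W}$'' become interchangeable after the substitution.
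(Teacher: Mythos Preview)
Your proposal is correct and follows essentially the same approach as the paper: rewrite $\Phi(\tilde{x})$ as the edge-separable $\sum_{a}\tilde{w}_a\lossEdge_a(\tilde{w}_a)$, apply the chain rule via $\partial\tilde{w}_a/\partial\tilde{x}_i^j=\mathbbm{1}(a\in j)$, and interchange the $(i,j)$- and $a$-summations to obtain the displayed identity between the two variational sums. The only difference is that you make explicit the surjectivity of $\tilde{X}\to\tilde{W}$ needed to transfer the ``for all'' quantifiers in both directions, which the paper leaves implicit after establishing the equality.
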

\begin{proof}
    First, we show that \(
        \Phi(\tilde{x}) = \sum_{a\in \edge}\tilde{w}_a\lossEdge_a(\tilde{w}_a).\)
    \begin{align*}
        \Phi(\tilde{x}) &\stackrel{\eqref{eq: SocCostTraffic}}{=} \sum_{i\in\odPair}\sum_{j\in\routes_i}\flowRoute_{i}^j\lossRoute_{i}^j(\flowRoute)  =  \sum_{i\in\odPair}\sum_{j\in\routes_i}\flowRoute_{i}^{j}\sum_{a\in \edge} \mathbbm{1}(a\in j) \lossEdge_{a}(\flowEdge_a) 
    \\
    & = \sum_{a\in \edge} \lossEdge_a(\flowEdge_a)\sum_{i\in\odPair}\sum_{j\in\routes_i}\flowRoute_{i}^{j}\mathbbm{1}(a\in j) = \sum_{a\in \edge} \lossEdge_a(\flowEdge_a)\flowEdge_a.
    \end{align*}
    Next, observe that  
    \begin{align*}
        &\sum_{i\in \pop}\sum_{j\in \routes_i} \frac{\partial \Phi(\tilde{x}^\dagger)}{\partial \tilde{x}_i^j} (\tilde{x}_i^j-\tilde{x}_i^{\dagger j}) 
        \\
        &= \sum_{i\in \pop}\sum_{j\in \routes_i} \sum_{a\in\edge}\frac{\partial}{\partial \tilde{x}_i^j}\left(\flowEdge_a\lossEdge_a(\flowEdge_a) \right) (\tilde{x}_i^j-\tilde{x}_i^{\dagger j})\\
        &= \sum_{i\in \pop}\sum_{j\in \routes_i} \sum_{a\in\edge}\frac{\partial}{\partial \tilde{w}_a}\left(\flowEdge_a\lossEdge_a(\flowEdge_a) \right)\mathbbm{1}(a\in j) (\tilde{x}_i^j-\tilde{x}_i^{\dagger j})
        \\&= \sum_{a\in\edge}\frac{\partial}{\partial \tilde{w}_a}\left(\flowEdge_a\lossEdge_a(\flowEdge_a) \right) (\tilde{w}_a-\tilde{w}_a^{\dagger}). 
    \end{align*}
    This concludes the proof. 
\end{proof}

\begin{lemma}\label{lem: MonotonicityEdgeFlow} Following inequality holds: 
\begin{align}\label{eq: EdgeFlowMonotonicity}
    \sum_{a \in \edge} (\edgeTolls_a - \edgeTolls'_a) \left( \pertEdge_a(\edgeTolls_a) - \pertEdge_a(\edgeTolls'_a) \right) \leq 0, \quad \forall \ \edgeTolls, \edgeTolls' \in \R^{|\edge|}.
\end{align}
\end{lemma}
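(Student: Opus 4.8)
The plan is to use the variational characterization of the equilibrium edge flow supplied by \eqref{eq: StochasticPerturbedEq}. For any toll vector $\edgeTolls\in\R^{|\edge|}$, the equilibrium edge flow $\pertEdge(\edgeTolls)$ is the unique minimizer, over the convex polytope $\tilde{W}$ of feasible edge flows, of the strictly convex objective $\tilde{T}_{\edgeTolls}(\tilde{w}) = \sum_{a\in\edge}\int_{0}^{\tilde{w}_a}\lossEdge_a(\tau)\,d\tau + \sum_{a\in\edge}\edgeTolls_a\tilde{w}_a$, whose gradient has $a$-th component $\lossEdge_a(\tilde{w}_a)+\edgeTolls_a$. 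The key structural fact I would emphasize is that $\tilde{W}$ is fixed by the network topology and the demands and does not depend on $\edgeTolls$; only the objective does. Thus $\pertEdge(\edgeTolls)$ and $\pertEdge(\edgeTolls')$ are minimizers of two different objectives over the \emph{same} feasible set.

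First I would write the first-order optimality (variational inequality) conditions at $\edgeTolls$ and at $\edgeTolls'$: since $\pertEdge(\edgeTolls'),\pertEdge(\edgeTolls)\in\tilde{W}$,
\begin{align*}
\sum_{a\in\edge}\big(\lossEdge_a(\pertEdge_a(\edgeTolls))+\edgeTolls_a\big)\big(\pertEdge_a(\edgeTolls')-\pertEdge_a(\edgeTolls)\big) &\geq 0,\\
\sum_{a\in\edge}\big(\lossEdge_a(\pertEdge_a(\edgeTolls'))+\edgeTolls'_a\big)\big(\pertEdge_a(\edgeTolls)-\pertEdge_a(\edgeTolls')\big) &\geq 0.
\end{align*}
Adding these two inequalities and rearranging the latency and toll contributions gives
\begin{align*}
\sum_{a\in\edge}(\edgeTolls_a-\edgeTolls'_a)\big(\pertEdge_a(\edgeTolls)-\pertEdge_a(\edgeTolls')\big) \leq -\sum_{a\in\edge}\big(\lossEdge_a(\pertEdge_a(\edgeTolls))-\lossEdge_a(\pertEdge_a(\edgeTolls'))\big)\big(\pertEdge_a(\edgeTolls)-\pertEdge_a(\edgeTolls')\big).
\end{align*}

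The final step is to observe that the right-hand side is nonpositive: each edge latency function $\lossEdge_a$ is increasing, so every summand $\big(\lossEdge_a(\pertEdge_a(\edgeTolls))-\lossEdge_a(\pertEdge_a(\edgeTolls'))\big)\big(\pertEdge_a(\edgeTolls)-\pertEdge_a(\edgeTolls')\big)$ is nonnegative, hence its negation is nonpositive; summing over $a$ yields $\sum_{a\in\edge}(\edgeTolls_a-\edgeTolls'_a)(\pertEdge_a(\edgeTolls)-\pertEdge_a(\edgeTolls'))\leq 0$, which is exactly \eqref{eq: EdgeFlowMonotonicity}. There is no genuinely hard step; the only point requiring care is that the variational inequality must be posed in the edge-flow variables over $\tilde{W}$ rather than over the route-flow polytope $\tilde{X}$, which is legitimate precisely because \eqref{eq: StochasticPerturbedEq} is a convex program in the edge-flow variables with unique solution $\pertEdge(\edgeTolls)$. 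Note also that only monotonicity, not strict monotonicity, of $\lossEdge_a$ is used in this argument.
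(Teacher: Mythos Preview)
Your proof is correct and in fact more direct than the paper's. Both arguments rest on the same two ingredients---the variational inequality characterization of the equilibrium and the monotonicity of the edge latency functions---and both arrive at the same key inequality
\[
\sum_{a\in\edge}(\edgeTolls_a-\edgeTolls'_a)\big(\pertEdge_a(\edgeTolls)-\pertEdge_a(\edgeTolls')\big)
\leq -\sum_{a\in\edge}\big(\lossEdge_a(\pertEdge_a(\edgeTolls))-\lossEdge_a(\pertEdge_a(\edgeTolls'))\big)\big(\pertEdge_a(\edgeTolls)-\pertEdge_a(\edgeTolls')\big)\leq 0.
\]
The difference is in the decomposition: the paper first writes the variational inequality at the route level over the route-flow polytope $\tilde{X}$, derives an intermediate route-toll monotonicity result \eqref{eq: RouteFlowMonotonicity}, and only then aggregates to edges via the incidence relation. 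You instead exploit that the paper already states the potential minimization \eqref{eq: StochasticPerturbedEq} directly in the edge-flow variables over the convex set $\tilde{W}$, and write the first-order conditions there; this skips the route-level detour entirely. Your route is shorter and isolates exactly what is needed for the lemma; the paper's route produces the route-level monotonicity as an additional byproduct (which is not used elsewhere). The only point one might flag---that $\tilde{W}$ is convex so the VI is legitimate---is immediate since $\tilde{W}$ is the linear image of the polytope $\tilde{X}$, and you correctly note this.
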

\begin{proof}
To prove this result, we first show that 
\begin{align}\label{eq: RouteFlowMonotonicity}
\sum_{i\in\odPair}\sum_{j\in\routes_i} (\routeTolls_{i}^j-\routeTolls_{i}^{'j})(\tilde{x}_i^{\ast j}(\routeTolls)-\tilde{x}_i^{\ast j}(\routeTolls')) \leq 0,
    \end{align}
    where \(\routeTolls\) and \(\routeTolls'\) are the route tolls associated with edge tolls \(\edgeTolls\) and \(\edgeTolls'\), respectively, through \eqref{eq: RelEdgeRouteToll}. 
Let the feasible set of route flows in the optimization problem \eqref{eq: StochasticPerturbedEq} be denoted by \(\mathcal{F}\). Using the first-order conditions of optimality for the strictly convex optimization problem \eqref{eq: StochasticPerturbedEq}, we obtain:
\begin{equation}\label{eq: FOCPertEq}
\begin{aligned}
    \sum_{i\in\odPair}\sum_{j\in\routes_i}&\left( \loss_{i}^j(\pertEq(\routeTolls),\routeTolls) \right)\cdot \left(\tilde{y}_{i}^j-\tilde{x}_i^{\ast j}(\routeTolls)\right) \geq 0, \forall \quad \tilde{y}\in \mathcal{F},
\end{aligned}
\end{equation}
where \(\routeTolls\) is the route toll associated with edge toll \(\edgeTolls\). 
Rewriting \eqref{eq: FOCPertEq} for edge tolls \(\edgeTolls'\) we obtain 
\begin{equation}\label{eq: FOCPertEqSim}
\begin{aligned}
    \sum_{i\in\odPair}\sum_{j\in\routes_i}&\left( \loss_{i}^j(\pertEq(\routeTolls'),\routeTolls') \right)\cdot \left(\tilde{y}_{i}^{'j}-\tilde{x}_i^{\ast j}(\routeTolls')\right) \geq 0, \forall \quad \tilde{y}'\in \mathcal{F},
\end{aligned}
\end{equation}
where \(\routeTolls'\) is the route toll associated with edge toll \(\edgeTolls'\).

Next, we prove \eqref{eq: RouteFlowMonotonicity}. Note that 
\begin{align*}
    & \sum_{i\in\odPair}\sum_{j\in\routes_i}(\routeTolls_i^j-\routeTolls_{i}^{'j})(\tilde{x}_i^{\ast j}(\routeTolls)-\tilde{x}_i^{\ast j}(\routeTolls'))\notag \\
    &\stackrel{(a)}{\leq} \sum_{i\in\odPair}\sum_{j\in\routes_i}(\lossRoute_{i}^j(\pertEq(\routeTolls'))-\lossRoute_{i}^j(\pertEq(\routeTolls)))(\tilde{x}_i^{\ast j}(\routeTolls)-\tilde{x}_i^{\ast j}(\routeTolls'))  \notag
    \\
    &\stackrel{(b)}{=} \sum_{i\in\odPair}\sum_{j\in\routes_i}(\tilde{x}_i^{\ast j}(\routeTolls)-\tilde{x}_i^{\ast j}(\routeTolls'))\cdot \notag\\ 
    &\hspace{2cm}\cdot\sum_{a\in \edge} (\lossEdge_{a}(\pertEdge_a(\edgeTolls'))-\lossEdge_{a}(\pertEdge_a(\edgeTolls)))\mathbbm{1}(a\in j) \notag \\ 
    &\stackrel{(c)}{=} \sum_{a\in \edge}(\lossEdge_{a}(\pertEdge_a(\edgeTolls'))-\lossEdge_{a}(\pertEdge_a(\edgeTolls)))\cdot\notag \\ &\hspace{2cm}\cdot
    \sum_{i\in\odPair}\sum_{j\in\routes_i}(\tilde{x}_i^{\ast j}(\routeTolls)-\tilde{x}_i^{\ast j}(\routeTolls'))\mathbbm{1}(a\in j)\notag \\
    &\stackrel{(d)}{=} \sum_{a\in \edge}(\lossEdge_{a}(\pertEdge_a(\edgeTolls'))-\lossEdge_{a}(\pertEdge_a(\edgeTolls)))(\pertEdge_{a}(\edgeTolls)-\pertEdge_{a}(\edgeTolls')) \stackrel{(e)}{\leq } 0,
\end{align*}
where we obtain \((a)\) by adding \eqref{eq: FOCPertEq}, evaluated at \(\tilde{y} = \pertEq(\routeTolls')\), and \eqref{eq: FOCPertEqSim}, evaluated at \(\tilde{y}' = \pertEq(\routeTolls)\), \((b)\) holds by the definition of the route loss function, \((c)\) holds by interchange of summation, \((d)\) holds by the definition of edge flows, and \((e)\) holds due to the monotonicity of edge latency functions. This proves \eqref{eq: RouteFlowMonotonicity}.

Finally, we prove \eqref{eq: EdgeFlowMonotonicity}. Note that 
\begin{align*}
    &\sum_{a\in\edge} (\edgeTolls_a-\edgeTolls_{a'})(\pertEdge_a(\edgeTolls)-\pertEdge_a(\edgeTolls')) \\ 
    &\stackrel{(a)}{=}\sum_{a\in\edge} (\edgeTolls_a-\edgeTolls_{a'})\sum_{i\in\odPair}\sum_{j\in\routes_i}(\tilde{x}_i^{\ast j}(\routeTolls)-\tilde{x}_i^{\ast j}(\routeTolls'))\mathbbm{1}(a\in j)\\
    &\stackrel{(b)}{=} \sum_{i\in\odPair}\sum_{j\in\routes_i}(\tilde{x}_i^{\ast j}(\routeTolls)-\tilde{x}_i^{\ast j}(\routeTolls'))\sum_{a\in\edge} (\edgeTolls_a-\edgeTolls_{a'})\mathbbm{1}(a\in j) \\ 
    &\stackrel{(c)}{=} \sum_{i\in\odPair}\sum_{j\in\routes_i}(\tilde{x}_i^{\ast j}(\routeTolls)-\tilde{x}_i^{\ast j}(\routeTolls'))(\routeTolls_i^j-\routeTolls_i^{'j})\stackrel{(d)}{\leq} 0,
\end{align*}
where \((a)\) holds due to the definition of edge flows, \((b)\) holds due to interchange of summation, \((c)\) holds due to the definition of route tolls, and \((d)\) holds due to \eqref{eq: RouteFlowMonotonicity}. This concludes the proof.
\end{proof}

\begin{lemma}[\cite{facchinei2007finite}]\label{lem: MonotonicityGradient}
For any fixed \(p'\) and continuously differentiable function \(\phi:\R^{\edge} \rightarrow \R^{\edge}\), the condition
\[
\langle \phi(p) - \phi(p'), p - p' \rangle \leq 0 \quad \forall \ p \in \mathcal{B}_r(p')
\]
for some \(r > 0\), holds if and only if
{{} \[
\sum_{i,j \in |\tilde{\mathcal{E}}|} z_i z_j \frac{\partial \phi_i(p')}{\partial p_j} \leq 0, \quad \forall \ z \in \R^{|\edge|}.
\]} 
\end{lemma}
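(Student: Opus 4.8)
The plan is to prove the two implications separately. It is convenient to think of the object of interest as $-\phi$: the right-hand side then says that the Jacobian matrix $J\phi(p') = \bigl(\partial\phi_i(p')/\partial p_j\bigr)_{i,j}$ is negative semidefinite, and the left-hand side says that $-\phi$ is ``monotone around $p'$'' on the ball $\mathcal{B}_r(p')$, which is convex. The forward implication (secant inequality $\Rightarrow$ Jacobian condition) will be a direct first-order Taylor argument, and the reverse implication (Jacobian condition $\Rightarrow$ secant inequality) will be obtained from the fundamental theorem of calculus along segments, i.e. the standard equivalence between monotonicity of a $C^1$ map and positive semidefiniteness of its Jacobian, as in \cite{facchinei2007finite}.

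\emph{Forward direction.} Fix an arbitrary $z \in \mathbb{R}^{|\tilde{\mathcal{E}}|}$. For $t \in \bigl(0, r/\|z\|\bigr)$ the point $p' + tz$ lies in $\mathcal{B}_r(p')$, so the hypothesis gives $\langle \phi(p'+tz) - \phi(p'),\, tz\rangle \le 0$, hence $\langle \phi(p'+tz)-\phi(p'),\, z\rangle \le 0$ after dividing by $t>0$. Continuous differentiability of $\phi$ yields the expansion $\phi(p'+tz) - \phi(p') = t\, J\phi(p')\,z + o(t)$ as $t\downarrow 0$, so $t\, z^\top J\phi(p')z + o(t) \le 0$; dividing by $t$ and letting $t \downarrow 0$ gives $z^\top J\phi(p')z = \sum_{i,j} z_i z_j\, \partial\phi_i(p')/\partial p_j \le 0$. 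Since $z$ was arbitrary, this is exactly the right-hand side.

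\emph{Reverse direction, and the main obstacle.} For $p$ in a convex neighborhood of $p'$ the fundamental theorem of calculus gives
\[
\langle \phi(p) - \phi(p'),\, p - p'\rangle = \int_0^1 (p-p')^\top J\phi\bigl(p' + t(p-p')\bigr)(p-p')\, dt,
\]
so the secant inequality follows once the quadratic form $q \mapsto (p-p')^\top J\phi(q)(p-p')$ is $\le 0$ along the whole segment. This is precisely the delicate point: negative semidefiniteness of $J\phi$ at the single point $p'$ only controls the secant form up to an $o(\|p-p'\|^2)$ error, so one genuinely needs the Jacobian's quadratic form to be $\le 0$ throughout a small ball around $p'$. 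In the routing application this is available for free, because under the regularity condition \eqref{eq: degenerate} the equilibrium edge-flow map $\tilde{w}^\ast$ is locally affine near $\tilde{p}^\dagger$, so the relevant quadratic form $z^\top \Delta\, J\phi(\cdot)\, z = z^\top J\tilde{w}^\ast(\cdot)\, z$ is constant in a neighborhood of $\tilde{p}^\dagger$; thus the pointwise condition at $\tilde{p}^\dagger$ already certifies the integrand, and continuity of $J\phi$ finishes the argument. I would therefore present the reverse direction as an invocation of the monotonicity/Jacobian-semidefiniteness characterization of \cite{facchinei2007finite} applied to $-\phi$ restricted to a convex neighborhood of $p'$, flagging this neighborhood-versus-point gap as the one step that requires care.
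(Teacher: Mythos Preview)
The paper does not prove this lemma; it is simply stated with a citation to \cite{facchinei2007finite} and no argument is supplied. So there is no paper proof to compare against, and your write-up would in fact add content.

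Your forward direction is the standard Taylor argument and is correct. More importantly, you are right to flag the reverse direction: as stated, the lemma claims that negative semidefiniteness of $J\phi$ at the \emph{single point} $p'$ yields the secant inequality on some ball, and this is false in general (e.g.\ $\phi(p)=p^3$ on $\mathbb{R}$ at $p'=0$: $\phi'(0)=0$ yet $\langle\phi(p)-\phi(0),p\rangle=p^4>0$). The result actually in \cite{facchinei2007finite} is the neighborhood version---monotonicity of $-\phi$ on an open convex set is equivalent to $J\phi(q)\preceq 0$ for \emph{all} $q$ in that set---so the lemma is slightly misstated in the paper, and you correctly located the gap rather than missing an idea.

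One correction to your proposed patch: condition \eqref{eq: degenerate} guarantees local \emph{differentiability} of $\tilde w^\ast$ near $\tilde p^\dagger$, not local \emph{affinity}, so the relevant Jacobian is not constant in general. A safer repair for the routing application is to observe that Lemma~\ref{lem: MonotonicityEdgeFlow} already gives a \emph{global} secant inequality for $\tilde w^\ast$, so the forward direction applied at each nearby point yields $J\tilde w^\ast(q)\preceq 0$ throughout a neighborhood; the residual issue is whether the \emph{fixed} diagonal matrix $\Delta$ (evaluated at $\tilde p^\dagger$) still relates $J\phi(q)$ to $J\tilde w^\ast(q)$ for $q\neq\tilde p^\dagger$ well enough to control the sign of the integrand. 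Your instinct to flag this as the one delicate step is correct.
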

\bibliography{refs}

\begin{thebibliography}{}

\bibitem[Acemoglu and Jensen, 2013]{acemoglu2013aggregate}
Acemoglu, D. and Jensen, M.~K. (2013).
\newblock Aggregate comparative statics.
\newblock {\em Games and Economic Behavior}, 81:27--49.

\bibitem[Acemoglu et~al., 2015]{acemoglu2015networks}
Acemoglu, D., Ozdaglar, A., and Tahbaz-Salehi, A. (2015).
\newblock Networks, shocks, and systemic risk.
\newblock Technical report, National Bureau of Economic Research.

\bibitem[Alpcan and Pavel, 2009]{alpcan2009nash}
Alpcan, T. and Pavel, L. (2009).
\newblock Nash equilibrium design and optimization.
\newblock In {\em 2009 international conference on game theory for networks}, pages 164--170. IEEE.

\bibitem[Alpcan et~al., 2009]{alpcan2009control}
Alpcan, T., Pavel, L., and Stefanovic, N. (2009).
\newblock A control theoretic approach to noncooperative game design.
\newblock In {\em Proceedings of the 48h IEEE Conference on Decision and Control (CDC) held jointly with 2009 28th Chinese Control Conference}, pages 8575--8580. IEEE.

\bibitem[Bai et~al., 2021]{bai2021sample}
Bai, Y., Jin, C., Wang, H., and Xiong, C. (2021).
\newblock Sample-efficient learning of stackelberg equilibria in general-sum games.
\newblock {\em Advances in Neural Information Processing Systems}, 34:25799--25811.

\bibitem[Barrera and Garcia, 2014]{barrera2014dynamic}
Barrera, J. and Garcia, A. (2014).
\newblock Dynamic incentives for congestion control.
\newblock {\em IEEE Transactions on Automatic Control}, 60(2):299--310.

\bibitem[Ba{\c{s}}ar, 1984]{bacsar1984affine}
Ba{\c{s}}ar, T. (1984).
\newblock Affine incentive schemes for stochastic systems with dynamic information.
\newblock {\em SIAM Journal on Control and Optimization}, 22(2):199--210.

\bibitem[Beckmann et~al., 1956]{beckmann1956studies}
Beckmann, M., McGuire, C.~B., and Winsten, C.~B. (1956).
\newblock Studies in the economics of transportation.
\newblock Technical report.

\bibitem[Bena{\"\i}m et~al., 2005]{benaim2005stochastic}
Bena{\"\i}m, M., Hofbauer, J., and Sorin, S. (2005).
\newblock Stochastic approximations and differential inclusions.
\newblock {\em SIAM Journal on Control and Optimization}, 44(1):328--348.

\bibitem[Blum et~al., 2014]{blum2014learning}
Blum, A., Haghtalab, N., and Procaccia, A.~D. (2014).
\newblock Learning optimal commitment to overcome insecurity.
\newblock {\em Advances in Neural Information Processing Systems}, 27.

\bibitem[Borkar, 1997]{borkar1997stochastic}
Borkar, V.~S. (1997).
\newblock Stochastic approximation with two time scales.
\newblock {\em Systems \& Control Letters}, 29(5):291--294.

\bibitem[Borkar, 2009]{borkar2009stochastic}
Borkar, V.~S. (2009).
\newblock {\em Stochastic approximation: a dynamical systems viewpoint}, volume~48.
\newblock Springer.

\bibitem[Borkar and Pattathil, 2018]{borkar2018concentration}
Borkar, V.~S. and Pattathil, S. (2018).
\newblock Concentration bounds for two time scale stochastic approximation.
\newblock In {\em 2018 56th Annual Allerton Conference on Communication, Control, and Computing (Allerton)}, pages 504--511. IEEE.

\bibitem[Bramoull{\'e} et~al., 2016]{bramoulle2016oxford}
Bramoull{\'e}, Y., Galeotti, A., and Rogers, B.~W. (2016).
\newblock {\em The Oxford handbook of the economics of networks}.
\newblock Oxford University Press.

\bibitem[Bramoull{\'e} and Kranton, 2007]{bramoulle2007public}
Bramoull{\'e}, Y. and Kranton, R. (2007).
\newblock Public goods in networks.
\newblock {\em Journal of Economic theory}, 135(1):478--494.

\bibitem[Chandak et~al., 2024]{chandak2024learning}
Chandak, S., Bistritz, I., and Bambos, N. (2024).
\newblock Learning to control unknown strongly monotone games.
\newblock {\em arXiv preprint arXiv:2407.00575}.

\bibitem[Chen et~al., 2023]{chen2023high}
Chen, Y., Ochoa, D.~E., Marden, J.~R., and Poveda, J.~I. (2023).
\newblock High-order decentralized pricing dynamics for congestion games: Harnessing coordination to achieve acceleration.
\newblock In {\em 2023 American Control Conference (ACC)}, pages 1086--1091. IEEE.

\bibitem[Clarke, 1990]{clarke1990optimization}
Clarke, F.~H. (1990).
\newblock {\em Optimization and nonsmooth analysis}.
\newblock SIAM.

\bibitem[Como and Maggistro, 2021]{como2021distributed}
Como, G. and Maggistro, R. (2021).
\newblock Distributed dynamic pricing of multiscale transportation networks.
\newblock {\em IEEE Transactions on Automatic Control}.

\bibitem[Cui et~al., 2024]{cui2024learning}
Cui, Q., Fazel, M., and Du, S.~S. (2024).
\newblock Learning optimal tax design in nonatomic congestion games.
\newblock {\em arXiv preprint arXiv:2402.07437}.

\bibitem[Dafermos, 1988]{dafermos1988sensitivity}
Dafermos, S. (1988).
\newblock Sensitivity analysis in variational inequalities.
\newblock {\em Mathematics of Operations Research}, 13(3):421--434.

\bibitem[Facchinei and Pang, 2007]{facchinei2007finite}
Facchinei, F. and Pang, J.-S. (2007).
\newblock {\em Finite-dimensional variational inequalities and complementarity problems}.
\newblock Springer Science \& Business Media.

\bibitem[Fudenberg and Levine, 1998]{fudenberg1998theory}
Fudenberg, D. and Levine, D. (1998).
\newblock {\em The theory of learning in games}, volume~2.
\newblock MIT press.

\bibitem[Hirsch, 1985]{hirsch1985systems}
Hirsch, M.~W. (1985).
\newblock Systems of differential equations that are competitive or cooperative ii: Convergence almost everywhere.
\newblock {\em SIAM Journal on Mathematical Analysis}, 16(3):423--439.

\bibitem[Hofbauer and Sorin, 2006]{hofbauer2006best}
Hofbauer, J. and Sorin, S. (2006).
\newblock Best response dynamics for continuous zero-sum games.
\newblock {\em Discrete and Continuous Dynamical Systems Series B}, 6(1):215.

\bibitem[Jalota et~al., 2022]{jalota2022online}
Jalota, D., Gopalakrishnan, K., Azizan, N., Johari, R., and Pavone, M. (2022).
\newblock Online learning for traffic routing under unknown preferences.
\newblock {\em arXiv preprint arXiv:2203.17150}.

\bibitem[Lakshminarayanan and Bhatnagar, 2017]{lakshminarayanan2017stability}
Lakshminarayanan, C. and Bhatnagar, S. (2017).
\newblock A stability criterion for two timescale stochastic approximation schemes.
\newblock {\em Automatica}, 79:108--114.

\bibitem[Laraki and Mertikopoulos, 2013]{laraki2013higher}
Laraki, R. and Mertikopoulos, P. (2013).
\newblock Higher order game dynamics.
\newblock {\em Journal of Economic Theory}, 148(6):2666--2695.

\bibitem[Leslie and Collins, 2006]{leslie2006generalised}
Leslie, D.~S. and Collins, E.~J. (2006).
\newblock Generalised weakened fictitious play.
\newblock {\em Games and Economic Behavior}, 56(2):285--298.

\bibitem[Letchford et~al., 2009]{letchford2009learning}
Letchford, J., Conitzer, V., and Munagala, K. (2009).
\newblock Learning and approximating the optimal strategy to commit to.
\newblock In {\em Algorithmic Game Theory: Second International Symposium, SAGT 2009, Paphos, Cyprus, October 18-20, 2009. Proceedings 2}, pages 250--262. Springer.

\bibitem[Levin, 1985]{levin1985taxation}
Levin, D. (1985).
\newblock Taxation within cournot oligopoly.
\newblock {\em Journal of Public Economics}, 27(3):281--290.

\bibitem[Li et~al., 2024]{li2024socially}
Li, J., Motoki, M., and Zhang, B. (2024).
\newblock Socially optimal energy usage via adaptive pricing.
\newblock {\em Electric Power Systems Research}, 235:110640.

\bibitem[Li et~al., 2023]{li2023inducing}
Li, J., Niu, Y., Li, S., Li, Y., Xu, J., and Wu, J. (2023).
\newblock Inducing desired equilibrium in taxi repositioning problem with adaptive incentive design.
\newblock In {\em 2023 62nd IEEE Conference on Decision and Control (CDC)}, pages 8075--8080. IEEE.

\bibitem[Littlestone and Warmuth, 1994]{littlestone1994weighted}
Littlestone, N. and Warmuth, M.~K. (1994).
\newblock The weighted majority algorithm.
\newblock {\em Information and computation}, 108(2):212--261.

\bibitem[Liu et~al., 2022]{liu2021inducing}
Liu, B., Li, J., Yang, Z., Wai, H.-T., Hong, M., Nie, Y., and Wang, Z. (2022).
\newblock Inducing equilibria via incentives: Simultaneous design-and-play ensures global convergence.
\newblock {\em Advances in Neural Information Processing Systems}, 35:29001--29013.

\bibitem[Maheshwari et~al., 2021]{maheshwari2021dynamic}
Maheshwari, C., Kulkarni, K., Wu, M., and Sastry, S. (2021).
\newblock Dynamic tolling for inducing socially optimal traffic loads.
\newblock {\em arXiv preprint arXiv:2110.08879}.

\bibitem[Maheshwari et~al., 2023]{maheshwari2023follower}
Maheshwari, C., Sasty, S.~S., Ratliff, L., and Mazumdar, E. (2023).
\newblock Follower agnostic methods for stackelberg games.
\newblock {\em arXiv preprint arXiv:2302.01421}.

\bibitem[Mazumdar et~al., 2020]{mazumdar2020gradient}
Mazumdar, E., Ratliff, L.~J., and Sastry, S.~S. (2020).
\newblock On gradient-based learning in continuous games.
\newblock {\em SIAM Journal on Mathematics of Data Science}, 2(1):103--131.

\bibitem[Mertikopoulos et~al., 2024]{mertikopoulos2024unified}
Mertikopoulos, P., Hsieh, Y.-P., and Cevher, V. (2024).
\newblock A unified stochastic approximation framework for learning in games.
\newblock {\em Mathematical Programming}, 203(1):559--609.

\bibitem[Mertikopoulos and Zhou, 2019]{mertikopoulos2019learning}
Mertikopoulos, P. and Zhou, Z. (2019).
\newblock Learning in games with continuous action sets and unknown payoff functions.
\newblock {\em Mathematical Programming}, 173(1):465--507.

\bibitem[Mojica-Nava et~al., 2022]{mojica2022stackelberg}
Mojica-Nava, E., Poveda, J.~I., and Quijano, N. (2022).
\newblock Stackelberg population learning dynamics.
\newblock In {\em 2022 IEEE 61st Conference on Decision and Control (CDC)}, pages 6395--6400. IEEE.

\bibitem[Ochoa and Poveda, 2022]{ochoa2022high}
Ochoa, D.~E. and Poveda, J.~I. (2022).
\newblock High-performance optimal incentive-seeking in transactive control for traffic congestion.
\newblock {\em European Journal of Control}, 68:100696.

\bibitem[Peng et~al., 2019]{peng2019learning}
Peng, B., Shen, W., Tang, P., and Zuo, S. (2019).
\newblock Learning optimal strategies to commit to.
\newblock In {\em Proceedings of the AAAI Conference on Artificial Intelligence}, volume~33, pages 2149--2156.

\bibitem[Poveda et~al., 2017]{poveda2017class}
Poveda, J.~I., Brown, P.~N., Marden, J.~R., and Teel, A.~R. (2017).
\newblock A class of distributed adaptive pricing mechanisms for societal systems with limited information.
\newblock In {\em 2017 IEEE 56th Annual Conference on Decision and Control (CDC)}, pages 1490--1495. IEEE.

\bibitem[Ratliff and Fiez, 2020]{ratliff2020adaptive}
Ratliff, L.~J. and Fiez, T. (2020).
\newblock Adaptive incentive design.
\newblock {\em IEEE Transactions on Automatic Control}, 66(8):3871--3878.

\bibitem[Roughgarden, 2010]{roughgarden2010algorithmic}
Roughgarden, T. (2010).
\newblock Algorithmic game theory.
\newblock {\em Communications of the ACM}, 53(7):78--86.

\bibitem[Sandholm, 2010]{sandholm2010population}
Sandholm, W.~H. (2010).
\newblock {\em Population games and evolutionary dynamics}.
\newblock MIT press.

\bibitem[Sastry, 2013]{sastry2013nonlinear}
Sastry, S. (2013).
\newblock {\em Nonlinear systems: analysis, stability, and control}, volume~10.
\newblock Springer Science \& Business Media.

\bibitem[Scheidegger, 1964]{scheidegger1964stability}
Scheidegger, A. (1964).
\newblock Stability of motion, by nn krasovskii. translated from the (1959) russian ed. by jl brenner. stanford university press, 1963.
\newblock {\em Canadian Mathematical Bulletin}, 7(1):151--151.

\bibitem[Shakarami et~al., 2022]{shakarami2022steering}
Shakarami, M., Cherukuri, A., and Monshizadeh, N. (2022).
\newblock Steering the aggregative behavior of noncooperative agents: a nudge framework.
\newblock {\em Automatica}, 136:110003.

\bibitem[Shakarami et~al., 2023]{shakarami2022dynamic}
Shakarami, M., Cherukuri, A., and Monshizadeh, N. (2023).
\newblock Dynamic interventions with limited knowledge in network games.
\newblock {\em IEEE Transactions on Control of Network Systems}.

\bibitem[Smart, 1980]{smart1980fixed}
Smart, D.~R. (1980).
\newblock {\em Fixed point theorems}, volume~66.
\newblock Cup Archive.

\bibitem[Swenson et~al., 2018]{swenson2018best}
Swenson, B., Murray, R., and Kar, S. (2018).
\newblock On best-response dynamics in potential games.
\newblock {\em SIAM Journal on Control and Optimization}, 56(4):2734--2767.

\bibitem[Yang and Huang, 2005]{yang2005mathematical}
Yang, H. and Huang, H.-J. (2005).
\newblock {\em Mathematical and Economic Theory of Road Pricing}.
\newblock Emerald Group Publishing Limited.

\bibitem[Zhang et~al., 2023]{zhang2023steering}
Zhang, B.~H., Farina, G., Anagnostides, I., Cacciamani, F., McAleer, S.~M., Haupt, A.~A., Celli, A., Gatti, N., Conitzer, V., and Sandholm, T. (2023).
\newblock Steering no-regret learners to optimal equilibria.
\newblock {\em arXiv preprint arXiv:2306.05221}.

\end{thebibliography}
\bibliographystyle{apalike}

\appendix

\end{document}